\def\one{\mathbbm{1}}
\newcommand{\qen}{\hfill $\vartriangleleft$}
\newcommand{\midwedge}{\text{\Large$\wedge$}}
\def\ii{{\,{\rm i}\,}}
\def\dd{{\rm d}}
\def\DD{{\rm D}}
\def\LL{{\rm L}}
\def\im{{\sf im}}
\def\mfg{{\mathfrak g}}
\def\mfX{{\mathfrak X}}
\def\Lie{{\mathcal L}}
\newcommand{\CCA}{\mathscr{A}}
\newcommand{\CCE}{\mathscr{E}}
\newcommand{\CJ}{\mathcal{J}}
\newcommand{\Thetam}{{\mit\Theta}}
\newcommand{\Pim}{{\mit\Pi}}
\newcommand{\Upm}{{\mit\Upsilon}}
\newcommand{\Lambdam}{{\mit\Lambda}}
\newcommand{\eq}{\begin{equation}}
\newcommand{\eqend}{\end{equation}}
\newcommand{\eqa}{\begin{eqnarray}}
\newcommand{\nonueqa}{\begin{eqnarray*}}
\newcommand{\eqaend}{\end{eqnarray}}
\newcommand{\nonueqaend}{\end{eqnarray*}}
\newcommand{\bma}[1]{\begin{array}{#1}}
\newcommand{\ema}{\end{array}}
\newcommand{\bc}{\begin{center}}
\newcommand{\ec}{\end{center}}
\newcommand{\newsection}{\setcounter{equation}{0}\section}
\newcommand{\complex}{{\mathbb C}} 
\newcommand{\real}{{\mathbb R}} 
\newif\ifold             \oldtrue
\def\e{{\,\rm e}\,}
\def\be{\begin{equation}}
\def\ee{\end{equation}}
\def\bea{\begin{eqnarray}}
\def\eea{\end{eqnarray}}
\def\bd{\begin{displaymath}}
\def\ed{\end{displaymath}}
\newcommand{\beq}{\begin{eqnarray}}
\newcommand{\eeq}{\end{eqnarray}}
\newdimen\normalarrayskip              
\newdimen\minarrayskip                 
\newif\ifold             \oldtrue            
\def\arraymode{\ifold\relax\else\displaystyle\fi} 
\def\@arrayskip{\ifold\baselineskip\z@\lineskip\z@
     \else
     \baselineskip\minarrayskip\lineskip2\minarrayskip\fi}
\def\@arrayclassz{\ifcase \@lastchclass \@acolampacol \or
\@ampacol \or \or \or \@addamp \or
   \@acolampacol \or \@firstampfalse \@acol \fi
\edef\@preamble{\@preamble
  \ifcase \@chnum
     \hfil$\relax\arraymode\@sharp$\hfil
     \or $\relax\arraymode\@sharp$\hfil
     \or \hfil$\relax\arraymode\@sharp$\fi}}
\def\@array[#1]#2{\setbox\@arstrutbox=\hbox{\vrule
     height\arraystretch \ht\strutbox
     depth\arraystretch \dp\strutbox
     width\z@}\@mkpream{#2}\edef\@preamble{\halign \noexpand\@halignto
\bgroup \tabskip\z@ \@arstrut \@preamble \tabskip\z@ \cr}%
\let\@startpbox\@@startpbox \let\@endpbox\@@endpbox
  \if #1t\vtop \else \if#1b\vbox \else \vcenter \fi\fi
  \bgroup \let\par\relax
  \let\@sharp##\let\protect\relax
  \@arrayskip\@preamble}
\def\FF{{\cal F}}
\def\be{\beta}
\newtheorem{lemma}[equation]{Lemma}
\newtheorem{proposition}[equation]{Proposition}
\newtheorem{corollary}[equation]{Corollary}
\theoremstyle{definition}
\newtheorem{definition}[equation]{Definition}
\newtheorem{example}[equation]{Example}
\newtheorem{remark}[equation]{Remark}
\def\ddo{\end{document}}
\newcommand{\nc}{\newcommand}
\nc{\lb}{\llbracket}
\nc{\rb}{\rrbracket}
\nc{\gl}{\llbracket}
\nc{\gr}{\rrbracket}
\begin{document}

\begin{titlepage}

\begin{flushright}
\small\sf 
EMPG--21--01
\end{flushright}
\normalsize

\begin{center}

\vspace{1cm}

\baselineskip=24pt

{\Large{\bf Symplectic embeddings, homotopy algebras \\ and almost Poisson gauge symmetry}}

\baselineskip=14pt

\vspace{1cm}

{\bf Vladislav G. Kupriyanov}${}^{1}$ \ and \ {\bf Richard
  J. Szabo}${}^{2}$
\\[5mm]
\noindent  ${}^1$ {\it Centro de Matem\'atica, Computa\c{c}\~{a}o e
Cogni\c{c}\~{a}o}\\{\it Universidade de Federal do ABC}\\
{\it Santo Andr\'e, SP, 
Brazil}\\ and {\it 
Tomsk State University, Tomsk, Russia}\\
Email: \ {\tt
    vladislav.kupriyanov@gmail.com}
\\[3mm]
\noindent  ${}^2$ {\it Department of Mathematics, Heriot-Watt University\\ Colin Maclaurin Building,
  Riccarton, Edinburgh EH14 4AS, U.K.}\\ and {\it Maxwell Institute for
Mathematical Sciences, Edinburgh, U.K.} \\ and {\it Higgs Centre
for Theoretical Physics, Edinburgh, U.K.}\\
Email: \ {\tt R.J.Szabo@hw.ac.uk}
\\[30mm]

\end{center}

\begin{abstract}
\baselineskip=12pt
\noindent
We formulate general definitions of semi-classical gauge transformations for noncommutative gauge theories in general backgrounds of string theory, and give novel explicit constructions using techniques based on symplectic embeddings of almost Poisson structures. In the absence of fluxes the gauge symmetries close a Poisson gauge algebra and their action is governed by a $P_\infty$-algebra which we construct explicitly from the symplectic embedding. In curved backgrounds they close a field dependent gauge algebra governed by an $L_\infty$-algebra which is not a $P_\infty$-algebra. Our technique produces new all orders constructions which are significantly simpler compared to previous approaches, and we illustrate its applicability in several examples of interest in noncommutative field theory and gravity. We further show that our symplectic embeddings naturally define a $P_\infty$-structure on the exterior algebra of differential forms on a generic almost Poisson manifold, which generalizes earlier constructions of differential graded Poisson algebras, and suggests a new approach to defining noncommutative gauge theories beyond the gauge sector and the semi-classical limit based on $A_\infty$-algebras.
\end{abstract}

\end{titlepage}
\setcounter{page}{2}

\newpage

{
\tableofcontents
}

\newpage


\newsection{Introduction}

The construction of noncommutative gauge theories on manifolds with non-trivial tensor fields is an important problem for understanding the low-energy physics of D-branes in general backgrounds of string theory. The problem is of course not new, and has been studied for around 20 years now. But despite its relatively long history of investigation, the construction is still not completely understood in full generality. In this paper we propose a new approach to this old problem, where the main mathematical tool employed is known as a `symplectic embedding'. 

In this section we start by providing some background and motivation from string theory, and then proceed to an informal description of our goals and main results, putting them into context with earlier physics literature on the subject. More precise statements and proofs will be given in subsequent sections, with a detailed technical analysis of the issues discussed here.

\paragraph{Symplectic embeddings.}

Consider a D-brane wrapping a submanifold $M$ in a flat background. Then the string equations imply the vanishing of the three-from $H$-flux for the $B$-field: $H=\dd B=0$; when the closed two-form $B$ is non-degenerate its inverse defines a Poisson bivector $\theta$ on $M$. Quantizing an open string with its ends on the D-brane shows that, in a suitable low-energy scaling limit, its worldvolume algebra of functions undergoes a deformation in the direction of $\theta$.
The problem of constructing a noncommutative gauge theory on $M$ then starts with a deformation quantization of the Poisson manifold $(M,\theta)$. Conversely, the semi-classical limit of an associative noncommutative algebra of functions on a manifold $M$ which is a flat deformation defines a Poisson bracket
\begin{align*}
\{x^i,x^j\}_\theta = \theta^{ij} \ ,
\end{align*}
on local coordinate functions $x^i$; in other words, the Poisson bracket is the first order semi-classical approximation to the star-product in deformation quantization. It is this semi-classical limit that we shall mostly work with in this paper, where all constructions are entirely geometric and are regarded as the classical infinitesimal data whose quantization yields the required ingredients of a noncommutative gauge theory. 

In the case of a flat D-brane, the Poisson bivector $\theta$ is constant, and the worldvolume deformation is provided by the usual Moyal-Weyl star-product~\cite{Schomerus:1999ug}, and the construction of noncommutative gauge theories is standard and well-known~\cite{Seiberg:1999vs}. For a curved D-brane in flat space, $\theta$ is not constant, and the worldvolume deformation is provided by the Kontsevich star-product~\cite{Cornalba:2001sm}. The first problem one then encounters is how to define derivatives in the field theory: the usual differential is no longer a derivation of the Poisson algebra, and this obstructs the naive definition of gauge transformations to closing a Lie algebra. The problem can be formulated and solved by embedding the Poisson bracket in a `noncommutative phase space'
\begin{align}\label{eq:NCphasespace}
\{x^i,x^j\} &= \theta^{ij} \ , \notag \\[4pt]
\{p_i,x^j\} &= \delta_i^j - \tfrac12\,\partial_i\theta^{jk}\,p_k + \cdots \ , \notag \\[4pt]
\{p_i,p_j\} &= 0 \ .
\end{align}
The auxiliary `momentum' coordinates $p_i$ are regarded as `derivatives' when acting on functions $f$ on $M$, since when $\theta$ is constant these brackets give $\{p_i,f\}=\partial_if$; the ellipsis denotes higher order monomials in the momenta $p_k$ which accompany higher orders in the bivector $\theta$ and its derivatives. In general, the extra derivative terms in these brackets ensure that they fulfill the Jacobi identity order by order in $p_i$ when $\theta$ is non-constant. Now the action of the `twisted' derivatives $\{p_i,\,\cdot\,\}$ obeys the Leibniz rule, as a consequence of the Jacobi identity for the bracket. Noncommutative gauge theories have been constructed in this manner in e.g.~\cite{Behr:2003qc}; see e.g.~\cite{Szabo:2006wx} for a review of this and other approaches, along with further references.

The Poisson brackets \eqref{eq:NCphasespace} also offer an alternative approach to quantization of the Poisson manifold $(M,\theta)$. We can map the coordinates $(x,p)$ to canonically conjugate variables $(X,P)$ by means of a generalized Bopp shift; the existence of such a diffeomorphism is guaranteed by Darboux's theorem, at least locally or in the case where $M$ is covered by a single Darboux chart. The canonical coordinates $(X,P)$ can be quantized geometrically in the usual way via a Schr\"odinger polarization, and mapped back to provide a polydifferential representation of the quantum version of the brackets \eqref{eq:NCphasespace} on the space of functions on $M$. By formally expanding functions of the polydifferential operators in Taylor series, this also constructs a star-product and provides a deformation quantization of the algebra of functions on $M$.

This construction of a noncommutative phase space is called a \emph{symplectic embedding} of the Poisson structure $\theta$; in Section~\ref{sec:Examples} we will see some explicit all orders examples which are given by closed analytic expressions. The brackets can be derived as the symplectic structure arising from open string quantization on the D-brane in a low-energy scaling limit~\cite{Chu:1999wz}. In particular, it enables one to define semi-classical gauge transformations that consistently close a gauge algebra defined by the Poisson brackets: we use `twisted covariant derivatives' of gauge parameters defined by the $p_i$ and suitably restrict them by imposing constraints on the phase space that eliminate the auxiliary momentum coordinates. This is explained in detail in Section~\ref{sec:Poissongauge}.

Mathematically, symplectic embeddings are a more general notion of `symplectic realizations' in Poisson geometry~\cite{Weinstein-local,Karasev}. We discuss the precise relation between the two notions in Section~\ref{sec:symplemb}. 
The global structure which integrates a Poisson manifold is called a symplectic groupoid~\cite{Weinstein-groupoid}, whose source maps always define symplectic realizations. They were introduced as part of the program of quantizing Poisson manifolds. Symplectic groupoids capture the semi-classical limit of deformation quantization, as inferred by our physical arguments above, and as shown precisely in~\cite{Cattaneo:2000iw} where the symplectic groupoid is constructed as the classical phase space of the open string sigma-model. 

\paragraph{Quantized differential forms.}

The problem of finding suitable derivative operators in a noncommutative field theory on a Poisson manifold $(M,\theta)$ can also be formulated dually as the problem of constructing a suitable noncommutative differential calculus on $M$. In the semi-classical limit, this amounts to finding an extension of the Poisson bracket to the exterior algebra of differential forms. The problem of constructing a differential graded Poisson algebra in this way has been addressed from several points of view, see e.g.~\cite{Chu:1997ik,Hawkins:2002rf,Beggs:2003ne,McCurdy:2009xz,Voronov,Lyakhovich}, while from the perspective of deformation quantization it is discussed in e.g.~\cite{Bieliavsky2006,Gutt2008,Dolgushev}. These constructions all depend on the choice of an auxiliary connection on $M$ and typically require the Poisson bivector $\theta$ to be invertible. 

Noncommutative gauge theories are treated using this formalism in~\cite{Ho:2001fi}. Their definition of gauge transformations is similar in structure to ours in Section~\ref{sec:Poissongauge}, but differs in important details. A crucial distinction is that our formulation is based on symplectic embeddings, which always exist locally without any further data and applies to arbitrary Poisson structures, while the formulation of~\cite{Ho:2001fi} requires the extra data of a symplectic connection on $M$.

\paragraph{Nonassociative deformation quantization.}

The problem is more involved in the case of a curved background, wherein the $H$-flux is non-zero: $H=\dd B\neq0$. In this case the Jacobi identity for the semi-classical bracket is violated, and the bivector $\theta$ defines an $H$-twisted Poisson structure~\cite{Park,Klimcik}. The worldvolume deformations of D-branes in this setting are still given by the Kontsevich expansion~\cite{Cornalba:2001sm}, which now leads to a nonassociative star-product. In this case, the symplectic embedding is described by a noncommutative (but associative) phase space of the form
\begin{align*}
\{x^i,x^j\} &= \theta^{ij} - \Pim^{ijk}\, p_k + \cdots \ , \\[4pt]
\{p_i,x^j\} &= \delta_i^j - \tfrac12\,\partial_i\theta^{jk}\,p_k + \cdots \ , \\[4pt]
\{p_i,p_j\} &= 0 \ ,
\end{align*}
where
\begin{align*}
\Pim^{ijk} = \tfrac13\,\big(\theta^{il}\partial_l\theta^{jk}+{\rm cyclic}\big)= \theta^{il}\,\theta^{jm}\,\theta^{kn}\,H_{lmn}
\end{align*}
is the Jacobiator $\{\{x^i,x^j\}_\theta,x^k\}_\theta+{\rm cyclic}$, the trivector encoding the violation of the Jacobi identity for the original twisted Poisson brackets $\{x^i,x^j\}_\theta=\theta^{ij}$; its inclusion is now necessary to ensure that these extended brackets fulfill the Jacobi identity. These brackets agree precisely with the symplectic structure found in~\cite{Ho:2000fv,Ho:2001qk} from quantizing an open string with its ends on a D-brane in a constant $H$-flux background. 

Regarding again the momenta as `derivatives', now the associative phase space structure is that of an algebra of pseudo-differential operators, whereby even the bracket of two functions on $M$ is generally a differential operator. This was explained by~\cite{KS18}, who show that the symplectic embedding in this case captures the semi-classical limit of the associative composition algebra of differential operators on $M$ corresponding to a twisted Poisson structure~\cite{MSS1}. In D-brane physics this poses a serious problem in the definition of gauge theories, as the gauge transformation of a gauge field will generically produce not another gauge field but a pseudo-differential operator;  a resolution was proposed by~\cite{Ho:2001qk} which involves promoting gauge parameters to functions of both coordinates $x^i$ and momenta $p_i$. In the present paper we propose a different way of resolving this issue, which is similar in spirit, but involves a suitable constraint on the phase space as in the case of a Poisson structure as well as field dependent gauge transformations. The precise definitions and constructions are explained in Section~\ref{sec:quasiP}.

This extended symplectic embedding formalism was initiated in~\cite{KS18} to describe the (associative) classical and quantum mechanics of an electric charge in smooth distributions of magnetic monopoles, whose phase space is described by a twisted Poisson structure. The main idea behind our construction of the ``nonassociative'' gauge algebra consists in starting with a given almost Poisson structure, and aiming to get the violation of the Jacobi identity under control. For this, we introduce auxiliary variables $p_i$ and construct its symplectic embedding. In the larger space the Jacobi identity is now satisfied, but one is then faced with the problem of interpreting the auxiliary (unphysical) degrees of freedom, which cannot be removed in the nonassociative case~\cite{KS18}. The most non-trivial point of our construction is getting rid of the auxiliary variables by introducing constraints in a consistent way, that is, in such a way that, in the ``nonassociative'' gauge algebra, the commutator of two gauge transformations is again a gauge transformation, but with a field dependent gauge parameter. 

The global objects corresponding to symplectic embeddings of twisted Poisson or more generally almost Poisson structures are not presently understood. From a string theory perspective, they can be described as follows. A fundamental closed string in a constant $H$-flux background can be lifted to M-theory as an open membrane ending on an M5-brane in a three-form $C$-field background with constant flux. This gives rise to a noncommutative algebra of functions on the loop space of the M5-brane worldvolume~\cite{Bergshoeff:2000jn}, whose deformation quantization has as semi-classical limit a symplectic groupoid on the loop space obtained through transgression of the three-form $H$~\cite{Saemann:2012ex,Saemann:2012ab}. In this sense symplectic embeddings play a similar role in the semi-classical limit of deformation quantization of almost Poisson structures. We give a more precise account in Section~\ref{sec:defquant}.

\paragraph{Strong homotopy algebras.}

Recent advances in nonassociative deformation quantization have produced some explicit constructions of star-products which quantize twisted Poisson and even almost Poisson structures. A star-product which quantizes the phase space of a closed string in a constant $R$-flux background, or equivalently an electric charge moving in a uniform magnetic monopole density, was constructed in~\cite{Mylonas2012}; the phase space defines a twisted Poisson structure and the star-product is, in a sense, a nonassociative analog of the Moyal-Weyl star-product. In~\cite{Kupriyanov:2016hsm} a star-product was proposed which quantizes the linear almost Poisson structure of the imaginary octonion algebra, which is an example of an almost Poisson structure that is not twisted Poisson; in~\cite{KS17} this was applied to the quantization of the phase space of a membrane in a non-geometric $R$-flux background of M-theory, or equivalently an M-wave in a non-geometric Kaluza-Klein monopole density, where it was shown that the contraction of the octonionic star-product reproduces the monopole star-product. See~\cite{Szabo:2019hhg} for a review of these developments and further references.

These examples have nice physical features. In particular, nonassociativity vanishes on-shell, as it should since string theory is based on a two-dimensional quantum field theory that involves strictly associative structures. Precisely, the integrated associator vanishes for these star-products, as the star-associator of three functions is a `total derivative'; this is also true for the nonassociative star-products of~\cite{Cornalba:2001sm}, provided one chooses a suitable density to integrate against (see e.g.~\cite{Szabo:2006wx}). However, this property fails in general for arbitrary almost Poisson structures, and the question arises as to what structure should be used to control the violation of associativity, or the Jacobi identity in the semi-classical limit, in a consistent way. A good candidate is Stasheff's $A_\infty$-algebras~\cite{Stasheff1963}, and also $L_\infty$-algebras~\cite{Lada:1992wc} where the violation of the Jacobi identity is proportional to a higher coherent homotopy or a `total derivative' of some higher bracket, together with their extension to $P_\infty$-algebras~\cite{Voronov2003}. This suggests that strong homotopy algebras might provide the right setting for the quantization of generic twisted Poisson and almost Poisson structures; their role in deformation quantization of twisted Poisson structures was already indicated in~\cite{Cornalba:2001sm,Mylonas2012}. In particular, the semi-classical limit of an $A_\infty$-algebra defines a $P_\infty$-algebra, and it is natural to look for $P_\infty$-algebras as the structure controlling the violation of the Jacobi identity. The deformation quantization of $P_\infty$-structures via the Kontsevich formality theorem was first considered in~\cite{Lyakhovich:2004xd}, and a bit later in~\cite{Cattaneo:2005zz}.

As we discuss in Section~\ref{sec:Pinfty}, this expectation is indeed correct from the following perspective. We will show that our symplectic embeddings, which always exist locally, naturally endow the exterior algebra of differential forms on an arbitrary almost Poisson manifold $(M,\theta)$ with the structure of a $P_\infty$-algebra which contains the almost Poisson bracket. The $P_\infty$-structure controls not only the potential violation of the Jacobi identity for the almost Poisson bracket, but also the violation of the Leibniz rule for the usual differential, in a way which is compatible with the derivation properties of the original bracket with respect to the classical pointwise multiplication of functions, extended to the exterior product of differential forms. We subsequently sketch an alternative new approach to the quantization of differential forms which does not require the choice of an auxiliary connection on $M$ and quantizes this $P_\infty$-algebra to an $A_\infty$-algebra. The details are beyond the scope of the present paper and are left for future investigation, but they illustrate another novel application of our symplectic embedding formalism, as well as its intimate relation with homotopy algebras.

The use of $L_\infty$-algebras as an alternative new construction of noncommutative gauge theories on arbitrary almost Poisson manifolds $(M,\theta)$ was pioneered by~\cite{BBKL} and called the `$L_\infty$-bootstrap'. In Section~\ref{sec:Linfinity} we provide the dictionary between our symplectic embedding formalism and the formulation in terms of $L_\infty$-algebras for noncommutative gauge transformations, finding agreement with the lower degree brackets that were constructed explicitly by~\cite{BBKL,Kupriyanov:2019ezf}. However, our approach is much better adapted to obtaining explicit closed all orders expressions for all brackets of the gauge $L_\infty$-algebra. We illustrate this explicitly on several concrete non-trivial examples of relevance to physics in Section~\ref{sec:Examples}, wherein we obtain the complete sets of $L_\infty$-brackets to all orders. In particular, for the twisted Poisson structure on the phase space of an electric charge in a constant magnetic monopole density, we obtain, for the first time, the complete gauge $L_\infty$-structure in closed form, see Section~\ref{sec:magneticPoisson}; this should aid in understanding the symmetries of a nonassociative theory of gravity underlying the low-energy sector of closed non-geometric strings (see e.g.~\cite{Szabo:2019hhg}). Technically, our approach based on symplectic embeddings is much simpler than the $L_\infty$-bootstrap approach of~\cite{BBKL}, particularly for explicit calculations.

In this language we also find a significant distinction between Poisson and almost Poisson gauge symmetries. For a Poisson manifold $(M,\theta)$ the gauge $L_\infty$-algebra is itself a $P_\infty$-algebra which is obtained by a truncation of the $P_\infty$-structure underlying the semi-classical limit of quantization of differential forms. This suggests a path towards the construction of noncommutative gauge transformations beyond the semi-classical limit, in terms of $A_\infty$-algebras. However, this is not the case for the homotopy algebras which govern the closure of ``nonassociative" gauge transformations: our construction of gauge transformations defines an $L_\infty$-algebra which is \emph{not} a $P_\infty$-algebra, unless the Jacobi identity is satisfied, that is, only ``associative" gauge algebras are controlled by $P_\infty$-algebras. The reasons for this are explained in detail in Section~\ref{sec:Pinfty}: the essential feature is that for an almost Poisson structure the $P_\infty$-structure on differential forms cannot be truncated to a subalgebra containing the gauge $L_\infty$-algebra.

An alternative approach to the construction of noncommutative gauge theories using homotopy algebras has been put forth recently by~\cite{Ciric:2020eab}, whereby the underlying $L_\infty$-structure is also quantized. In contrast to our approach, the gauge theory $L_\infty$-algebra involves only finitely many brackets as in the classical case, and compatibility with the Leibniz rule is manifest from the outset. It would be interesting to understand better how this more algebraic approach is related to the geometric approach of the present paper.

\paragraph{Structure of the paper.}

In this paper we work solely with the kinematical sector of a complete noncommutative gauge theory, and only with the first order semi-classical approximations to proper noncommutative gauge transformations. This will elucidate, in a model independent way, the classical geometric structures underlying the full gauge algebras which quantize them. In the absence of any dynamics, we will require that our gauge transformations close to a Lie algebra. One of the main technical achievements of this paper is the ability to do this in the nonassociative case: the naive passage from Poisson to general almost Poisson gauge transformations do not immediately close, even when one modifies the requirement of closure to include field dependent gauge transformations. We show explicitly how closure can be accomplished using symplectic embeddings by the addition of terms to the gauge variations which compensate the undesired contributions to the closure formulas; we demonstrate that these extra terms are explicitly calculable. When the dynamical sector of a particular gauge theory is included, it may be possible to avoid adding these additional terms in the gauge variations and work instead with a weaker notion of closure, whereby the almost Poisson gauge algebra also involves the field equations, that is, gauge transformations only close on-shell, see Remark~\ref{rem:AinftyBBKL}. Barring these model dependent extensions, it is remarkable that we are able to describe the closure of ``nonassociative'' gauge transformations in terms of strictly associative Lie algebras, albeit field dependent ones.

The majority of this paper can be read by taking $M$ to be a coordinate manifold $\real^d$ (or an open subset thereof), but in several places we have written tensor quantities in a more covariant form that we hope makes some of our calculations amenable to global generalizations in future work. Our main purpose here is to unravel the geometric and algebraic structures that govern the first steps to constructing noncommutative gauge theories on D-branes in general string backgrounds. This turns out to be a technically formidable task even with our restriction to local coordinate charts. Therefore, throughout we shall carefully formulate our geometric framework and then carry out all calculations via tensor calculus in local coordinates, leaving the corresponding description of the covariant tensor calculus within our framework for future study.

The outline of the remainder of this paper is as follows. In Section~\ref{sec:symplemb} we give a precise description of our symplectic embeddings which we motivated from physical arguments above, and compare it to the existing notions of symplectic realizations and symplectic groupoids from the mathematical literature on Poisson geometry. In Section~\ref{sec:Poissongauge} we give a precise general definition of what we mean by semi-classical gauge transformations on a Poisson manifold, and provide an explicit construction using symplectic embeddings of the Poisson structure. In Section~\ref{sec:quasiP} we extend these general definitions and constructions to the general case of almost Poisson manifolds. The closure condition on the semi-classical gauge algebra now requires an extension of the notion of gauge transformations using field dependent gauge parameters, which we formulate precisely in terms of $1$-jets of the cotangent bundle of the underlying manifold $M$, and an extension of their construction by symplectic embeddings using horizontal vector fields on the jet space. In Section~\ref{sec:Linfinity} we show how our symplectic embedding construction of gauge algebras can be cast into the standard framework of $L_\infty$-algebras~\cite{Fulp:2002kk,Hohm:2017pnh,Jurco:2018sby}, contrasting our approach with the approach based on the $L_\infty$-bootstrap. We further show how our symplectic embeddings define $P_\infty$-structures on differential forms and discuss their relation with the $L_\infty$-structures on our gauge algebras. Finally, in Section~\ref{sec:Examples} we work out several explicit examples as illustrations of our formalism for both Poisson and almost Poisson gauge transformations. Two appendices at the end of the paper collect some technical details which are used in the main text.

\paragraph{Acknowledgements.}

We are grateful to Alex Schenkel, Jim Stasheff, Dima Vassilevich and Alan Weinstein for helpful
discussions and correspondence. V.G.K. acknowledges the support from the S\~ao Paulo Research Foundation (FAPESP), grant 2021/09313-8. The work of {\sc R.J.S.} was supported in part by
the Consolidated Grant ST/P000363/1 ``Particle Theory at the Higgs Centre''
from the UK Science and Technology Facilities Council (STFC).

\newsection{Symplectic embeddings of almost Poisson structures}
\label{sec:symplemb}

In this section we develop our notion of symplectic embeddings, which encompasses earlier constructions from the physics literature in a rigorous and precise way. We provide a detailed comparison between our symplectic embeddings and the more common concept of symplectic realizations in Poisson geometry, and discuss their relation to deformation quantization. In Section~\ref{sec:Examples} we will give detailed applications to some non-trivial explicit examples.

\subsection{Formal deformations of cotangent bundles}

We begin with some definitions that will permeate this paper. Smooth
functions and tensors on a manifold $M$, as well as all vector spaces,
are always considered with the ground field $\real$ or $\complex$. 

\begin{definition}
\label{def:quasiPoisson}
Let $M$ be a manifold.
\begin{itemize}
\item[(a)]
An {\emph{almost Poisson structure}} on $M$ is a bivector field
$\theta\in\mathfrak{X}^2(M)$. The bivector field is equivalently
encoded by the skew-symmetric {\emph{almost Poisson bracket}} $\{\,\cdot\,,\,\cdot\,\}_\theta$ on
$C^\infty(M)$ given by 
$$
\{f,g\}_\theta=\theta(\dd f,\dd g) \ .
$$
The Schouten-Nijenjuis bracket of the bivector $\theta$ with itself,
the \emph{Jacobiator}, is the trivector field
$\Pim=[\theta,\theta]\in\mathfrak{X}^3(M)$, which satisfies
\begin{align*}
\tfrac12\,\Pim(\dd f,\dd g,\dd h)={\sf
  Cyc}_{f,g,h}\,\{f,\{g,h\}_\theta\}_\theta \ ,
\end{align*}
where ${\sf Cyc}_{f,g,h}$ indicates
the cyclic sum over the functions $f,g,h\in
C^\infty(M)$. The Jacobi identity $[[\theta,\theta],\theta]=0$ for the
Schouten bracket implies that the Jacobiator obeys the
integrability condition
\begin{align}\label{eq:Piint}
[\Pim,\theta] = 0 \ .
\end{align}

\item[(b)]
A bivector $\theta$ on $M$ gives rise to a linear map
\bea\label{eq:Thetasharp}
\theta^\sharp: \Omega^1(M)\longrightarrow\mfX(M) \ , \quad
\alpha\longmapsto \theta(\alpha,\,\cdot\,) \ .
\eea
If 
$$
\Pim=\tfrac12\,\midwedge^3\,\theta^\sharp H
$$ 
for a closed three-form $H\in\Omega^3(M)$,
then $\theta$ is an \emph{$H$-twisted Poisson structure} on~$M$ and
$\{\,\cdot\,,\,\cdot\,\}_\theta$ is the corresponding
\emph{$H$-twisted Poisson
bracket}.

\item[(c)]
The bracket $\{\,\cdot\,,\,\cdot\,\}_\theta$ obeys the Jacobi identity
if and only if $\Pim=0$. In this case $\theta$ is a \emph{Poisson structure}
and $\{\,\cdot\,,\,\cdot\,\}_\theta$ is the corresponding \emph{Poisson
bracket}.
\end{itemize}
\end{definition}

\begin{remark}
By definition, any almost Poisson structure
defines a derivation $\{f,\,\cdot\,\}_\theta$ of the commutative
algebra of functions $C^\infty(M)$ for any fixed $f\in C^\infty(M)$.
When $\theta$ is a Poisson bivector, this property makes
$(C^\infty(M),\{\,\cdot\,,\,\cdot\,\}_\theta)$ into a \emph{Poisson
  algebra}.
\qen\end{remark}

\begin{remark}
If an almost Poisson structure $\theta$ is non-degenerate, then it is
automatically a twisted Poisson structure: in this case the map
\eqref{eq:Thetasharp} is invertible, and the two-form
$\theta^{-1}\in\Omega^2(M)$, defined by
$\theta^{-1}(X,Y)=\theta(\theta^\sharp{}^{-1}X,\theta^\sharp{}^{-1}Y)$
for $X,Y\in\mfX(M)$, is an almost symplectic structure on $M$
and $H=\dd\theta^{-1}$ is the twisting three-form.
In particular, if $\theta$ is a non-degenerate Poisson bivector then $\theta^{-1}$
is a symplectic structure on $M$. 
\qen
\end{remark}

We shall often work on an open subset $M=U\subset\real^d$ with local
coordinates $(x^i)$. Then the bivector $\theta$ has the local coordinate expression
$\theta=\frac12\,\theta^{ij}(x)\,\partial_i\wedge\partial_j$, where
$\partial_i=\partial/\partial x^i$, and the almost Poisson
bracket can be expressed on $U$ as
$$
\{f,g\}_\theta = \theta^{ij}\,\partial_if\,\partial_jg \ .
$$
Throughout we use the Einstein convention for implicit summation
over repeated upper and lower indices. Similarly, the trivector
$\Pim=[\theta,\theta]$ has the local coordinate expression
$\Pim=\frac1{3!}\,\Pim^{ijk}(x)\,\partial_i\wedge\partial_j\wedge\partial_k$,
where
$$
\Pim^{ijk} = \tfrac13\,\big(
\theta^{il}\, \partial_l\theta^{jk}+\theta^{kl}\, \partial_l\theta^{ij}+\theta^{jl}\, \partial_l\theta^{ki} 
\big) \ .
$$

In this paper it will be convenient to rescale the bivector $\theta$
by a formal parameter $t$, and regard it as a 
deformation of the zero bivector $\theta_0=0$. If $V$ is a real or complex vector space, we
denote by $V[[t]]$ the space of formal power series in $t$ with
coefficients in $V$; it can be regarded as a module over
$\real[[t]]$ or $\complex[[t]]$. In this paper we will work in the
setting of formal Poisson
structures and formal symplectic structures, see e.g.~\cite{BOW20}.

\begin{definition}
\label{def:symplembed}
Let $\theta$ be an almost Poisson structure on a manifold $M$. Write
$\pi:T^*M\to M$ for the cotangent bundle of $M$. A {\emph{symplectic
    embedding}} of $(M,\theta)$ is a formal symplectic structure
$\omega\in\Omega_{\rm pol}^2(T^*M)[[t]]$ such
that
\begin{itemize}
  \item[(a)] $\omega$ is a deformation of the canonical symplectic
    structure $\omega_0$ on $T^*M$, that is, $\omega|_{t=0}=\omega_0$; 
  \item[(b)] The corresponding Poisson bracket
$\{\,\cdot\,,\,\cdot\,\}_{\omega^{-1}}$ on the ring $C_{\rm
  pol}^\infty(T^*M)[[t]]$ restricts to the zero section $M\subset T^*M$ as
\bea\label{eq:zetapicond}
\{\pi^*f,\pi^*g\}_{\omega^{-1}}\big|_M = t\,\{f,g\}_{\theta}
\eea
for all functions $f,g\in C^\infty(M)$; and
\item[(c)] The zero section is a Lagrangian section of
  $(T^*M,\omega)$. 
\end{itemize}
\end{definition}

Here and in the following we use the subscript ${}_{\rm pol}$ to
denote spaces of smooth tensor fields on the cotangent bundle $T^*M$ which are polynomial on the fibres. 
In general, the formal symplectic structure $\omega$
is a formal power series of closed two-forms on $T^*M$ (which are
polynomial on the fibres) given as
$\omega=\omega_0+\sum_{n=1}^\infty\, t^n\,\omega_n$. The inverse is a formal
Poisson structure $\omega^{-1}\in\mfX_{\rm pol}^2(T^*M)[[t]]$ which is a
formal power series of bivectors on $T^*M$ given as
$$
\omega^{-1} = \Thetam_0 + \sum_{n=1}^\infty \, t^n\,\Thetam_n \ ,
$$
where $\Thetam_0=\omega_0^{-1}$ is the Poisson bivector field on $T^*M$
corresponding to $\omega_0$. The Poisson integrability condition
$[\omega^{-1},\omega^{-1}]=0$ implies
\bea\label{eq:Omegamint}
\sum_{k=0}^n\, [\Thetam_k,\Thetam_{n-k}] = 0
\eea
for all $n\geq0$, and the condition \eqref{eq:zetapicond} implies 
$\omega^{-1}(\pi^*\alpha,\pi^*\beta)\big|_M =
t\,\theta(\alpha,\beta)$ which leads to 
$$
\Thetam_1(\pi^*\alpha,\pi^*\beta)\big|_M=\theta(\alpha,\beta)
\qquad \mbox{and} \qquad \Thetam_n(\pi^*\alpha,\pi^*\beta)\big|_M=0 \ ,
$$
for all $\alpha,\beta\in\Omega^1(M)$ and $n\geq2$. The Lagrangian
section condition further imposes $\omega(X,Y)=0$ for all $X,Y\in\mfX(M)$. In this sense a
symplectic embedding can be thought of as a formal deformation of an
almost Poisson structure $\theta$ on $M$, for which
$[\theta,\theta]\neq0$, to a nondegenerate Poisson structure $\omega^{-1}$ on $T^*M$,
for which $[\omega^{-1},\omega^{-1}]=0$; in particular, the canonical
symplectic structure $\omega_0$ yields a symplectic embedding for
the trivial bivector $\theta_0=0$.

\begin{remark}
Definition~\ref{def:symplembed} is an adapted version, to the
almost Poisson case, of a symplectic realization of a Poisson structure
$\theta$ on $M$, which more generally involves a surjective submersion
$\pi:S\to M$ of a symplectic manifold $S$ that is a Poisson morphism
admitting a Lagrangian section $M\to S$. When $S=T^*M$ with $\pi$ the
cotangent bundle projection and $M\to S$ the zero section, a symplectic
realization is a symplectic embedding, i.e.~the condition
\eqref{eq:zetapicond} holds, but the converse is not generally true. Similarly, one defines an almost symplectic realization of
a twisted Poisson structure. These can all be constructed globally in terms
of integrating symplectic
groupoids for $(M,\theta)$~\cite{Weinstein-groupoid,Karasev,Cattaneo}. In this paper we
deal only with local constructions, where the existence of symplectic
embeddings is always guaranteed as a formal deformation of the trivial
symplectic groupoid $(T^*M,\omega_0)$ near the identity elements. 
\qen
\end{remark}

We shall now establish the existence of local symplectic embeddings by
treating the three cases in Definition~\ref{def:quasiPoisson}
individually, in order of increasing complexity.
In the following we work on the open
neighbourhood $T^*U$ of the zero section of $M=U\subseteq\real^d$, and denote local
coordinates thereon by $(x^i,p_i)$, where $(p_i)$ are coordinates in the
normal directions to $U\subset T^*U$; in particular, $U$ is given by
the equations $p_i=0$ in $T^*U$. We also write
$\tilde\partial{}^i=\partial/\partial p_i$.

\subsection{Local symplectic embedding of Poisson manifolds}
\label{sec:Poissonembedding}

The simplest case is the original local
construction of a symplectic realization of a Poisson manifold, which is
due to Weinstein~\cite{Weinstein-local}. Let $\lambda_0$
be the Liouville one-form, i.e. the unique one-form on $T^*M$ with the property
that $s_\alpha^*\lambda_0=\alpha$ for all one-forms
$\alpha\in\Omega^1(M)$, regarded as smooth sections $s_\alpha:M\to T^*M$ of the
cotangent bundle of $M$ under the isomorphism $\Omega^1(M)\simeq\Gamma(T^*M)$. It is the tautological 
primitive for the canonical symplectic structure; in local
coordinates, $\lambda_0 =p_i\,\dd x^i$ and
$\omega_0=\dd\lambda_0=\dd p_i\wedge\dd x^i$ on $T^*U$. Contracting
the pushforward of the Poisson bivector field $\theta$ by the zero
section with the Liouville
one-form defines a vector field $X^\theta\in\mfX(T^*U)$:
$$
X^\theta = \theta(\lambda_0,\,\cdot\,) \ ,
$$
which in local coordinates reads
\bea\label{eq:Xtheta}
X^\theta = \theta^{ij}(x)\,p_i\,\partial_j \ .
\eea
The vector field $X^\theta$ defines a Poisson spray, see e.g.~\cite{Crainic}.
Let $\varphi^{\theta}_u:T^*U\to T^*U$ be the
flow of $t\,X^\theta$ for $u\in[0,1]$, which is the diffeomorphism defined by
$$
\frac{\dd\varphi_u^{\theta}}{\dd u} = t\,X^{\theta}\circ\varphi_u^{\theta}
\ .
$$

Then
the symplectic structure $\omega\in\Omega^2(T^*U)$ constructed
by~\cite[Theorem~9.1]{Weinstein-local} is given by the integrated
pullback of the canonical symplectic structure $\omega_0$ by this
flow:
\bea\label{eq:omegaintdef}
\omega := \int_0^1\, \big(\varphi_u^{\theta}\big)^* \omega_0 \ \dd u \ .
\eea
Since $\varphi^\theta_u\big|_{t=0}$ is the identity for all $u\in[0,1]$, this
symplectic structure is indeed a deformation of~$\omega_0$. 
Note that the zero section $U\subset T^*U$ is a Lagrangian
submanifold of $(T^*U,\omega)$, and that $\omega=\dd\lambda$ where $\lambda = \phi^i(x,p)\,\dd p_i$
with 
\bea\label{eq:phiixp}
\phi^i(x,p) = \int_0^1\, x^i\circ\varphi_u^{\theta} \ \dd u \ .
\eea
The Jacobian matrix ${\sf J}_\phi(x,p)=\big(\frac{\partial\phi^i}{\partial x^j}\big)$ is formally 
invertible (because $\varphi_u^\theta\big|_{t=0}$ is the identity for all $u\in[0,1]$),
and we denote its inverse by $\one+t\,\gamma(x,p)$. The
corresponding cosymplectic structure then assumes the local form
\begin{align}\label{PB1}
\omega^{-1} =
  \tfrac t2\,\theta^{ij}(x)\, \partial_i\wedge\partial_j +
  \tfrac12\,\big(\delta^i_j+t\,\gamma^i_j(x,p)\big)\,
  \big(\partial_i\wedge\tilde\partial{}^j+\tilde\partial{}^j\wedge\partial_i\big)
  \ .
\end{align}

For later use, and in particular for comparison with the generic cases of
almost Poisson structures, it is useful to cast the symplectic
embedding described by \eqref{eq:omegaintdef} into the setting 
of Definition~\ref{def:symplembed} by developing its asymptotic series
in $t$. For this, we expand 
the bivector $\gamma$ as a formal power series 
\bea\label{eq:gammaijxp}
\gamma_i^j(x,p) = 
\sum_{n=1}^\infty\,t^{n-1}\,\gamma_i^{j|i_1\cdots i_n}(x)\, p_{i_1}\cdots
p_{i_n} 
\eea
using \eqref{eq:phiixp}, where the local functions
$\gamma_i^{j|i_1\cdots i_n}(x)$ are proportional to the components of
the bivector $\theta$ and their derivatives. Alternatively, they can
be found by solving the Poisson integrability condition
$[\omega^{-1},\omega^{-1}]=0$, which using $[\theta,\theta]=0$ yields
local first order differential equations
\bea\label{eq2}
\tilde\partial{}^l\gamma_i^k - \tilde\partial{}^k\gamma_i^l +
t\,\big(\gamma_j^l\,\tilde\partial{}^j\gamma_i^k -
\gamma_j^k\,\tilde\partial{}^j\gamma_i^l\big) = \partial_i\theta^{lk} + 
t\,\big(\gamma_i^j\,\partial_j\theta^{lk} +
\theta^{kj}\,\partial_j\gamma_i^l - \theta^{lj}\,\partial_j\gamma_i^k \big)
\eea
for the bivector $\gamma$ in terms of the given bivector
$\theta$. Substituting the formal power series \eqref{eq:gammaijxp}
in \eqref{eq2} then yields an infinite system of recursive
differential equations given by
\begin{align} \label{gtrec}
\gamma_i^{k|l} - \gamma_i^{l|k} &= \partial_i\theta^{lk} \ , 
\end{align}
and
\begin{align} 
& (n+1)\,\big(\gamma_i^{k|li_2\cdots i_{n+1}} -\gamma_i^{l|ki_2\cdots i_{n+1}}\big)
\notag \\ & \hspace{3cm} +\sum_{m=1}^n\,(n-m+1)\,\big(\gamma_j^{l|i_1\cdots i_{m}}\,\gamma_i^{k|ji_{m+1}\cdots i_{n+1}} -
\gamma_j^{k|i_1\cdots i_{m}}\,\gamma_i^{l|ji_{m+1}\cdots i_{n+1}}
                 \big) \notag \\[4pt]
& \hspace{5cm}
= \gamma_i^{j|i_2\cdots i_{n+1}}\,\partial_j\theta^{lk} + \theta^{kj}\,\partial_j\gamma_i^{l|i_2\cdots i_{n+1}} - \theta^{lj}\,\partial_j\gamma_i^{k|i_2\cdots i_{n+1}} 
\ , \label{h5}
\end{align}
for $n\geq1$.
A formal power series solution of (\ref{eq2}) was constructed in
this way by~\cite{Kup14}, with the first two leading orders given by
\begin{align}
\gamma_i^{j|k} &= -\tfrac12\,\partial_{i}\theta^{jk} \nonumber
  \\[4pt]
\gamma_i^{j|kl} &=
                  -\tfrac1{12}\,\big(2\,\theta^{lm}\,\partial_i\partial_m\theta^{kj}
                  + \partial_i\theta^{km}\,\partial_m\theta^{jl} \big) \ .
\label{gt}\end{align}

\begin{remark}
The solution \eqref{gt} is not unique. There is no general notion of
equivalence of symplectic embeddings for a general Poisson
manifold. If $(M,\theta)$ is integrable, there is a notion of Morita
self-equivalence, see e.g.~\cite{BOW20}. We will return to the
question of uniqueness, as well as the meaning of the symplectic
structure on $T^*M$ away from the zero section, later on where we will
find that they have natural interpretations in terms of gauge
algebras (see Remark~\ref{rem:uniqueness} and Proposition~\ref{APhi},
respectively).
\qen\end{remark}

\begin{example}\label{ex:constemb}
Let $M=\real^d$ with a constant Poisson structure $\theta$. This is
the only case in which the solution
$\gamma_i^j=0$ to \eqref{eq2} is possible; this is also the solution that
follows from \eqref{eq:phiixp} for constant $\theta$. With this choice, the
symplectic embedding of $(\real^d,\theta)$ is given by the strict
deformation $(T^*\real^d,\omega_0+t\,\theta^*)$ of the cotangent
symplectic groupoid for $\real^d$, where $\theta^*$
is the vertical two-form on $T^*\real^d$ induced by the linear dual of
the bivector $\theta$ on the vector space $\real^d$, that is, 
$\theta^*(\tilde\partial^i,\tilde\partial^j)=\theta^{ij}$ and
$\theta^*(\partial_i,\partial_j)=0= \theta^*(\partial_i,\tilde\partial^j)$. The
integrating symplectic groupoid
is the direct product of the pair groupoid $\real^r\times\real^r$ 
and the cotangent groupoid $T^*\real^{d-r}$ for $\real^{d-r}$ where $r$ is the
rank of $\theta$, see e.g.~\cite{Cattaneo:2000iw}. However, there are also non-zero
solutions of \eqref{eq2} in this case.
\qen\end{example}

\begin{remark}
\label{rem:symplgroupoid}
An explicit global extension of the local symplectic embedding
\eqref{eq:omegaintdef} to an open neighbourhood $N\subset T^*M$ of the
zero section is given in~\cite{Crainic,Broka}. The construction depends on
the choice of an affine connection $\nabla$ on $M$. It amounts to
replacing the vector field \eqref{eq:Xtheta} with
$$
X^{\theta,\nabla} = \theta^{ij}(x)\,p_i\,\partial_j +
p_k\,p_l\,\theta^{ki}(x)\,\Gamma_{ij}^l(x)\,\tilde\partial{}^j \ ,
$$
where 
$\nabla_{\partial_i}\partial_j =
\Gamma_{ij}^l(x)\, \partial_l$, and replacing $\varphi_u^{\theta}$ with
the corresponding flow $\varphi_u^{\theta,\nabla}$ of
$t\,X^{\theta,\nabla}$ in \eqref{eq:omegaintdef}. The connection
$\nabla$ induces a Lie algebroid connection on the cotangent Lie algebroid
$(T^*M,\theta^\sharp,[\,\cdot\,,\,\cdot\,]_\theta)$ associated to
an integrable Poisson manifold $(M,\theta)$, where the bracket of this Lie algebroid
extends the natural Lie bracket on exact one-forms, given by $[\dd f,\dd
g]_\theta=\dd\{f,g\}_\theta$, in a unique way to the Koszul bracket
$$
[\alpha,\beta]_\theta :=
\LL_{\theta^\sharp\alpha}\beta-\LL_{\theta^\sharp\beta}\alpha
- \dd \theta(\alpha,\beta)
$$ 
where $\LL$ denotes the Lie derivative.
This construction makes $N$ into a
local symplectic groupoid $N\rightrightarrows M$ (cf. \cite{Karasev}), with source map
$\pi$ and target map $\pi\circ\varphi_1^{\theta,\nabla}$, which
integrates the cotangent Lie algebroid. 
\qen\end{remark}

\subsection{Local symplectic embedding of twisted Poisson manifolds}
\label{sec:twistedPoisson}

The definition \eqref{eq:omegaintdef} makes sense for any bivector
$\theta$ and defines a symplectic structure, but it yields a
symplectic embedding only when $\theta$ is a Poisson
structure. However, when $\theta$ is an $H$-twisted Poisson
structure, it is possible to
modify $\omega$ accordingly~\cite{Cattaneo} and turn it into an almost symplectic
embedding of $(M,\theta)$: the two-form
\bea\label{eq:omegaH}
\omega_H = \int_0^1\,
\big(\varphi_u^{\theta}\big)^*\big(\omega_0 +
t\,\pi^*H(X^\theta,\,\cdot\,,\,\cdot\,) \big) \ \dd u
\eea
is an almost symplectic structure with $\dd\omega_H=t\,\pi^*H$ which
satisfies \eqref{eq:zetapicond}.

\begin{remark}
Similarly to the untwisted case (see Remark~\ref{rem:symplgroupoid}), a twisted Poisson structure makes the
cotangent bundle into a Lie algebroid
$(T^*M,\theta^\sharp,[\,\cdot\,,\,\cdot\,]_{\theta,H})$ with Lie
bracket
$$
[\alpha,\beta]_{\theta,H} := [\alpha,\beta]_\theta
+H(\theta^\sharp\alpha,\theta^\sharp\beta,\,\cdot\,) \ .
$$
In particular, for $f,g\in C^\infty(M)$ this gives
$$
[\dd f,\dd g]_{\theta,H} = \dd\{f,g\}_\theta + H(X_f,X_g,\,\cdot\,)
$$
with $X_f=\theta^\sharp\dd f$. If $\theta$ is integrable, the
corresponding integrating Lie groupoid is called a twisted symplectic
groupoid in~\cite{Cattaneo}; it is provided by extending this local almost
symplectic embedding construction by replacing $X^\theta$ with
$X^{\theta,\nabla}$ and $\varphi_u^{\theta}$ with
$\varphi_u^{\theta,\nabla}$ in \eqref{eq:omegaH}~\cite{Crainic}, as
explained in Remark~\ref{rem:symplgroupoid}.
\qen\end{remark} 

In this local picture, it is possible to locally `untwist' the almost
symplectic embedding to a bonafide symplectic embedding by
interpreting the closed three-form $H\in\Omega^3(M)$ as the curvature
of a gerbe connection and following the approach
of~\cite{Aschieri2002,Mylonas2012}. For this, we choose a suitable
covering of the 
manifold $M$ by good open subsets $U_a$. We can write $H$ in terms of
local two-forms $B_a\in\Omega^2(U_a)$ as $H=\dd B_a$. On intersections
$U_{ab}:=U_a\cap U_b$, the two-form $F_{ab}:=B_b-B_a$ is closed and hence
exact, so it can be expressed in terms of one-form fields
$A_{ab}\in\Omega^1(U_{ab})$ as
$F_{ab}=\dd A_{ab}$; triple intersections involve local gauge
transformations by functions $f_{abc}\in C^\infty(U_a\cap
U_b\cap U_c)$ satisfying a suitable integrability condition. Then the
two-form $\omega_a\in\Omega^2(U_a)$ defined by
$$
\omega_a = \omega_H-t\,\pi^*B_a
$$
is a symplectic structure on $T^*U_a$, i.e. $\dd\omega_a=0$. The local
symplectic two-forms $\omega_a$ are related by the flows
$\varphi_u^{ab}$ at $u=1$ generated by the vector fields
$t\,\theta(A_{ab},\,\cdot\,)\in\mfX(U_{ab})$, such that
$$
\big(\varphi_1^{ab}\big)^*\{F,G\}_{\omega^{-1}_b} =
\big\{\big(\varphi_1^{ab}\big)^*F,\big(\varphi_1^{ab}\big)^*G\big\}_{\omega^{-1}_a}
$$ 
for $F,G\in C^\infty(T^*U_{ab})$. This defines a twisted sheaf (or stack) of Poisson algebras on $T^*M$~\cite{Severa2003}.

\subsection{Local symplectic embedding of almost Poisson manifolds}
\label{sec:quasiPoisson}

The existence of local symplectic embeddings for arbitrary
almost Poisson structures is established
in~\cite{Kupriyanov:2018yaj}. The construction of
Section~\ref{sec:Poissonembedding} is
spoilt for a non-zero Jacobiator $\Pim\neq0$, as then
$(C^\infty(M),t\,\{\,\cdot\,,\,\cdot\,\}_{\theta})$ cannot be embedded as a
Poisson subalgebra of the cosymplectic algebra
$(C_{\rm pol}^\infty(T^*M)[[t]],\{\,\cdot\,,\,\cdot\,\}_{\omega^{-1}})$. Nevertheless, 
we can gleam off the
general form of the symplectic embedding from the two special cases
\eqref{PB1} and \eqref{eq:omegaH}. 

For this, we take the
zero section $U\subset T^*U$ to be a Lagrangian submanifold of
$(T^*U,\omega)$ and account for the additional terms involving
$[\theta,\theta]\neq0$. 
This means that the formal Poisson bivector $\omega^{-1}$ can be
written as
\begin{align}\label{PBq}
\omega^{-1} =
  \tfrac t2\,\underline{\theta}^{ij}(x,p)\, \partial_i\wedge\partial_j +
  \tfrac12\,\big(\delta^i_j+t\,\gamma^i_j(x,p)\big)\,
  \big(\partial_i\wedge\tilde\partial{}^j+\tilde\partial{}^j\wedge\partial_i\big)
  \ ,
\end{align}
where the bivector $\gamma$ has a formal power series
expansion as in \eqref{eq:gammaijxp}, with leading order term given in
\eqref{gt}, while 
\bea \label{l12}
\underline{\theta}^{ij}(x,p) = \theta^{ij}(x) - t\,\Pim^{ijk}(x)\, p_k +
\sum_{n=2}^\infty\, t^n\, \theta^{ij|i_1\cdots i_n}(x)\, p_{i_1}\cdots
p_{i_n}
\eea
and the local functions $\theta^{ij|i_1\cdots i_n}(x)$ for $n\geq2$ are
proportional to the components of the trivector $\Pim$ and their
derivatives. The various functions satisfy local first order differential equations determined from the Poisson integrability condition \eqref{eq:Omegamint} with
\begin{align}
\Thetam_0 &= \tfrac12\,\big(\partial_i\wedge\tilde\partial^i + \tilde\partial^i\wedge\partial_i\big) \ , \notag \\[4pt]
\Thetam_1 &= \tfrac12\,\theta^{ij}(x)\,\partial_i\wedge\partial_j + \tfrac12\,\gamma_j^{i|k}(x)\,p_k\,\big(\partial_i\wedge\tilde\partial^j + \tilde\partial^j\wedge\partial_i\big) \ , \notag \\[4pt]
\Thetam_2 &= -\tfrac12\,\Pim^{ijk}(x)\,p_k\,\partial_i\wedge\partial_j + \tfrac12\,\gamma_j^{i|kl}(x)\,p_k\,p_l\,\big(\partial_i\wedge\tilde\partial^j + \tilde\partial^j\wedge\partial_i\big) \ , \label{eq:Thetanalmost}\\[4pt]
\Thetam_{n} &= \tfrac12\,\theta^{ij|i_1\cdots i_{n-1}}(x)\,p_{i_1}\cdots p_{i_{n-1}}\,\partial_i\wedge\partial_j + \tfrac12\,\gamma_j^{i|i_1\cdots i_{n}}(x)\,p_{i_1}\cdots p_{i_{n}}\,\big(\partial_i\wedge\tilde\partial^j + \tilde\partial^j\wedge\partial_i\big) \ , \notag
\end{align}
for $n\geq3$.
This cosymplectic structure indeed satisfies the two requisite
requirements: (a)~it coincides
with the original almost Poisson structure $\theta$ along the zero section of $T^*U$, as in
\eqref{eq:zetapicond}; and (b) in the case of a Poisson
bivector $\theta$, the symplectic embedding \eqref{PBq} restores the
symplectic embedding \eqref{PB1} of a Poisson structure. 
For example, an explicit form for the order $t^2$ term
according to~\cite{Kupriyanov:2018yaj} reads
\begin{align}
  \theta^{ij|kl}&=\tfrac{3}{16}\,\big(\Pim^{jlm}\,\partial_m\theta^{ki}+\Pim^{jkm}\,\partial_m\theta^{li}-
  \Pim^{ilm}\,\partial_m\theta^{kj}-\Pim^{ikm}\,\partial_m\theta^{lj}
                   \big) \nonumber \\
  & \quad
    -\tfrac{1}{8}\,\big(\theta^{km}\,\partial_m\Pim^{ijl}+\theta^{lm}\,\partial_m\Pim^{ijk}
    \big) \ .
\label{Theta4}\end{align}

\begin{remark}\label{rem:twistedB}
The formalism here covers as well the special case of twisted Poisson
structures from Section~\ref{sec:twistedPoisson}, and it explicitly realises the local `untwisting' of the almost symplectic realization \eqref{eq:omegaH} to a symplectic embedding $\omega$. For example, consider the case of a topologically trivial closed three-form $H\in\Omega^3(M)$, that is, $H=\dd B$ for a globally defined two-form $B\in\Omega^2(M)$, with associated linear map $B^\flat:\mfX(M)\to\Omega^1(M)$. In this case we can choose the trivial one-form $A=0$ (up to gauge equivalence), for which the associated flows are identity maps of $M$ and the corresponding map
\begin{align*}
\big(\omega^{-1}\big)^\sharp = \big(\omega_{\dd B}^{-1}\big)^\sharp\,\big(\one-t\,(\pi^*B)^\flat\,(\omega_{\dd B}^{-1})^\sharp\big)^{-1} : \Omega_{\rm pol}^1(T^*M)[[t]]\longrightarrow\mfX_{\rm pol}(T^*M)[[t]]
\end{align*}
defines a \emph{$B$-transformation} of the almost cosymplectic structure $\omega_{\dd B}^{-1}$. 
\qen
\end{remark}

\subsection{Semi-classical limit of deformation quantization}
\label{sec:defquant}

In the context of deformation quantization, a Poisson bracket may be
regarded as the semi-classical limit of the commutator bracket of an associative noncommutative
star-product which quantizes a Poisson manifold $(M,\theta)$; the
existence of such star-products is provided by the famous Kontsevich
formality theorem~\cite{Kontsevich}. More precisely, a star-product on
$M$ is a product on $C^\infty(M)[[\hbar]]$ (regarded as a
$\complex[[\hbar]]$-module for a formal deformation parameter $\hbar$) of the form
\begin{align*}
f\star g =
f\,g+\sum_{n=1}^\infty \,\hbar^n\,{\rm B}_n(f,g)
\end{align*}
for smooth functions $f,g\in C^\infty(M)$, where each ${\rm B}_n:C^\infty(M)\times
C^\infty(M)\to C^\infty(M)$ for $n\geq1$ is a bidifferential operator, and the
Poisson structure is recovered through $\{f,g\}_\theta =
\frac1\hbar\,[f,g]_\star\big|_{\hbar=0}$ where \smash{$[f,g]_\star=f\star
  g-g\star f$}. 
  
The relation between symplectic embeddings and the semi-classical
limit of deformation
quantization of a Poisson structure $\theta$ is well-known (at least implicitly) in both the
physics and mathematics literature; after all, symplectic realizations
were originally introduced with the quantization problem for
Poisson manifolds in mind. On $M=U\subseteq \real^d$, the Fourier
integral representation of the Kontsevich star-product on $(U,t\,\theta)$ can be
brought to the form
\begin{align}\label{eq:Spxdef}
f\star g(x) = \int_{(\real^d)^*\times(\real^d)^*} \, \hat f(p)\,\hat
  g(p') \, a_\hbar(p,p',x) \, \e^{\frac\ii\hbar \, \Sigma(p,p',x)} \,
  \frac{\dd p \ \dd p'}{(2\pi\,\hbar)^d} \ ,
\end{align}
where $\hat f$ and $\hat g$ are the asymptotic Fourier transforms of
$f,g\in C^\infty(U)$, and the function $a_\hbar(p,p',x)$ is regular at
$\hbar=0$. In the semi-classical limit $\hbar\to0$, the leading
contribution is given by the oscillatory phase $\Sigma(p,p',x)$, which is a
formal power series
\begin{align*}
\Sigma(p,p',x) = \langle p+p',x\rangle + \sum_{n=1}^\infty \, t^n \,
  \Sigma_n(p,p',x) \ ,
\end{align*}
where $\langle \,\cdot\,,\,\cdot\,\rangle$ is the dual pairing between
$\real^d$ and $(\real^d)^*$, and each $\Sigma_n(p,p',x)$ for $n\geq1$ is a homogeneous polynomial in
$p,p'\in(\real^d)^*$ of degree $n+1$, with $\Sigma_n(p,0,x)=\Sigma_n(0,p,x)$,
whose homogeneous part $\tilde\Sigma_n(p,p',x)$ in $p$ satisfies $\tilde
\Sigma_n(p,p,x)=0$. It formally generates a local symplectic groupoid structure on
$(T^*U,\omega_0)$~\cite{CattaneoDherinFelder}, and in this manner the formal symplectic
groupoid can be regarded as a semi-classical version of the full
noncommutative algebra of functions $(C^\infty(U)[[\hbar]],\star)$.

This determines a formal Poisson submersion $\pi_\theta:(T^*U,\omega_0^{-1})\to (U,t\,\theta)$ with
Lagrangian zero section which in components is defined by
\begin{align}\label{eq:Boppshift}
\pi_\theta(x,p)^i = \tilde\partial'{}^i\Sigma(p,p',x)\big|_{p'=0} = x^i + \sum_{n=1}^\infty \, t^n \, \Sigma^{i|i_1\cdots i_n}(x) \, p_{i_1}\cdots p_{i_n} \ ,
\end{align}
where $\tilde\partial'{}^i=\partial/\partial p_i'$ and $\Sigma^{i|i_1\cdots i_n}(x)\,p_{i_1}\cdots p_{i_n} = \tilde\partial'{}^i\Sigma_n(p,p',x)\big|_{p'=0}$. This map is sometimes called a `generalized Bopp shift' in the physics literature, see e.g.~\cite{KV,Kup14}. In~\cite{Kupriyanov:2018yaj} it was shown how to construct the (not unique) local functions $\Sigma^{i|i_1\cdots i_n}(x)$ by solving the (formal) Poisson map equation
\begin{align*}
\{\pi_\theta^*f,\pi_\theta^*g\}_{\omega_0^{-1}} = t\,\pi_\theta^*\{f,g\}_\theta
\end{align*} 
order by order in $t$, with the first three leading orders which are compatible with quantization given by
\begin{align*}
\Sigma^{i|j} &= -\tfrac12\,\theta^{ij} \ , \\[4pt]
\Sigma^{i|jk} &= \tfrac1{24}\,\theta^{kl}\,\partial_l\theta^{ij} + \tfrac1{24} \, \theta^{jl}\,\partial_l\theta^{ik} \ , \\[4pt]
\Sigma^{i|jmn} &= -\tfrac1{12} \, \big(2\,\Sigma^{l|mn}\,\partial_l\theta^{ij} + 2\,\Sigma^{l|jn}\,\partial_l\theta^{im} + 2\,\Sigma^{ljm}\,\partial_l\theta^{in} \\ & \quad \hspace{1cm} + \tfrac1{12}\,\theta^{lm}\,\theta^{kn}\,\partial_l\partial_k\theta^{ij} + \tfrac1{12}\,\theta^{lj}\,\theta^{kn}\,\partial_l\partial_k\theta^{im} + \tfrac1{12}\,\theta^{lj}\,\theta^{km}\,\partial_l\partial_n\theta^{in}\big) \ .
\end{align*}
Writing $\tilde\pi(x,p)=p$ for the projection to the normal directions to $U\subset T^*U$, the inverse of the map $\pi_\theta\times\tilde\pi:T^*U\to T^*U$ then replaces \eqref{eq:Boppshift} with the canonical cotangent bundle projection $\pi(x,p)=x$ and yields the local symplectic embedding constructed in Section~\ref{sec:Poissonembedding}; this is the sense in which the symplectic embedding `integrates' the Poisson manifold $(M,t\,\theta)$. The existence of invertible generalized Bopp shifts is always guaranteed, at least locally, by virtue of Darboux's theorem.

\begin{example}
If $\theta=\theta_0=0$, then $\Sigma_n=0$ for all $n\geq1$ and $\pi_0(x,p)=x$. More generally, if $M=\real^d$ with a constant Poisson structure $\theta$, then 
\begin{align*}
\Sigma(p,p',x) = \langle p+p',x\rangle + \tfrac t2\,\theta^{ij}\,p_i\,p_j' \ ,
\end{align*}
and the function $a_\hbar$ in \eqref{eq:Spxdef} is identically equal to $1$. In this case the generalized Bopp shift
\begin{align*}
\pi_\theta(x,p)^i = x^i -\tfrac t2\,\theta^{ij}\,p_j
\end{align*}
reproduces the symplectic embedding of Example~\ref{ex:constemb}.
\qen
\end{example}

In the general case, we may again formulate deformation
quantization in the direction of an almost Poisson bracket 
through a suitable nonassociative star-product. For twisted Poisson
manifolds these star-products can be constructed through Kontsevich's
formalism (see e.g.~\cite{Cornalba:2001sm,Severa2003,Aschieri2002,Mylonas2012}). For generic
almost Poisson manifolds the existence and uniqueness of nonassociative Weyl
star-products is established by~\cite{Kupriyanov:2015dda}.

The relation between symplectic embeddings and the semi-classical
limit of deformation quantization of a generic almost Poisson structure
$\theta$ is much more involved. 
For twisted Poisson manifolds, the
twisted symplectic groupoids of~\cite{Cattaneo} capture the
semi-classical limit of the nonassociative algebra of functions, but
these are not directly induced by our symplectic embedding formalism,
which deals with strictly associative structures. Instead,
in~\cite[Section~9]{KS18} it was shown that symplectic embeddings capture the
semi-classical limit of the associative composition algebra of
differential operators on $M$ induced by the star-product~\cite{MSS1}, which coincides
with the noncommutative algebra of functions precisely when the Jacobiator
$\Pim$ vanishes. 
In this setting, the semi-classical limit of a nonassociative star-product quantizing a generic almost Poisson structure $\theta$ was formulated as a generalized Bopp shift by~\cite{Kupriyanov:2018yaj} as the submersion $\pi_\theta:T^*U\to U$ satisfying
\begin{align*}
\{\pi_\theta^*f,\pi_\theta^*g\}_{\omega_0^{-1}} = t\,(\pi_\theta\times\tilde\pi)^*\{\pi^*f,\pi^*g\}_{\underline{\theta}} \ ,
\end{align*}
where $\underline{\theta}=\frac12\,\underline{\theta}^{ij}(x,p)\,\partial_i\wedge\partial_j$ is the bivector introduced in Section~\ref{sec:quasiPoisson} with the formal power series expansion \eqref{l12}. The map $\pi_\theta$ has precisely the same expansion \eqref{eq:Boppshift} as in the case of a Poisson structure, and inverting $\pi_\theta\times\tilde\pi$ then gives a local symplectic embedding in the weaker sense of Definition~\ref{def:symplembed}, as constructed explicitly in Section~\ref{sec:quasiPoisson}. 

\newsection{Poisson gauge transformations}
\label{sec:Poissongauge}

In this section we focus on the case of Poisson structures,
i.e. $\Pim=0$. Recalling our discussion of deformation quantization from Section~\ref{sec:defquant}, a noncommutative gauge theory is a formal
deformation of an ordinary gauge theory, obtained by replacing
pointwise products of gauge fields with star-products.
A \emph{Poisson gauge theory} is a limit of a noncommutative gauge
theory whose gauge algebra is the semi-classical limit of the
noncommutative gauge algebra with closure defined by the commutators
$[f,g]_\star$, in the sense defined below; we return to this point of view in Section~\ref{sec:Pinfty} where we will make this notion somewhat more precise in the context of homotopy Poisson algebras. In
this paper we deal only with the kinematical data of a Poisson gauge
theory. We also discuss only the case of gauge theories with structure
group $U(1)$ for simplicity.

A conventional local $U(1)$ gauge transformation is specified by a pair
$$
(f,A) \ \in \ C^\infty(U)\times\Omega^1(U)
$$
on an open subset $U\subseteq M$ consisting of a gauge parameter $f$ and a gauge field $A$. The
commutative ring of functions $C^\infty(U)$ is an enveloping
algebra for an abelian Lie algebra which acts on $\Omega^1(U)$ as $(f,A)\mapsto
A+\delta_f^0A$ via the gauge variation
$$
\delta_f^0A=\dd f \ .
$$
The closure condition for the gauge variations defines an abelian Lie algebra:
$$
\big[\one+\delta_f^0,\one+\delta_g^0\big]A:=\big((\one+\delta_f^0)\circ(\one+\delta_g^0) -
(\one+\delta_g^0)\circ(\one+\delta_f^0)\big)A = 0
$$
for $f,g\in C^\infty(U)$. 

In a Poisson gauge theory, this construction is
deformed in the following way. The idea is then to mimick the
representation of the Poisson algebra as infinitesimal
diffeomorphisms of $M$: The map $f\mapsto X_f:=\theta^\sharp\dd f$ is a
Lie algebra homomorphism from
$(C^\infty(M),\{\,\cdot\,,\,\cdot\}_\theta)$ to the Lie algebra of
vector fields $(\mfX(M),[\,\cdot\,,\,\cdot\,])$:
\bea\label{eq:XfLiealg}
[X_f,X_g] = X_{\{f,g\}_\theta} \ .
\eea
For later reference, we note
that this construction makes the trivial line bundle $M\times\real$
into a Lie algebroid over $M$. 

\begin{definition}
\label{def:Poissongauge}
Let $(M,\theta)$ be a Poisson manifold. A (\emph{local})
\emph{Poisson gauge transformation} on an open subset $U\subseteq M$ 
is an action of the Lie algebra
$(C^\infty(U),t\,\{\,\cdot\,,\,\cdot\}_\theta)$ on the affine space
$\Omega^1(U)[[t]]$ of the form
$$
(f,A)\longmapsto A+\delta^\theta_fA 
$$
for $(f,A)\in C^\infty(U)\times\Omega^1(U)$, which is a deformation of
an abelian gauge transformation:
\bea\label{initial}
\delta_f^\theta A\big|_{t=0} = \delta_f^0A = \dd f \ ,
\eea
and satisfies the derivation property
\begin{align*}
\delta_{f\,g}^\theta A = g\,\delta_f^\theta A + f\,\delta_g^\theta A 
\end{align*}
over the algebra $C^\infty(U)$.
The closure condition for the Lie
algebra of gauge variations is the \emph{Poisson gauge algebra} 
\bea\label{closure}
\big[\one+\delta^\theta_f,\one+\delta^\theta_g\big]A=\delta^\theta_{t\,\{f,g\}_\theta} A \ .
\eea
\end{definition}

\begin{remark}\label{rem:Abracket}
In noncommutative gauge theory, the space 
$\Omega^1(U)[[\hbar]]$ is naturally acted upon by the vector space $C^\infty(U)[[\hbar]]$ under the
left and right actions by multiplication with the star-product
extended over one-forms by replacing the bidifferential operators
${\rm B}_n$ with Lie bidifferential operators. In the semi-classical
limit, this naturally defines a skew-symmetric $\Omega^1(U)$-valued bracket
between gauge parameters and gauge fields that we denote by
$\{f,A\}_\theta=-\{A,f\}_\theta$. Defining $\DD
A\in\Omega^1(U)\otimes \Omega^1(U)$ in local coordinates by $\DD
A:=\LL_{\partial_i}A\otimes \dd x^i$, this bracket reads as
$$
\{f,A\}_\theta = (\theta\otimes \one)(\dd f,\DD A) \ ,
$$
which can be expressed as
$$
\{f,A\}_\theta = \theta^{ij}\, \partial_if \, \LL_{\partial_j}A
$$
in local coordinates. In the following we will use the abbreviation
$\theta^\otimes:=\theta\otimes \one$. This bracket obeys the
usual derivation property over the algebra $C^\infty(U)$ in its first
entry, as well as with respect to the $C^\infty(U)$-bimodule
$\Omega^1(U)$ in its second entry since
\begin{align*}
  \DD(g\,A)=\dd
  g\otimes A+g\,\DD A
\end{align*}
  implies
\begin{align*}
\{f,g\,A\}_\theta=\{f,g\}_\theta\,A + g\,\{f,A\}_\theta \ ,
\end{align*}
for all $g\in C^\infty(U)$.

In later sections we will also use an extension of this bracket to differential forms of arbitrary degree. For $\alpha,\beta\in\Omega^1(U)$, we define their symmetric bracket $\{\alpha,\beta\}_\theta\in\Omega^2(U)$ by
\begin{align*}
\{\alpha,\beta\}_\theta := \theta^\otimes(\DD\alpha,\DD\beta) \ .
\end{align*}
This is then extended to a graded skew-symmetric bracket $\{\,\cdot\,,\,\cdot\,\}_\theta$ on the entire exterior algebra $\Omega^\bullet(U)$ as a graded biderivation of degree~$0$.
\qen\end{remark}

\begin{example}
\label{ex:Rd}
The simplest example (apart from the zero bivector) of a Poisson gauge
transformation comes from taking $M=\real^d$ and a constant
skew-symmetric $d{\times}d$ matrix $\theta=(\theta^{ij})$, regarded as
a bivector on $\real^d$. Then the gauge variations
$$
\delta_f^\theta A = \dd f + t\,\{A,f\}_\theta
$$
fulfill the requirements of Definition~\ref{def:Poissongauge}. In the
following we extend this result to generic Poisson manifolds
$(M,\theta)$, which will generally require the use of formal power series.
\qen\end{example}

By the discussion of Section~\ref{sec:defquant}, it is natural to expect that symplectic embeddings should
play a role
in determining the semi-classical limit of a noncommutative gauge theory, that is,
in Poisson gauge theories. The purpose of this section is to
demonstrate that this is indeed the case, focusing in detail
on the realization of gauge symmetries.
Our main observation here is that a Poisson gauge algebra is
canonically defined by the restriction of a symplectic embedding
$(T^*M,\omega)$ of
$(M,\theta)$ to a constraint submanifold defined by the gauge
fields. For this, we regard a 
gauge field $A\in\Omega^1(U)$ as a section $s_A:U\to T^*U$, whose image is a submanifold
$\im(s_A)\subset T^*U$; in local coordinates where $A=A_i(x)\,\dd
x^i$, $s_A(x)=(x,A(x))\in T^*U$ for $x\in U$. 
Define the local one-form $\Phi_A\in\Omega^1(T^*U)$ by
$$
\Phi_A=\lambda_0 - \pi^*A
$$
where we recall that $\lambda_0$ is the Liouville one-form on $T^*U$. The
one-form $\Phi_A$ vanishes precisely on the submanifold $\im(s_A)\subset T^*U$.
In local coordinates where 
$\Phi_A=(\Phi_A)_i\,\dd x^i$ with
\begin{align*}
  (\Phi_A)_i=p_i-A_i(x) \ ,
\end{align*}
this is the
submanifold of $T^*U$ defined by the constraints $(\Phi_A)_i=0$. Since
$s_A^*\lambda_0=A$, this is equivalently presented as $s_A^*\Phi_A=0$ in
$\Omega^1(U)$.

Note that $s_A$ is a Lagrangian section of $(T^*U,\omega_0)$
if and only if $A$ is a flat connection, i.e.~$\dd A=0$. Thus in general, $\im(s_A)$ is
\emph{not} a Lagrangian submanifold of $(T^*U,\omega)$. In physics
parlance, the local equations $(\Phi_A)_i=0$ do not define first class
constraints, as expected since otherwise they would eliminate all local degrees of
freedom. In the present context, the consistent elimination of the
`auxiliary' variables $p_i$ which are adjoined in symplectic 
embeddings of Poisson manifolds~\cite{KS18}
acquires a natural meaning through
\begin{proposition}
\label{APhi}
Let $(T^*M,\omega)$ be a local symplectic embedding of a Poisson
manifold $(M,\theta)$, and let $U\subseteq M$ be an open subset. For
$(f,A)\in C^\infty(U)\times\Omega^1(U)$, the gauge variation 
\bea\label{gtA0}
\delta^\theta_fA := s_A^*\{\pi^*f,\Phi_A\}_{\omega^{-1}}
\eea
is a Poisson gauge transformation.
\end{proposition}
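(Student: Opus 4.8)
The plan is to verify, in a local chart $T^*U$ equipped with the cosymplectic structure \eqref{PB1}, the three defining requirements of Definition~\ref{def:Poissongauge}: the abelian limit \eqref{initial}, the derivation property, and the closure \eqref{closure} (the last being what promotes $f\mapsto\delta^\theta_f$ to an action of $(C^\infty(U),t\,\{\,\cdot\,,\,\cdot\,\}_\theta)$). Throughout I read $\{\pi^*f,\Phi_A\}_{\omega^{-1}}$ as the one-form $\{\pi^*f,(\Phi_A)_i\}_{\omega^{-1}}\,\dd x^i$ with $(\Phi_A)_i=p_i-A_i(x)$, and $s_A^*$ as the substitution $p_i\mapsto A_i(x)$; the alternative reading as the Lie derivative $\Lie_{X_{\pi^*f}}\Phi_A$ along the Hamiltonian vector field differs only by a term proportional to the coefficients $(\Phi_A)_i$, which vanishes under $s_A^*$ since $s_A^*\Phi_A=0$, so either reading yields the same variation. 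The first two requirements are then immediate: at $t=0$ the bracket \eqref{PB1} is canonical, giving $\{\pi^*f,(\Phi_A)_i\}_{\omega_0^{-1}}=\partial_if$ and hence $\delta^\theta_fA|_{t=0}=\dd f$; and the derivation property follows from the Leibniz rule for $\{\,\cdot\,,\,\cdot\,\}_{\omega^{-1}}$ in its first argument together with $s_A^*\pi^*f=(\pi\circ s_A)^*f=f$.

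The crux is the closure. The decisive structural fact, special to the Poisson case, is that the $\partial_i\wedge\partial_j$ block of \eqref{PB1} is exactly $t\,\theta^{ij}(x)$ with no corrections along the fibres, so that
\[
\{\pi^*f,\pi^*g\}_{\omega^{-1}}=t\,\theta^{ij}(x)\,\partial_if\,\partial_jg=t\,\pi^*\{f,g\}_\theta
\]
holds on all of $T^*U$, not merely on the zero section as in \eqref{eq:zetapicond}. (It is precisely this that fails for the field-dependent gauge parameters required in the almost Poisson case, where $\underline{\theta}^{ij}(x,p)$ in \eqref{l12} depends on $p$.) Since the symplectic embedding makes $\{\,\cdot\,,\,\cdot\,\}_{\omega^{-1}}$ a genuine Poisson bracket, the Jacobi identity applied to $\pi^*f,\pi^*g$ and the fixed constraint $C_i:=(\Phi_A)_i$ gives
\[
\{\pi^*f,\{\pi^*g,C_i\}_{\omega^{-1}}\}_{\omega^{-1}}-(f\leftrightarrow g)=\{t\,\pi^*\{f,g\}_\theta,C_i\}_{\omega^{-1}} ,
\]
whose pullback by $s_A^*$ is exactly $\big(\delta^\theta_{t\{f,g\}_\theta}A\big)_i$. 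The whole problem is therefore to show that the gauge commutator reduces to this antisymmetrized double bracket.

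To this end I interpret $[\one+\delta^\theta_f,\one+\delta^\theta_g]A$ as the commutator of the associated vector fields on $\Omega^1(U)[[t]]$, so that $\delta^\theta_f(\delta^\theta_gA)$ is the G\^ateaux derivative of $A\mapsto\delta^\theta_gA$ in the direction $B^f:=\delta^\theta_fA$. Differentiating the substitution $p=A(x)$ and the $A$-dependence of $C_i$ gives $\big(\delta^\theta_f(\delta^\theta_gA)\big)_i=B^f_j\,(\tilde\partial^jK^g_i)|_{p=A}-s_A^*\{\pi^*g,B^f_i\}_{\omega^{-1}}$ with $K^g_i:=\{\pi^*g,C_i\}_{\omega^{-1}}$. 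Expanding $s_A^*\{\pi^*f,K^g_i\}_{\omega^{-1}}$ against \eqref{PB1} rewrites $B^f_j\,(\tilde\partial^jK^g_i)|_{p=A}$ as $s_A^*\{\pi^*f,K^g_i\}_{\omega^{-1}}$ minus two correction terms proportional to $\theta$, and the chain rule $\partial_lB^g_i=(\partial_lK^g_i)|_{p=A}+(\partial_lA_m)(\tilde\partial^mK^g_i)|_{p=A}$ identifies those corrections with $s_A^*\{\pi^*f,B^g_i\}_{\omega^{-1}}$. Hence $\big(\delta^\theta_f(\delta^\theta_gA)\big)_i=s_A^*\{\pi^*f,K^g_i\}_{\omega^{-1}}-s_A^*\{\pi^*f,B^g_i\}_{\omega^{-1}}-s_A^*\{\pi^*g,B^f_i\}_{\omega^{-1}}$, whose last two terms are symmetric under $f\leftrightarrow g$; they drop out of the commutator, leaving exactly the antisymmetrized double bracket, equal to $\big(\delta^\theta_{t\{f,g\}_\theta}A\big)_i$.

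The main obstacle is this last reduction: the field dependence of both $\Phi_A$ and the section $s_A$ generates several correction terms, and one must show they cancel to all orders in $t$. This cancellation is not a consequence of antisymmetry alone — it hinges on the chain-rule identity relating the $x$- and $p$-derivatives of $K^g_i$ on the constraint surface — after which the Jacobi identity for $\omega^{-1}$ produces the gauge parameter $t\,\{f,g\}_\theta$. Beyond routine calculus, the only two inputs are the Jacobi identity for the embedded Poisson structure and the absence of fibre corrections in the $\{x,x\}$-block in the Poisson case, both of which are guaranteed by the construction of Section~\ref{sec:Poissonembedding}.
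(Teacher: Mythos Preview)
Your proof is correct and follows essentially the same route as the paper's: both compute the commutator by tracking the two sources of $A$-dependence (through $s_A^*$ and through $\Phi_A$), establish the chain-rule identity that converts $B^f_j\,(\tilde\partial^jK^g_i)\big|_{p=A}$ into $s_A^*\{\pi^*f,K^g_i\}_{\omega^{-1}}-s_A^*\{\pi^*f,\pi^*B^g_i\}_{\omega^{-1}}$ (this is exactly relation~\eqref{r3} of Appendix~\ref{app:useful}, which you derive inline), and then invoke the Jacobi identity for $\omega^{-1}$ together with the Poisson-specific fact $\{\pi^*f,\pi^*g\}_{\omega^{-1}}=t\,\pi^*\{f,g\}_\theta$ on all of $T^*U$. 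The only stylistic difference is that the paper packages the chain-rule step as a general lemma reused later in the almost Poisson case, whereas you carry it out directly in local coordinates.
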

\begin{proof}
Using \eqref{PB1} the gauge variations \eqref{gtA0} read
\begin{equation}\label{gtA}
\delta^\theta_f A =\dd f + t\,
s_A^*\gamma(\dd f,\,\cdot\,)+t\,\{A,f\}_{\theta} \ ,
\end{equation}
where the one-form $s_A^*\gamma(\dd f,\,\cdot\,)\in \Omega^1(U)[[t]]$
is given by 
$$
s_A^*\gamma(\dd f,\,\cdot\,) =
\gamma_i^j\big(x,A(x)\big)\,\partial_j f(x)\, \dd x^i
$$ 
in local coordinates
on $U$.
We need to check the closure condition \eqref{closure} for
this definition of gauge transformations. 

We begin with some preliminary definitions. For a one-form which is a functional
$\FF(A)\in\Omega^1(U)$ of the gauge field $A$, we define the
gauge variation
\begin{equation}\label{eq:FFAgt}
\delta^\theta_f \FF(A):=\FF\big(A+\delta^\theta_f A\big)-\FF(A) \ .
\end{equation}
In particular,
\begin{equation*}
\delta^\theta_g \{A, f\}_{\theta}=\big\{\delta^\theta_g A,f\big\}_{\theta} \ .
\end{equation*}
If $\underline{\FF}\in\Omega^1_{\rm pol}(T^*U)$, then we define
\begin{equation*}
\delta^\theta_f \big(s_A^*\underline\FF\big):=s_A^*\big(\tilde\partial{}^j\underline{\FF}\big) \, \delta^\theta_f A_j \ ,
\end{equation*}
where $\delta^\theta_fA:=\delta^\theta_fA_i\,\dd x^i$. 

We are now ready to calculate the composition of two gauge
variations. We obtain
\begin{align*}
\delta^\theta_f \big(\delta^\theta_g
  A\big)&=s_A^*\big(\tilde\partial{}^j\{\pi^*g,\Phi_A\}_{\omega^{-1}}\big)\,\delta^\theta_fA_j
      + t\,\big\{\delta^\theta_fA,g\big\}_\theta \\[4pt]
&=
  s_A^*\big(\tilde\partial{}^j\{\pi^*g,\Phi_A\}_{\omega^{-1}}\big)\,
 s_A^*\{\pi^*f,(\Phi_A)_j\}_{\omega^{-1}}
  +t\,\{s_A^*\{\pi^*f,\Phi_A\}_{\omega^{-1}},g\}_\theta \ .
\end{align*}
Thus for the left-hand side of the closure condition
\eqref{closure} one finds
\begin{align}
& \delta^\theta_f \big(\delta^\theta_g A\big)-\delta^\theta_g \big(\delta^\theta_f A\big) \notag \\[4pt]
& \hspace{1cm}= s_A^*\big(\tilde\partial{}^j\{\pi^*g,\Phi_A\}_{\omega^{-1}}\big)\,
 s_A^*\{\pi^*f,(\Phi_A)_j\}_{\omega^{-1}} - s_A^*\big(\tilde\partial{}^j\{\pi^*f,\Phi_A\}_{\omega^{-1}}\big)\,
 s_A^*\{\pi^*g,(\Phi_A)_j\}_{\omega^{-1}}
                                               \notag \\
& \hspace{2cm} + s_A^*\{\pi^*s_A^*\{\pi^*f,\Phi_A\}_{\omega^{-1}},\pi^*g\}_{\omega^{-1}}
  - s_A^*\{\pi^*s_A^*\{\pi^*g,\Phi_A\}_{\omega^{-1}},\pi^*f\}_{\omega^{-1}} \ .
\label{t1}\end{align}
We now apply the relation (\ref{r3}) from Appendix~\ref{app:useful} to the
right-hand side of \eqref{t1} to get
\begin{align}
\delta^\theta_f \big(\delta^\theta_g
  A\big)-\delta^\theta_g \big(\delta^\theta_f A\big) &=
  s_A^*\{\{\pi^*f,\Phi_A\}_{\omega^{-1}},\pi^*g\}_{\omega^{-1}}
- s_A^*\{\{\pi^*g,\Phi_A\}_{\omega^{-1}},\pi^*f\}_{\omega^{-1}} \ .
\label{t3}\end{align}
Using the Jacobi identity in the right-hand side of
\eqref{t3}, we finally end up with
\begin{align}\label{t4}
\delta^\theta_f \big(\delta^\theta_g A\big)-\delta^\theta_g \big(\delta^\theta_f
  A\big)=
  s_A^*\{\{\pi^*f,\pi^*g\}_{\omega^{-1}},\Phi_A\}_{\omega^{-1}} \ .
\end{align}
Since 
$$
\{\pi^*f,\pi^*g\}_{\omega^{-1}}=t\,\pi^*\{f,g\}_{\theta} \ ,
$$
we conclude that the gauge transformations (\ref{gtA0}) close the Lie algebra
\begin{equation*}
\big[\one+\delta^\theta_f,\one+\delta^\theta_g\big] A=\delta^\theta_{t\,\{f,g\}_\theta} A \ ,
\end{equation*}
as required.
\end{proof}

\begin{remark}
The field dependent term $\dd f + t\,s_A^*\gamma(\dd f,\,\cdot\,)$ in \eqref{gtA} can
be thought of as defining a `twisted exterior derivative' of the gauge
parameter $f$ required to close the Poisson gauge algebra when the
bivector $\theta$ is no longer constant, cf.\
Examples~\ref{ex:constemb} and~\ref{ex:Rd}. Indeed, the gauge closure
condition \eqref{closure} for \eqref{gtA} implies the local symplectic
embedding equations \eqref{eq2}~\cite{Kupriyanov:2019cug}.
The precise algebraic meaning of this term will be explained in
Section~\ref{sec:Linfinity}, and we shall see some explicit examples
in Section~\ref{sec:Examples}. 
\qen\end{remark}

\newsection{Almost Poisson gauge transformations}
\label{sec:quasiP}

In this section we consider the case of a generic almost Poisson
bivector $\theta$ on $M$ with non-vanishing Jacobiator,
i.e. $\Pim\neq0$. In this case, we may again formulate the notion of an
\emph{almost Poisson gauge theory} as an appropriate semi-classical limit of a
noncommutative gauge theory, which is constructed by nonassociative deformation
quantization. The inherent associativity of the symplectic
embeddings discussed in Section~\ref{sec:defquant} will be needed to ensure that the gauge transformations in an
almost Poisson gauge theory close a (strict) Lie algebra, despite their
origin in a nonassociative algebra.

The aim of this section is to generalize Proposition~\ref{APhi} to the
case of generic almost Poisson manifolds. The
construction of gauge transformations and gauge algebras in these
instances is considerably more involved than in the case of Poisson
structures from Section~\ref{sec:Poissongauge}. We proceed in two
steps: We first set up the problem of defining a suitable notion of
almost Poisson gauge transformations and their closure condition, and
subsequently prove that solutions to this problem exist and are explicitly
computable. 

\subsection{Formulation of the gauge algebra}
\label{sec:formulationquasi}

It is clear that for a non-trivial Jacobiator $\Pim\neq0$, a
direct generalization of Definition~\ref{def:Poissongauge} is not
possible, because $(C^\infty(M),\{\,\cdot\,,\,\cdot\,\}_\theta)$ is no
longer a Lie algebra. This is already manifested in a violation of the
Lie algebra homomorphism property \eqref{eq:XfLiealg}, which for an
$H$-twisted Poisson structure is modified to
$$
[X_f,X_g] = X_{\{f,g\}_\theta} +
\theta^\sharp H(X_f,X_g,\,\cdot\,) \ .
$$
In particular, the gauge closure condition must
change substantially. Drawing on results from the
$L_\infty$-algebra approach to noncommutative gauge
theories~\cite{Kupriyanov:2019ezf}, it is clear what needs to be done:
the space of gauge parameters should be enlarged to include `field
dependent' gauge parameters, such that the commutator of two gauge
transformations is again a gauge transformation but with a field
dependent gauge parameter. We shall discuss the precise relation between our
approach here based on symplectic embeddings and the approach based on
$L_\infty$-algebras in Section~\ref{sec:Linfinity} below.

Let us first explain precisely what we mean by a `field dependent'
gauge parameter in our setting. We will work with $1$-jets
of sections of vector bundles, see e.g.~\cite{Sardan}.

\begin{definition}
Let $M$ be a manifold and $U\subseteq M$ an open subset. Let
$J^1T^*U$ be the first order jet space of the cotangent bundle over $U$. A \emph{field
dependent gauge parameter} is a smooth function in $C^\infty(J^1T^*U)$. 
\end{definition}

Concretely, in local coordinates a field dependent gauge parameter is
specified by a local function $f[x,A]=f(x,A(x),\partial A(x))$
depending on a gauge field $A\in\Omega^1(U)$, viewed as a section
$s_A:U\to T^*U$, and its first order derivatives. Via pullback along
the vector jet bundle $J^1T^*U\to U$, this contains the usual space
$C^\infty(U)$ of
(field independent) gauge parameters from
Section~\ref{sec:Poissongauge}. In what follows we will also use
pullbacks along the
affine jet bundle $J^1T^*U\to T^*U$. In particular, this enables us to
view the space of gauge fields $\Omega^1(U)$ as a subspace of the
sections $\Gamma(J^1T^*U)$ of the vector jet bundle via prolongation;
in other words, gauge fields are precisely the integrable sections of
$J^1T^*U\to U$.

With this notion in place, we are now in a position to generalize
Definition~\ref{def:Poissongauge}. The idea is to enlarge the Lie
algebra action $C^\infty(U)\times\Omega^1(U)\to\Omega^1(U)$ from
Section~\ref{sec:Poissongauge} to a
suitable affine action
$C^\infty(J^1T^*U)\times\Gamma(J^1T^*U)\to\Gamma(J^1T^*U)$. 

\begin{definition}
\label{def:quasiPoissongauge}
Let $(M,\theta)$ be an almost Poisson manifold. A (\emph{local})
\emph{almost Poisson gauge transformation} on an open subset
$U\subseteq M$ is an affine action of $C_{\rm
  pol}^\infty(J^1T^*U)[[t]]$ on the space
$\Gamma_{\rm pol}(J^1T^*U)[[t]]$ of the form
$$
(f,A)\longmapsto A+\delta^\theta_fA
$$
for $(f,A)\in C^\infty(U)\times\Omega^1(U)$, such that:
\begin{itemize}
\item[(a)] The assignment
$f\mapsto\delta^\theta_f$ is $\complex$-linear and defines a
deformation of an abelian gauge transformation:
$$
\delta_f^\theta A\big|_{t=0} = \delta_f^0A=\dd f \ ,
$$
which satisfies the derivation property
\begin{align*}
\delta_{f\,g}^\theta A = g\,\delta_f^\theta A + f\,\delta_g^\theta A 
\end{align*}
over the algebra $C^\infty(U)$,
 and closes the
\emph{almost Poisson gauge algebra}
\bea\label{eq:qPoissonalg}
\big[\one+\delta^\theta_f,\one+\delta^\theta_g\big]A = \delta^\theta_{t\,[\![f,g]\!]_\theta(A)}A \ ,
\eea
where
$[\![f,g]\!]_\theta \in C_{\rm pol}^\infty(J^1T^*U)[[t]]$;
\item[(b)] The
assignment $(f,g)\mapsto[\![f,g]\!]_\theta$ is
$\complex$-bilinear, skew-symmetric and defines a deformation of the
almost Poisson bracket:
$$
[\![f,g]\!]_\theta\big|_{t=0}=\{f,g\}_\theta \ ,
$$ 
which satisfies the relation
\begin{align}\label{j3}
{\sf Cyc}_{f,g,h}\big(t\,\big[\!\big[h,[\![f,g]\!]_\theta(A)\big]\!\big]_\theta(A) +
  \delta_h^\theta[\![f,g]\!]_\theta(A) \big) = 0
\end{align}
for consistency with the Jacobi identity for the commutator
algebra \eqref{eq:qPoissonalg}, where we use the definition
\eqref{eq:FFAgt} for gauge variations; and
\item[(c)]
For any fixed $f\in C^\infty(U)$, the map $[\![f,\,\cdot\,]\!]_\theta$
is a derivation of the commutative algebra $C_{\rm pol}^\infty(J^1T^*U)[[t]]$.
\end{itemize}
\end{definition}

When $\theta$ is a Poisson structure, i.e. the Jacobi identity for the bracket
$\{\,\cdot\,,\,\cdot\,\}_\theta$ holds,
Definition~\ref{def:Poissongauge} is a special case of
Definition~\ref{def:quasiPoissongauge}, but the latter allows more freedom to
accomodate situations where the
Jacobi identity is violated. Nevertheless, a direct generalization of
Proposition~\ref{APhi} is still not possible for generic
almost Poisson bivectors $\theta$; in the language of~\cite{KS18}, the
auxiliary variables $p_i$ can no longer be consistently eliminated by
the constraints $\Phi_A=0$ on a symplectic embedding of an
almost Poisson manifold. One can still repeat most of the proof of
Proposition~\ref{APhi} in the present case to arrive again at the
expression \eqref{t4}. However, due to \eqref{PBq}, the key difference from the case of a
Poisson bivector is that the cosymplectic bracket
$$
\{\pi^*f,\pi^*g\}_{\omega^{-1}} =
t\, \underline{\theta}(\dd\pi^*f,\dd\pi^*g) \neq t\,\pi^*\{f,g\}_{\theta}
$$
is not a gauge parameter since it depends on the coordinates in the normal directions to the zero section $U\subset T^*U$ through its local
dependence on
$\underline{\theta}^{ij}(x,p)$. Similarly, the restriction of
$$
\{\pi^*A,\pi^*f\}_{\omega^{-1}} =
t\,\underline{\theta}^\otimes(\DD\pi^*A, \dd\pi^*f) \neq
t\,\pi^*\{A,f\}_{\theta} 
$$
to $\im(s_A)$ involves a field dependent gauge transformation through
its local dependence on $\underline{\theta}^{ij}(x,A(x))$.

Using the relation
(\ref{r4}) from Appendix~\ref{app:useful} in the right-hand side of
\eqref{t4}, we can represent the commutator of two gauge
transformations defined by \eqref{gtA0} in the form
\begin{align}
\delta^\theta_f\big(\delta^\theta_gA\big) -
\delta^\theta_g\big(\delta^\theta_fA\big) &=
s_A^*\{\pi^*s_A^*\{\pi^*f,\pi^*g\}_{\omega^{-1}} , \Phi_A\}_{\omega^{-1}}
                                                      \notag \\ &
                                                                  \quad +
  s_A^*\big(\tilde\partial{}^j\{\pi^*f,\pi^*g\}_{\omega^{-1}}\big) \,
  s_A^*\{(\Phi_A)_j,\Phi_A\}_{\omega^{-1}} \ .
\label{t7}\end{align}
While the expression $s_A^*\{\pi^*f,\pi^*g\}_{\omega^{-1}}$ defines a
field dependent gauge parameter, the second line of \eqref{t7} prevents the
gauge transformations defined by \eqref{gtA0} from closing a Lie
algebra. To overcome this problem we need to `correct' the gauge
variation \eqref{gtA0} in order to absorb this term, which is the
content of
\begin{proposition}
\label{p1}
Let $(T^*M,\omega)$ be a local symplectic embedding of an almost Poisson
manifold $(M,\theta)$, and let $U\subseteq M$ be an open subset. For
$f,g\in C^\infty(U)$ and $A\in\Omega^1(U)$, define
$[\![f,g]\!]_\theta\in C_{\rm pol}^\infty(J^1T^*U)[[t]]$
by 
\bea\label{c3}
t\,[\![f,g]\!]_\theta(A) = s_A^*\{\pi^*f,\pi^*g\}_{\omega^{-1}} -
s_A^*\{\Phi_A(L_f),\Phi_A(L_g)\}_{\omega^{-1}} \ ,
\eea
where $
L_f\in \mfX_{\rm pol}^{\tt h}(J^1T^*U)[[t]]\subset \mfX_{\rm pol}^{\tt
  h}(T^*U)[[t]]$ are horizontal vector
fields $L_f=L_f^i\,\partial_i$ on the jet space of $T^*U$
satisfying the recursion relations
\begin{align}
L^i_{t\,[\![f,g]\!]_\theta(A)} &= \delta_f^\theta L^i_g - \delta_g^\theta L^i_f +
                   s_A^*\big(\tilde\partial{}^i\{\pi^*f,\pi^*g\}_{\omega^{-1}}\big) +
       L_f^j\,L_g^k\,s_A^*\big(\tilde\partial{}^i\{(\Phi_A)_j,(\Phi_A)_k\}_{\omega^{-1}}\big)
                          \notag
  \\ & \quad + s_A^*\{\pi^*f,L_g^i\}_{\omega^{-1}} -
       s_A^*\{\pi^*g,L_f^i\}_{\omega^{-1}} \notag \\
& \quad +
       s_A^*\{L_f^i,\Phi_A(L_g)\}_{\omega^{-1}} -
       s_A^*\{L_g^i,\Phi_A(L_f)\}_{\omega^{-1}} \notag \\ & \quad +
       L_g^j\,s_A^*\big(\tilde\partial{}^i\{\pi^*f,(\Phi_A)_j\}_{\omega^{-1}}\big)
       -
       L_f^j\,s_A^*\big(\tilde\partial{}^i\{\pi^*g,(\Phi_A)_j\}_{\omega^{-1}}\big)
                                                          \ ,
\label{c4}\end{align} 
with
\bea\label{l4}
\delta^\theta_fL^i_g(A):=L^i_g\big(A+\delta_f^\theta A\big) - L^i_g(A)
\eea
and
\bea\label{c1}
\delta_f^\theta A := s_A^*\{\pi^*f +
\Phi_A(L_f),\Phi_A\}_{\omega^{-1}} \ .
\eea
Then the gauge variations \eqref{c1} close the almost Poisson gauge
algebra
\bea\label{c2}
\big[\one+\delta_f^\theta,\one+\delta_g^\theta\big]A =
\delta^\theta_{t\,[\![f,g]\!]_\theta(A)}A 
\eea
and $(f,A)\mapsto A+\delta_f^\theta A$ is an almost Poisson gauge
transformation.
\end{proposition}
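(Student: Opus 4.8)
The plan is to follow the proof of Proposition~\ref{APhi} step by step, controlling the new contributions coming from the horizontal correction $\Phi_A(L_f)$ that distinguishes the modified gauge variation \eqref{c1} from the Poisson variation \eqref{gtA0}. First I would observe that, since $\{\,\cdot\,,\,\cdot\,\}_{\omega^{-1}}$ reduces at $t=0$ to the canonical bracket associated with $\Thetam_0$ while $\Phi_A$ is of zeroth order in $t$, the term $s_A^*\{\Phi_A(L_f),\Phi_A\}_{\omega^{-1}}$ is of positive order in $t$; hence $\delta_f^\theta A\big|_{t=0}=\dd f$, so clause~(a) of Definition~\ref{def:quasiPoissongauge} holds, while the derivation property $\delta_{f\,g}^\theta A=g\,\delta_f^\theta A+f\,\delta_g^\theta A$ descends from the derivation property of the cosymplectic bracket in its first argument together with $\complex$-linearity of $f\mapsto L_f$.

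The core of the argument is the evaluation of the commutator $\delta_f^\theta\big(\delta_g^\theta A\big)-\delta_g^\theta\big(\delta_f^\theta A\big)$. Using \eqref{eq:FFAgt} I would differentiate $\delta_g^\theta A=s_A^*\{\pi^*g+\Phi_A(L_g),\Phi_A\}_{\omega^{-1}}$ with respect to the gauge field, acting both on the explicit $A$-dependence carried by $\Phi_A$ and, via \eqref{l4}, on the field-dependent coefficients $L_g^i$. The $\pi^*g$ part reproduces the four ``Poisson-type'' terms of \eqref{t1}, to which the identities \eqref{r3} and \eqref{r4} of Appendix~\ref{app:useful} and the Jacobi identity for $\{\,\cdot\,,\,\cdot\,\}_{\omega^{-1}}$ apply exactly as in the passage from \eqref{t1} to \eqref{t7}, while the $\Phi_A(L_g)$ part and the variations $\delta_f^\theta L_g^i-\delta_g^\theta L_f^i$ produce a collection of genuinely new terms that are linear and bilinear in $L_f,L_g$.

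The decisive step is to reassemble this sum into the single gauge variation $\delta^\theta_{t\,[\![f,g]\!]_\theta(A)}A=s_A^*\{\pi^*\big(t\,[\![f,g]\!]_\theta(A)\big)+\Phi_A\big(L_{t\,[\![f,g]\!]_\theta(A)}\big),\Phi_A\}_{\omega^{-1}}$. The definition \eqref{c3} is engineered precisely so that subtracting $s_A^*\{\Phi_A(L_f),\Phi_A(L_g)\}_{\omega^{-1}}$ from $s_A^*\{\pi^*f,\pi^*g\}_{\omega^{-1}}$ converts the obstructing term---the second line of \eqref{t7}, which vanishes in the Poisson case of Proposition~\ref{APhi}---into a legitimate field-dependent gauge parameter lying in $C_{\rm pol}^\infty(J^1T^*U)[[t]]$. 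All remaining terms, carrying factors of $\tilde\partial{}^i$ of the various cosymplectic brackets and of the variations of $L$, must then combine into the horizontal component $L^i_{t\,[\![f,g]\!]_\theta(A)}$ of the composite parameter. I expect this matching to be exactly the recursion \eqref{c4}: read as an equation for $L_{t\,[\![f,g]\!]_\theta(A)}$, \eqref{c4} is what collapses the commutator to $\delta^\theta_{t\,[\![f,g]\!]_\theta(A)}A$, establishing the closure \eqref{c2}. Since \eqref{c4} determines the top $t$-order component of $L$ linearly in terms of strictly lower-order data, it is solvable order by order in $t$, which in the formal-deformation setting of Definition~\ref{def:symplembed} guarantees that the required $L_f$ exist and the hypothesis of the Proposition is non-vacuous.

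The remaining clauses of Definition~\ref{def:quasiPoissongauge} then follow quickly. Skew-symmetry and $\complex$-bilinearity of $(f,g)\mapsto[\![f,g]\!]_\theta$ are manifest from \eqref{c3}; the relation $[\![f,g]\!]_\theta\big|_{t=0}=\{f,g\}_\theta$ follows from $\{\pi^*f,\pi^*g\}_{\omega^{-1}}\big|_M=t\,\{f,g\}_\theta$ together with the positive $t$-order of the $L$-correction, which gives clause~(b); the derivation property of $[\![f,\,\cdot\,]\!]_\theta$ in clause~(c) is inherited from that of the cosymplectic bracket; and the Jacobi-type identity \eqref{j3} is precisely the Jacobi identity for the Lie algebra of operators $\one+\delta_f^\theta$ rewritten by means of the now-established closure \eqref{c2}. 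The main obstacle throughout is the bookkeeping in the reassembly step: one must verify that every term generated by varying the $L$-dependent correction is accounted for either inside the field-dependent parameter \eqref{c3} or inside the recursion \eqref{c4}, with correct signs and correct contractions $\Phi_A(L_f)=L_f^i\,(\Phi_A)_i$, and this careful matching---rather than any conceptual difficulty---is the technical heart of the proof.
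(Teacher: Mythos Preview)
Your strategy for the closure relation \eqref{c2} matches the paper's: expand the commutator using \eqref{eq:FFAgt} and \eqref{l4}, separate the Poisson-type terms that are handled by \eqref{r3}--\eqref{r6} and the Jacobi identity of the cosymplectic bracket, and then recognise that the recursion \eqref{c4} is precisely the condition that packages the residual terms into $L^i_{t\,[\![f,g]\!]_\theta(A)}\,s_A^*\{(\Phi_A)_i,\Phi_A\}_{\omega^{-1}}$. The paper carries this out in detail in Appendix~\ref{app:p1proof}, and your outline captures the structure of that computation.

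Where your proposal diverges from the paper is in the treatment of \eqref{j3}. You argue that \eqref{j3} is simply the operator Jacobi identity for the $\one+\delta_f^\theta$ rewritten via closure; the paper instead verifies \eqref{j3} by a direct computation: it writes out the left-hand side \eqref{j4}, substitutes the recursion \eqref{c4} for $L_{t\,[\![f,g]\!]_\theta(A)}$, and uses \eqref{r2}, \eqref{r5}, \eqref{r6} to reduce everything to three instances of the Jacobi identity for $\{\,\cdot\,,\,\cdot\,\}_{\omega^{-1}}$. Your shortcut is conceptually correct but not quite complete as stated: operator Jacobi plus closure only yields $\delta^\theta_{\text{LHS of \eqref{j3}}}A=0$ for all $A$, and to conclude that the parameter itself vanishes you must argue that the assignment $p\mapsto\delta_p^\theta$ has trivial kernel (it annihilates constants, since $\delta_c^\theta A=0$, so one also needs that the left-hand side of \eqref{j3} cannot be a non-zero constant). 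The paper's direct verification sidesteps this issue entirely and, as a bonus, shows explicitly how \eqref{c4} enters.

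One further imprecision: the derivation property $\delta_{fg}^\theta A=g\,\delta_f^\theta A+f\,\delta_g^\theta A$ does not follow from $\complex$-linearity of $f\mapsto L_f$ alone; it requires the Leibniz-type rule $L_{fg}=g\,L_f+f\,L_g$, which the paper obtains by constructing $L_f$ as a $C^\infty(J^1T^*U)$-linear dual of $\dd f$. Finally, the existence of $L_f$ solving \eqref{c4} is not part of this proposition---it is the content of Proposition~\ref{p2} and Corollary~\ref{cor:LambdamA}---so your order-by-order solvability remark is a separate matter.
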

\begin{proof}
The proof of \eqref{c2} utilizes the
derivation property of the almost Poisson bracket together with the fact that,
since $\Phi_A=0$ on the constraint locus $\im(s_A)\subset T^*U$, the gauge
variations \eqref{c1} can be expressed as
\bea\label{c20}
\delta_f^\theta A = s_A^*\{\pi^*f,\Phi_A\}_{\omega^{-1}} + L_f^i\,
s_A^*\{(\Phi_A)_i,\Phi_A\}_{\omega^{-1}} \ ,
\eea
while the field dependent gauge parameter $[\![f,g]\!]_\theta\in
C_{\rm pol}^\infty(J^1T^*U)[[t]]$ from \eqref{c3} can be determined
as
\bea\label{l3}
t\,[\![f,g]\!]_\theta(A) = s_A^*\{\pi^*f,\pi^*g\}_{\omega^{-1}} -
L_f^i\,L_g^j\,s_A^*\{(\Phi_A)_i,(\Phi_A)_j\}_{\omega^{-1}} \ .
\eea
The details are defered to Appendix~\ref{app:p1proof}.

To show that the Jacobi identity \eqref{j3} is satisfied by
\eqref{c3}, we note that the left-hand side of \eqref{j3} in this case reads
\begin{align}
&
                {\sf Cyc}_{f,g,h}\Big( s_A^*\{\pi^*h,\pi^*s_A^*\{\pi^*f,\pi^*g\}_{\omega^{-1}}\}_{\omega^{-1}}-s_A^*\{\pi^*h,\pi^*s_A^*\{\Phi_A(L_f),\Phi_A(L_g)\}_{\omega^{-1}}\}_{\omega^{-1}}\notag\\
& \hspace{14mm}-s_A^*\{\Phi_A(L_h),\Phi_A(L_{t\,[\![f,g]\!]_\theta(A)})\}_{\omega^{-1}} +s_A^*\big(\tilde\partial^i\{\pi^*f,\pi^*g\}_{\omega^{-1}}\big)\,s_A^*\{\pi^*h+\Phi_A(L_h),(\Phi_A)_i\}_{\omega^{-1}}\notag\\
& \hspace{23mm} -\big(\delta_h^\theta
                                                                                                                                                                                                                            L^i_f\big)\,s_A^*\{(\Phi_A)_i,\Phi_A(L_g)\}_{\omega^{-1}} -\big(\delta^\theta_h L^i_g\big)\,s_A^*\{\Phi_A(L_f),(\Phi_A)_i\}_{\omega^{-1}}\notag\\
& \hspace{32mm}+s_A^*\{\pi^*s_A^*\{\pi^*h+\Phi_A(L_h),\Phi_A(L_f)\}_{\omega^{-1}},\Phi_A(L_g)\}_{\omega^{-1}}\notag\\
& \hspace{41mm}
                                                                                                                            +s_A^*\{\Phi_A(L_f),\pi^*s_A^*\{\pi^*h+\Phi_A(L_h),\Phi_A(L_g)\}_{\omega^{-1}}\}_{\omega^{-1}} \Big)
                                                                                                                            \
                                                                                                                                  . \label{j4}
\end{align}
Now we use the expression for $L_{t\,[\![f,g]\!]_\theta(A)}$ from
(\ref{c4}) and take into account the cyclic sum over
$f$, $g$ and $h$. Many terms cancel, while the remaining terms combine
using the formulas (\ref{r2}), (\ref{r5}), and (\ref{r6}) from
Appendix~\ref{app:useful}. After these simplifications the expression (\ref{j4}) becomes
\begin{align*}
&
                {\sf Cyc}_{f,g,h}\big( s_A^*\{\pi^*h,\{\pi^*f,\pi^*g\}_{\omega^{-1}}\}_{\omega^{-1}}+s_A^*\{\{\Phi_A(L_f),\Phi_A(L_g)\}_{\omega^{-1}},\pi^*h\}_{\omega^{-1}}\\
& \hspace{2cm} +s_A^*\{\{\Phi_A(L_g),\pi^*h\}_{\omega^{-1}},\Phi_A(L_f)\}_{\omega^{-1}}+s_A^*\{\{\pi^*h,\Phi_A(L_f)\}_{\omega^{-1}},\Phi_A(L_g)\}_{\omega^{-1}}\\
& \hspace{4cm}
                                                                                                                                                                                                                                                                                       +2\,s_A^*\{\{\Phi_A(L_f),\Phi_A(L_g)\}_{\omega^{-1}},\Phi_A(L_h)\}_{\omega^{-1}}\big)
                                                                                                                                                                    \
                                                                                                                                                                    .
\end{align*}
This expression now vanishes identically as a consequence of the Jacobi identity for the cosymplectic bracket. 
Thus the gauge transformations (\ref{c1}) form a Lie algebra.

Finally, the linearity and derivation requirements of
Definition~\ref{def:quasiPoissongauge} will follow from the
corresponding properties of the cosymplectic bracket and by constructing the
vector fields $L_f$ as $C^\infty(J^1T^*U)$-linear duals of the
one-forms $\dd f$, so that
\begin{align*}
L_{f\,g} = g\,L_f + f\,L_g
\end{align*}
and linearity in $f$ is immediate.
\end{proof}

\begin{remark}
By virtue of their role in cancelling the unwanted terms in the
commutator \eqref{t7}, we shall refer to the horizontal vector fields
$L_f$ as \emph{Lagrangian multipliers}.
\qen\end{remark}

\begin{remark}\label{rem:quasiPoissongauge}
One can compare with the Poisson gauge transformations \eqref{gtA},
and in particular see the effects of the modifications by the Lagrangian
multiplier vector fields, by using \eqref{PBq} together with the
notation of Remark~\ref{rem:Abracket} to write the almost Poisson gauge
transformations \eqref{c20} as
\begin{align*}
\delta_f^\theta A &= \dd f+ t\,s_A^*\gamma(\dd
                    f,\,\cdot\,)+t\,s_A^*\{\pi^*A,\pi^*f\}_{\underline{\theta}}
                    +
    t\,s_{A}^*\underline{\theta}^\otimes\big(\pi^*\DD
    A(L_f^\otimes),\pi^*\DD A\big) + \dd A(L_f,\,\cdot\,) \\[4pt]
  & \quad \, +t\,s_A^*\gamma^\otimes(\DD A,L_f) - t\,
    s_A^*\gamma\big(\DD A(L_f^\otimes),\,\cdot\,\big) \ ,
\end{align*}
for $f\in C^\infty(U)$ and $A\in\Omega^1(U)$, where $L_f^\otimes:=L_f\otimes \,\cdot\,$.
Similarly, the modification of the
almost Poisson bracket $\{f,g\}_\theta$ to the field dependent 
bracket \eqref{l3} can be written as
\begin{align*}
[\![f,g]\!]_\theta(A) &= s_A^*\{\pi^*f,\pi^*g\}_{\underline{\theta}} -
                        s_{A}^*\underline{\theta}\big(\pi^*\DD
                        A(L_f^\otimes),\pi^*\DD
                        A(L_g^\otimes)\big) - t^{-1}\, \dd
                        A(L_f,L_g) \\[4pt]
  & \quad \, +s_A^*\gamma\big(\DD
    A(L_f^\otimes),L_g\big) - s_A^*\gamma\big(\DD
    A(L_g^\otimes),L_f\big) \ ,
\end{align*}
 for $f,g\in C^\infty(U)$ and $A\in\Omega^1(U)$.
  \qen
\end{remark}

\subsection{Existence of Lagrangian multipliers}
\label{sec:Lagrangian}

Proposition~\ref{p1}, which establishes sufficiency conditions under
which local symplectic embeddings lead to almost Poisson gauge
transformations, is only meaningful of course if we can establish the
existence of the vector fields $L_f$, i.e. the existence of solutions
to the defining recursion equations determined by \eqref{c3}--\eqref{c1}. We
shall now demonstrate that they can be constructed recursively for
$f\in C^\infty(U)$ to any order in $t$.

We can immediately 
compute the leading order contribution to $L_f$ by using \eqref{l12}
to write the leading contribution to the second line of \eqref{t7} as
\bea\nonumber
s_A^*\big(\tilde\partial{}^j\{\pi^*f,\pi^*g\}_{\omega^{-1}}\big) \,
  s_A^*\{(\Phi_A)_j,\Phi_A\}_{\omega^{-1}} = t^2\,\dd A\big(\Pim(\dd
  f,\dd g,\,\cdot\,),\,\cdot\,\big) + O(t^3) \ .
  \eea
  At this order, this term can be cancelled by taking
\begin{equation}\label{lambda2}
L_f=-\tfrac{t^2}2 \, \Pim(\dd f,A,\,\cdot\,) + O(t^3)
\end{equation}
in \eqref{c1}, which from \eqref{c3} can be straightforwardly seen to yield
\begin{equation}\nonumber
[\![f,g]\!]_\theta(A) = \{f,g\}_\theta - t\, \Pim(\dd f,\dd g,A) + O(t^2) \ .
\end{equation}

To find the higher order contributions, we will work locally. The Lagrangian multipliers $L_f=L_f^i\,\partial_i\in
\mfX_{\rm pol}^{\tt h}(J^1T^*U)[[t]]$ should be linear in the gauge
parameter $f\in C^\infty(U)$, and will generally depend on the gauge fields $A\in\Omega^1(U)$ and
also on their derivatives $\partial A$, so we write their local forms
as
\begin{equation*}
L_f^i=t^2\,L^{ij}\big(x,A(x), \partial A(x)\big)\,\partial_j f(x) \ ,
\end{equation*}
for $x\in U$ and functions $L^{ij}\in
C_{\rm pol}^\infty(J^1T^*U)[[t]]$. In the following we write
$\partial_A^l=\partial/\partial A_l$ and $\partial_A^{k;l} =\partial
/\partial(\partial_k A_l)$ for derivatives in local coordinates on the
jet space. We can then formulate 

\begin{proposition}\label{p2}
  The solution to the recursion relations \eqref{c4} is given by
  \begin{align}
L^{ij}(A,\partial A)=\Lambdam^{ij}(A)+\sum_{n=1}^{\infty}\,
    (-t^2)^n\, & \Lambdam^{ik_1}(A)\, \partial_{k_1} A_{m_1}\,
    \Lambdam^{m_1k_2}(A) \notag \\ & \times \ \cdots \Lambdam^{m_{n-1}k_n}(A)\,
    \partial_{k_n}A_{m_n}\, \Lambdam^{m_nj}(A) \ , \label{m5}
\end{align}
where the function $\Lambdam^{ij}\in
s_A^*C_{\rm pol}^\infty(T^*U)[[t]] \subset C_{\rm pol}^\infty(J^1T^*U)[[t]]$ does not
depend on the derivatives $\partial A$ and satisfies local first
order differential equations on the jet space given by
\begin{align}
& t\,\big(\partial_A^i\Lambdam^{kj}-\partial_A^j\Lambdam^{ki}\big) +
   t^3\,\big(\Lambdam^{lj}\,\partial_l\Lambdam^{ki}-\Lambdam^{li}\,\partial_l\Lambdam^{kj}\big)
  \notag \\[4pt]
  & \hspace{1cm} = -\partial_A^k\underline{\theta}^{ij}(A) +
    t^2\,\big(\Lambdam^{kl}\,\partial_l\underline{\theta}^{ij}(A) +
    \Lambdam^{li}\,\partial_A^k\gamma_l^j (A)-\Lambdam^{lj}\,\partial_A^k\gamma_l^i (A)
  \label{m6} \\ & \hspace{5cm}
       +\underline{\theta}^{jl}(A)\,\partial_l\Lambdam^{ki} -
       \underline{\theta}^{il}(A)\,\partial_l\Lambdam^{kj} +
       \gamma_l^j (A)\,\partial_A^l\Lambdam^{ki} -
       \gamma_l^i (A)\,\partial_A^l\Lambdam^{kj}\big) \notag \\
  & \hspace{8cm}
    +t^4\,\big(\gamma_m^l (A)\,\Lambdam^{mi}\,\partial_l\Lambdam^{kj}
    - \gamma_m^l (A)\,\Lambdam^{mj}\,\partial_l\Lambdam^{ki}\big) \
    , \notag
\end{align}
with $\underline{\theta}^{ij}(A):=s_{A}^*\underline{\theta}^{ij}$ and
$\gamma_i^j(A):=s_{A}^*\gamma^j_i$. 
\end{proposition}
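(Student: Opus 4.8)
The plan is to insert the closed-form ansatz into the recursion \eqref{c4} and show that it collapses to the differential equation \eqref{m6}, exploiting the fact that \eqref{m5} is precisely the resummation that removes the dependence on the jet fibre coordinates $\partial_kA_l$. Throughout I would use the $C^\infty(J^1T^*U)$-linearity of the assignment $f\mapsto L_f$ noted at the end of the proof of Proposition~\ref{p1}, so that every term in \eqref{c4} is determined by the single tensor $L^{ij}$ on $J^1T^*U$ through $L_f^i=t^2\,L^{ij}\,\partial_jf$.

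First I would evaluate, along the constraint locus $\im(s_A)$, each cosymplectic bracket that occurs in \eqref{c4}. A direct computation from \eqref{PBq} gives
\begin{align*}
s_A^*\{\pi^*f,\pi^*g\}_{\omega^{-1}} &= t\,\underline{\theta}^{ij}(A)\,\partial_if\,\partial_jg \ , \\
s_A^*\{\pi^*f,(\Phi_A)_j\}_{\omega^{-1}} &= \partial_jf + t\,\gamma_j^k(A)\,\partial_kf - t\,\underline{\theta}^{kl}(A)\,\partial_kf\,\partial_lA_j \ , \\
s_A^*\{(\Phi_A)_i,(\Phi_A)_j\}_{\omega^{-1}} &= \partial_iA_j-\partial_jA_i + t\,\big(\underline{\theta}^{kl}(A)\,\partial_kA_i\,\partial_lA_j + \gamma_i^l(A)\,\partial_lA_j - \gamma_j^k(A)\,\partial_kA_i\big) \ ,
\end{align*}
together with their $\tilde\partial^i$-derivatives, obtained by differentiating the coefficients \eqref{l12} and \eqref{eq:gammaijxp} in $p$ before restricting to $p=A(x)$. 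Substituting these, the ansatz $L_f^i=t^2\,L^{ij}\,\partial_jf$, and the expression \eqref{l3} for $[\![f,g]\!]_\theta$ into \eqref{c4}, and treating the gauge variations $\delta_f^\theta L_g^i-\delta_g^\theta L_f^i$ through \eqref{l4} and \eqref{c20}, I would match the coefficients of the independent tensor structures built from $\partial f$ and $\partial g$. This turns \eqref{c4} into a single closed first-order equation for $L^{ij}$ on the jet space, involving the base derivatives $\partial_l$, the fibre derivatives $\partial_A^k$, and the jet derivatives $\partial_A^{k;l}$.

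The crucial step is to reduce this jet equation to the $\partial A$-independent equation \eqref{m6}. By construction \eqref{m5} is the formal geometric series solution of the linear relation
\begin{align*}
L^{ij} = \Lambdam^{ij} - t^2\,\Lambdam^{ik}\,\partial_kA_m\,L^{mj} \ ,
\end{align*}
so that $\Lambdam^{ij}=L^{ij}\big|_{\partial A=0}$ and $L=(\one+t^2\,\Lambdam\,\partial A)^{-1}\,\Lambdam$. Differentiating this relation in the jet directions yields the closed identity
\begin{align*}
\partial_A^{k;l}L^{ij} = -t^2\,L^{ik}\,L^{lj} \ ,
\end{align*}
while differentiating in the $A_k$ and $x^l$ directions dresses the corresponding derivatives of $\Lambdam$ by the resolvents $(\one+t^2\,\Lambdam\,\partial A)^{-1}$ and $(\one+t^2\,\partial A\,\Lambdam)^{-1}$. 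Feeding these identities back into the jet equation of the previous paragraph, every explicit occurrence of $\partial A$ and of the jet derivatives $\partial_A^{k;l}$ is resummed by the geometric series into these resolvents and cancels, leaving exactly \eqref{m6} for the seed $\Lambdam^{ij}(A)$. As a consistency check, the leading order \eqref{lambda2} corresponds to $\Lambdam^{ij}=-\tfrac12\,\Pim^{ijk}\,A_k+O(t)$, which solves \eqref{m6} to lowest order since $-\partial_A^k\underline{\theta}^{ij}(A)=t\,\Pim^{ijk}+O(t^2)$.

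It remains to establish that \eqref{m6} admits a solution. This is a first-order differential equation of the same structure as the symplectic-embedding equation \eqref{eq2}: expanding $\Lambdam^{ij}=\sum_{n\geq0}t^n\,\Lambdam^{ij}_{(n)}$, the right-hand side at each order in $t$ is built from lower-order coefficients, so the equation can be integrated recursively and a formal power series solution exists exactly as in Section~\ref{sec:Poissonembedding}. Since each term of \eqref{m5} is polynomial in $\partial A$, this produces a well-defined $L^{ij}\in C_{\rm pol}^\infty(J^1T^*U)[[t]]$, and hence the Lagrangian multipliers $L_f$, solving \eqref{c4}. I expect the main obstacle to lie in the bookkeeping of the previous paragraph: one must verify that the genuinely $\partial A$-dependent contributions — the constraint term $L_f^j\,L_g^k\,s_A^*\big(\tilde\partial^i\{(\Phi_A)_j,(\Phi_A)_k\}_{\omega^{-1}}\big)$ and the pieces of $\delta_f^\theta L_g^i$ proportional to $\partial_A^{k;l}L^{ij}$ — combine into precisely the nested products of \eqref{m5}, and that all second-derivative terms in $f$ and $g$ cancel so that $L$ may indeed be taken to depend only on the first jet of $A$.
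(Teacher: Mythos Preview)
Your proposal is correct and leads to the same conclusions as the paper, but the organization is inverted. The paper does not start from the ansatz \eqref{m5} and verify it; instead it decomposes the recursion \eqref{c4} according to the derivative structure in the gauge parameters. The terms proportional to $\partial_if\,\partial_jg$ with no $\partial A$ give \eqref{m6} directly, while the terms proportional to $\partial_i\partial_kf\,\partial_jg$ (and $\partial_if\,\partial_j\partial_kg$) are forced to vanish separately, and this yields the equation $\partial_A^{k;l}L^{ij}+t^2\,L^{ik}\,L^{lj}=0$, whose solution is then the geometric series \eqref{m5}. So where you take the resolvent identity $\partial_A^{k;l}L^{ij}=-t^2\,L^{ik}\,L^{lj}$ as a property of the ansatz and use it to strip the $\partial A$ dependence from \eqref{c4}, the paper obtains that identity as a \emph{consequence} of \eqref{c4}. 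Your verification route is slightly more economical once one already knows \eqref{m5}; the paper's route explains where \eqref{m5} comes from and makes clear that the second-derivative cancellation is a constraint rather than a check. Note also that the solvability of \eqref{m6} is treated separately in the paper as Corollary~\ref{cor:LambdamA}, not as part of this proposition.
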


\begin{proof}
First we introduce
\begin{equation}\label{m7}
L^{ij}(A,\partial A)=\Lambdam^{ij}(A)+\bar\Lambdam^{ij}(A,\partial A)
\ ,
\end{equation}
where $\Lambdam^{ij}(A)$ is a function of the gauge fields $A$ only,
whereas $\bar\Lambdam^{ij}(A,\partial A)$ depends on both gauge
fields $A$ and their first order derivatives $\partial A$. To obtain
an equation for the function $\Lambdam^{ij}(A)$, we substitute
(\ref{m7}) in (\ref{c4}) and collect the terms which contain only
first order derivatives of gauge parameters $\partial_i f\,\partial_j
g$ but do not contain derivatives of the gauge fields $\partial
A$.

The left-hand side of \eqref{c4} reads
\begin{align}\label{m8}
L^i_{t\,[\![f,g]\!]_\theta(A)} = t^2\,L^{im}\,\partial_m\big(s_A^*\{\pi^*f,\pi^*g\}_{\omega^{-1}} -
L_f^j\,L_g^k\,s_A^*\{(\Phi_A)_j,(\Phi_A)_k\}_{\omega^{-1}} \big) \ ,
\end{align}
and it contributes
$t^2\,\Lambdam^{lk}\,\partial_k\underline{\theta}^{ij}(A)$
to \eqref{m6}. The contributions on the right-hand side of \eqref{c4} from
\begin{align}
\delta^\theta_fL_g^i&=t^2\,\partial_A^l L^{ij}
                      \big(s_A^*\{\pi^*f,(\Phi_A)_l\}_{\omega^{-1}}+t^2\,L^{km}\,s_A^*\{(\Phi_A)_k,(\Phi_A)_l\}_{\omega^{-1}}\,
                      \partial_kf\big)\,\partial_j g \notag \\
& \quad \, + t^2\,\partial_A^{p;l} L^{ij}\,\partial_p                                                                                                                                  \big(s_A^*\{\pi^*f,(\Phi_A)_l\}_{\omega^{-1}}+t^2\,L^{km}\,s_A^*\{(\Phi_A)_k,(\Phi_A)_l\}_{\omega^{-1}}\, \partial_kf\big)\,\partial_j g \ ,\label{m9}
\end{align}
and from
\begin{align*}
s_A^*\big(\tilde\partial{}^i\{\pi^*f,\pi^*g\}_{\omega^{-1}}\big) =
  t\,\partial_A^i\underline{\theta}^{jk}(A)\,\partial_jf\,\partial_kg
  \ ,
\end{align*}
together yield
\begin{align*}
t\,\big(\delta^i_l+t\,\gamma^i_l(A)\big)\partial_A^l\Lambdam^{kj} -
  t\,\big(\delta^j_l+t\,\gamma^j_l(A)\big)\partial_A^l\Lambdam^{ki} +
  \partial_A^k\underline{\theta}^{ij}(A) \ . 
\end{align*}
The terms coming from
\begin{align*}
L_g^j\,s_A^*\big(\tilde\partial{}^i\{\pi^*f,(\Phi_A)_j\}_{\omega^{-1}}\big)
= t^3\,L^{lj}\,\partial^i_A\gamma^k_l(A)\,\partial_kf\,\partial_jg-t^3\,L^{lj} \,\partial_A^i\underline{\theta}^{km}(A) \,\partial_k f\,\partial_jg\,\partial_mA_l
\end{align*}
are given by
\begin{align*}
t^2\Lambdam^{lj}\,\partial_A^k\gamma_l^i(A) -
  t^2\Lambdam^{li}\,\partial_A^k\gamma_l^j(A) \ .
\end{align*}
There is no contribution from 
$L_f^j\,L_g^k\,s_A^*\big(\tilde\partial{}^i\{(\Phi_A)_j,(\Phi_A)_k\}_{\omega^{-1}}\big)$
on the right-hand side of \eqref{c4} since all of its terms contain
derivatives of the gauge fields $\partial A$. Finally, the
contributions from the second and third lines of \eqref{c4}:
\begin{align}
s_A^*\{\pi^*f,L_g^i\}_{\omega^{-1}} &=
                                      t^3\,\underline{\theta}^{kl}(A)\,\partial_kf\,\partial_l(L^{ij}\,\partial_jg)
                                      \ , \label{m13} \\[4pt]
  s_A^*\{L_f^i,\Phi_A(L_g)\}_{\omega^{-1}} &=
                                             t^4\,L^{kp}\,\big((\delta_k^l+t\,\gamma_k^l(A))\,\partial_l(L^{ij}\,\partial_jf)
                                             -
                                             t\,\underline{\theta}^{ml}(A)\,\partial_mA_k\,\partial_l(L^{ij}\,\partial_jf)\big)\,
                                             \partial_pg \ , \notag
\end{align}
read as
\begin{align*}
t^2\,\underline{\theta}^{il}(A)\,\partial_l\Lambdam^{kj}-t^2\,\underline{\theta}^{jl}(A)\,\partial_l\Lambda^{ki}-t^3\,\big(\delta^l_m+t\,\gamma^l_m(A)\big)\,\Lambdam^{mi}\,\partial_l\Lambdam^{kj}+t^3\,\big(\delta^l_m+t\,\gamma^l_m(A)\big)\,\Lambdam^{mj}\,\partial_l\Lambdam^{ki}
  \ .
\end{align*}
Altogether the contributions from such terms results in the
differential equation
\eqref{m6}. 
                                             
The contributions to (\ref{c4}) containing second order derivatives of
gauge parameters $\partial_i\partial_kf\,\partial_j g$ and $\partial_i
f\,\partial_j\partial_kg$ should vanish separately. Let us analyse the
contribution with $\partial_i\partial_kf\,\partial_j g$. The
contribution from the left-hand side of \eqref{c4} is given by
\eqref{m8} and reads
\begin{equation*}
t^3\,\big( L^{mk}\,\underline{\theta}^{ij}(A)-t^3\,L^{mk}\,L^{pi}\,L^{lj}\,s_A^*\{(\Phi_A)_p,(\Phi_A)_l\}_{\omega^{-1}}\big)\,\partial_i\partial_kf\,\partial_jg
\ .
\end{equation*}
On the right-hand side of \eqref{c4}, the contribution from
$\delta^\theta_f(L_g^i)-\delta^\theta_g(L_f^i)$ comes from the second
line of (\ref{m9}) and reads
\begin{equation*}
t^2\,\partial_A^{k;l}
L^{mj}\,\big(\delta^i_l+t\,\gamma^i_l(A)-t\,\underline{\theta}^{in}(A)\,\partial_nA_l+t^2\,L^{pi}\,s_A^*\{(\Phi_A)_p,(\Phi_A)_l\}_{\omega^{-1}}\big)\,
\partial_i\partial_kf\,\partial_jg \ .
\end{equation*}
The only remaining terms on the right-hand side of \eqref{c4} which
contain second order derivatives of $f$ and $g$ come from the second
and third lines, and from (\ref{m13}) one finds
\begin{equation*}
t^3\,\big(L^{pk}\,\underline{\theta}^{ij}(A)
+t\,L^{pi}\,L^{nj}\,(\delta^k_n+t\,\gamma^k_n(A))+t^2\,L^{pi}\,L^{nj}\,\underline{\theta}^{mk}(A)\,\partial_mA_n\big)\,\partial_i\partial_kf\,\partial_jg
\ .
\end{equation*}
Thus demanding that \eqref{c4} hold for such contributions leads to the equations
\begin{align*}
&\big(\partial_A^{k;l}L^{mj}+t^2\,L^{mk}\,L^{lj}\big)\,
                 \big(\delta^i_l+t\,\gamma^i_l(A)-t\,\underline{\theta}^{in}(A)\,\partial_nA_l+t^2\,L^{pi}\,s_A^*\{(\Phi_A)_p,(\Phi_A)_l\}_{\omega^{-1}}\big)
  \\
& \qquad +
       \big(\partial_A^{i;l}L^{mj}+t^2\,L^{mi}\,L^{lj}\big)\,\big(\delta^k_l+t\,\gamma^k_l(A)-t\,\underline{\theta}^{kn}(A)\,\partial_nA_l+t^2\,L^{pk}\,s_A^*\{(\Phi_A)_p,(\Phi_A)_l\}_{\omega^{-1}}\big)=0
       \ . 
\end{align*}
This is satisfied if
\begin{equation}\label{eqlambda}
\partial_A^{k;l}L^{ij}+t^2\,L^{ik}\,L^{lj}=0 \ ,
\end{equation}
or taking into account (\ref{m7}) we may rewrite it as
\begin{equation*}
\partial_A^{k;l}
\bar\Lambdam^{ij}+t^2\,\big(\Lambdam^{ik}+\bar\Lambdam^{ik}\big)\,
\big(\Lambdam^{lj}+\bar\Lambdam^{lj}\big)=0 \ .
\end{equation*}
This equation is solved by
\begin{equation*}
\bar\Lambdam^{ij}(A,\partial A)=\sum_{n=1}^{\infty}\,
(-t^2)^n\,\Lambdam^{ik_1}(A)\,\Lambdam^{m_1k_2}(A) \cdots
\Lambdam^{m_{n-1}k_n}(A)\,\Lambdam^{m_nj}(A)\, \partial_{k_1}
A_{m_1}\cdots \partial_{k_n}A_{m_n} \ ,
\end{equation*}
which yields the expansion (\ref{m5}).

To complete the proof we need to verify that the remaining terms in
(\ref{c4}) cancel due to (\ref{eqlambda}) and (\ref{m6}). This is
tedious but completely straightforward: for example, the term 
\begin{equation*}
-t^6\,L^{mk}\,\partial_k\big(L^{li}\,L^{pj}\big)\,s_A^*\{(\Phi_A)_l,(\Phi_A)_p\}_{\omega^{-1}}\,\partial_if\,\partial_jg
\end{equation*}
 coming from (\ref{m8}) precisely cancels the two terms
 \begin{equation*}
t^4\,\partial_A^{k;l}
L^{mj}\,\partial_kL^{pi}\,s_A^*\{(\Phi_A)_p,(\Phi_A)_l\}_{\omega^{-1}}\,
(\partial_if\,\partial_jg-
\partial_ig\,\partial_jf )
\end{equation*}
coming from (\ref{m9}). 
\end{proof}

It is now straightforward to construct the functions $\Lambdam^{ij}$
recursively from \eqref{m6}, and we have
\begin{corollary}\label{cor:LambdamA}
The functions $\Lambdam^{ij}\in
s_A^*C_{\rm pol}^\infty(T^*U)[[t]]$ can be calculated order by order
in $t$ as the formal power series
\begin{align*}
\Lambdam^{ij}(A) = -\frac12\,\Pim^{ijk}\, A_k-\sum_{n=1}^\infty\, \frac{t^{n}}{n+2} \,
  \Upm^{ijl|k_1\cdots k_{n}} \, A_l\,A_{k_1}\cdots A_{k_{n}} \ ,
\end{align*}
where the functions $\Upm^{ijl|k_1\cdots k_{n}}\in C^\infty(U)$ are
skew-symmetric in their indices $jl$, and are
given by explicit expressions involving the components of the
almost Poisson bivector $\theta$, its Jacobiator~$\Pim$, and their
derivatives.
\end{corollary}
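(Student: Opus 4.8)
The plan is to solve the first-order differential equation \eqref{m6} for $\Lambdam^{ij}$ recursively in powers of $t$. Motivated by the leading term and by the homogeneity of $\underline\theta$ and $\gamma$ recorded in \eqref{l12} and \eqref{gt}, I would posit that $\Lambdam^{ij}(A)=\sum_{n\geq0}t^n\,\Lambdam^{ij}_{(n)}(A)$ with each coefficient $\Lambdam^{ij}_{(n)}$ a homogeneous polynomial of degree $n+1$ in the fibre variables $A_l$. Substituting this ansatz into \eqref{m6} and collecting the order $t^{n+1}$ contribution, every term except the first bracket on the left-hand side carries an explicit extra power of $t$ (the cubic and quartic pieces come with $t^3$ and $t^4$), so it involves only the lower-order coefficients $\Lambdam_{(m)}$ with $m<n$. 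Hence at each order the equation collapses to a linear \emph{fibre-curl} equation $\partial_A^i\Lambdam^{kj}_{(n)}-\partial_A^j\Lambdam^{ki}_{(n)}=S^{k|ij}_{(n)}$, where the source $S^{k|ij}_{(n)}$ is antisymmetric in $ij$, homogeneous of degree $n$ in $A$, and assembled explicitly from $\theta$, $\Pim$, the fibre expansions of $\underline\theta$ and $\gamma$, and the previously determined $\Lambdam_{(m)}$.

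To integrate this, I would fix the spectator index $k$ and regard $j\mapsto\Lambdam^{kj}_{(n)}$ as a covector field on the fibre $\mathbb R^d_A$ whose fibre exterior derivative is the two-form $S^{k|\cdot\,\cdot}_{(n)}$. Applying the standard Poincar\'e homotopy (radial integration) in the fibre variables yields the explicit representative
\[
\Lambdam^{kj}_{(n)}(A)=\int_0^1 s\,S^{k|ij}_{(n)}(sA)\,A_i\,\dd s=\frac{1}{n+2}\,S^{k|ij}_{(n)}(A)\,A_i \ ,
\]
where the factor $\tfrac1{n+2}$ is produced by $\int_0^1 s^{n+1}\,\dd s$ acting on the degree-$n$ source. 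Reading off $\Upm^{ijl|k_1\cdots k_n}$ from $-S^{i|lj}_{(n)}$ gives precisely the claimed series, and the skew-symmetry of $\Upm$ in the pair $jl$ is inherited directly from the antisymmetry of the source in its last two indices; at $n=0$ one has $S^{k|ij}_{(0)}=\Pim^{ijk}$, which reproduces the leading term $-\tfrac12\,\Pim^{ijk}A_k$. The homotopy fixes one representative, and the residual freedom of adding a fibre-exact piece reflects the non-uniqueness of the symplectic embedding.

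The step I expect to be the real obstacle is the \emph{integrability} of \eqref{m6}: for the homotopy to deliver an honest solution the source must be fibre-closed at every order, i.e. the cyclic antisymmetrization $\partial_A^l S^{k|ij}_{(n)}+\partial_A^i S^{k|jl}_{(n)}+\partial_A^j S^{k|li}_{(n)}$ must vanish. This is not manifest from the form of $S$ and has to be proved by induction, differentiating the curl equation and feeding back the lower-order instances of \eqref{m6}; the underlying reason it holds is the Poisson integrability \eqref{eq:Omegamint} of $\omega^{-1}$, equivalently the Jacobi identity for the cosymplectic bracket, which is also what forces the closure relation \eqref{j3}. Once closedness is in hand, explicitness of the coefficients follows for free: each $S^{k|ij}_{(n)}$ is a finite polynomial in $\theta$, $\Pim$, the expansion coefficients $\gamma_i^{j|\cdots}$ and $\theta^{ij|\cdots}$, and their $x$-derivatives, so the $\Upm^{ijl|k_1\cdots k_n}$ are computable order by order, completing the proof.
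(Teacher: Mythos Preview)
Your proposal is correct and follows essentially the same route as the paper: expand $\Lambdam^{ij}$ order by order in $t$, reduce \eqref{m6} at each order to a fibre--curl equation $\partial_A^i\Lambdam^{kj}_n-\partial_A^j\Lambdam^{ki}_n=\Upm_n^{kij}$ with source built from lower orders, and integrate it. The only cosmetic difference is that you invoke the Poincar\'e homotopy explicitly to produce the factor $\tfrac1{n+2}$, whereas the paper writes the solution by inspection; and where you attribute integrability abstractly to $[\omega^{-1},\omega^{-1}]=0$, the paper checks it concretely at second order and shows that the cyclic condition \eqref{m18} collapses to the very equation \eqref{l9} that was used to determine $\theta^{lj|ki}$ in the symplectic embedding, then asserts the same mechanism persists at all orders.
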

\begin{proof}
We recall the expansions
\eqref{eq:gammaijxp} (satisfying the local symplectic embedding
equations \eqref{gtrec} and \eqref{h5}) and \eqref{l12}, and write
$\Lambdam^{ij}(A)$ as a formal power series of the form
\begin{align*}
  \Lambdam^{ij}=
  \sum_{n=1}^\infty\, t^{n-1} \, \Lambdam_n^{ij} \qquad \mbox{with}
  \quad \Lambdam^{ij}_n=\Lambdam^{ij|k_1\cdots k_n}\,A_{k_1}\cdots
  A_{k_n} \ ,
\end{align*}
where $\Lambdam^{ij|k_1\cdots k_n}\in C^\infty(U)$ for each
$n\geq1$. At first non-trivial order the differential equation \eqref{m6} reads
\begin{equation*}
\partial^i_A\Lambdam^{kj}_1-\partial^j_A\Lambdam^{ki}_1=\Pim^{ijk} \ ,
\end{equation*}
which has solution $
  \Lambdam^{ij}_1=-\tfrac12\, \Pim^{ijk}\, A_k
$ as expected from \eqref{lambda2}.
At higher orders $t^{n-1}$ for $n\geq2$, the differential equation (\ref{m6}) can be
schematically represented as 
\begin{equation}\label{m17}
\partial^i_A\Lambdam^{kj}_n-\partial^j_A\Lambdam^{ki}_n=\Upm_n^{kij}
\qquad \mbox{with} \quad \Upm_n^{kij}=\Upm^{kij|k_1\cdots
  k_{n-1}}\,A_{k_1}\cdots A_{k_{n-1}} \ ,
\end{equation}
where on the right-hand side $\Upm_n^{kij}$ is constructed from
the previously determined lower order solutions $\Lambdam^{kj}_m$ with
$m<n$, and $\Upm^{kij|k_1\cdots
  k_{n-1}}\in C^\infty(U)$. The integrability condition for the equation (\ref{m17}) reads 
\begin{equation}\label{m18}
\partial^k_A\Upm_n^{lij}+\partial^i_A\Upm_n^{ljk}+\partial^j_A\Upm_n^{lki}=0
\ .
\end{equation}

At second order,
after careful calculation and simplification one obtains
\begin{align}
\Upm^{ilj|k}&=
                -2\,\theta^{lj|ki}+\tfrac12\,\Pim^{ikm}\,\partial_m\theta^{lj}+\tfrac14\,
                \Pim^{ilm}\,\partial_m\theta^{jk}-\tfrac14\,
                \Pim^{ijm}\,\partial_m\theta^{lk} \notag \\
& \quad +\tfrac14\, \Pim^{jkm}\,\partial_m\theta^{li}-\tfrac14\,
                                                              \Pim^{lkm}\,\partial_m\theta^{ji}+\tfrac12\,
                                                              \theta^{lm}\,\partial_m\Pim^{ijk}-\tfrac12\,\theta^{jm}\,\partial_m\Pim^{ilk}
                                                              \ , \label{l7}
\end{align}
where the function $\theta^{lj|ki}$ determines the order $t^2$
contribution to the expansion (\ref{l12}). The integrability condition
(\ref{m18}) reads 
\begin{equation}\label{l8}
\Upm^{ilj|k}+\Upm^{ikl|j}+\Upm^{ijk|l}=0 \ .
\end{equation}
Substituting the explicit form of $\Upm^{ilj|k}$ from (\ref{l7}) into
(\ref{l8}) we arrive at
\begin{equation}\label{l9}
 2\,\big( \theta^{lj|ki}+\theta^{kl|ji}+\theta^{jk|li}\big)-F^{ljki}=0 \ ,
\end{equation}
with
\begin{align*}
  F^{ljki}&=\Pim^{kmi}\,\partial_m\theta^{lj}+\Pim^{jmi}\,\partial_m\theta^{kl}+\Pim^{lmi}\,\partial_m\theta^{jk}+\theta^{km}\,\partial_m\Pim^{lji}\\
& \quad
                                                                                                                                                        +\theta^{jm}\,\partial_m\Pim^{kli}+\theta^{lm}\,\partial_m\Pim^{jki}+\tfrac{1}{2}\,\Pim^{ljm}\,\partial_m\theta^{ki}+\tfrac{1}{2}\,\Pim^{klm}\,\partial_m\theta^{ji}+\tfrac{1}{2}\,\Pim^{jkm}\,\partial_m\theta^{li}
                                                                                                                                                        \
                                                                                                                                                        .
\end{align*}
The expression (\ref{l9}) is exactly the equation from which the
function $\theta^{lj|ki}$ was determined in
\cite{Kupriyanov:2018yaj}, as given explicitly
in \eqref{Theta4}, whose integrability condition is precisely the
integrability identity \eqref{eq:Piint} for the Jacobiator $\Pim$. So the
integrability condition (\ref{l8}) is indeed satisfied and  
\begin{align*}
 \Lambdam_2^{il}=-\tfrac13\, \Upm^{ilj|k}\,A_j\,A_k \ .
\end{align*}

This recursive construction can be extended in exactly the same way to
higher order calculations. The integrability condition (\ref{m18}) is
always satisfied as a consequence of the definition of the functions
$\Upm_n^{kij}$ from (\ref{m6}) and the construction of the symplectic
embedding for the almost Poisson structure $\theta$
from~\cite{Kupriyanov:2018yaj}.
\end{proof}

\newsection{Homotopy algebra actions}
\label{sec:Linfinity}

The reader versed in the BV formalism will have already
noticed a strong resemblance between our approach to almost Poisson
gauge theories based on symplectic embeddings and the approach to
generalized gauge symmetries based on $L_\infty$-algebras. In this
framework, the classical BRST construction amounts to quotienting a
Poisson algebra of smooth functions by the ideal of functions which
vanish on a constraint locus through a process of homological
reduction to achieve the gauge closure condition. In this language,
the concept of field dependent gauge transformation that we introduced
in Section~\ref{sec:formulationquasi} is very natural.

In this section we will show that our almost Poisson gauge symmetries,
which do not arise from any Lie algebra action, in fact arise from actions
of $L_\infty$-algebras. For this, we provide a dictionary between the symplectic
embeddings proposed in the present paper and the corresponding
$L_\infty$-algebras of generalized gauge
transformations~\cite{Fulp:2002kk}. This can be achieved by
identifying the gauge
variations defined by (\ref{c1}), and the closure bracket which is
determined in (\ref{c3}), with the corresponding objects defined in
terms of $L_\infty$-algebras. As an interesting application of these constructions, we obtain a new perspective on the deformation quantization of the exterior algebra of differential forms on an arbitrary almost Poisson manifold, whose semi-classical limit is described by a $P_\infty$-algebra.

\subsection{$L_\infty$-algebras and generalized gauge symmetries}

Let us start by fixing definitions and conventions. A (\emph{flat})
\emph{$L_\infty$-algebra} (also known as a \emph{strong homotopy Lie
  algebra}) is a graded vector space $V$ over a field of
characteristic zero together with a sequence of linear maps
$\ell_n:V^{\otimes n}\to V$ of degree $2-n$, for $n=1,2,\dots$, which
satisfy two properties. Firstly, $\ell_n$
are graded skew-symmetric,
\begin{align*}
\ell_n(\dots, v_i, v_{i+1},\dots )= -(-1)^{|v_i|\,|v_{i+1}|} \,
  \ell_n(\dots, v_{i+1}, v_i,\dots) 
\end{align*}
for all $v_1,\dots,v_n\in V$, where $|v|$ denotes the degree of a
homogeneous element $v\in V$. Secondly, the linear map defined by
\begin{align}\label{eq:altmap}
\Lie_n(v_1\otimes\cdots\otimes v_n) := \sum_{k=1}^n\,
  \frac{(-1)^{k\,(n-k)}}{k!\,(n-k)!} \,
  \ell_{n-k+1}\big(\ell_k(v_1,\dots, v_k),
  v_{k+1},\dots, v_n\big) 
\end{align}
vanishes on the image of the alternatization map
\begin{align*}
{\sf Alt}_n(v_1\otimes\cdots\otimes v_n) = \sum_{\sigma\in S_n} \, {\rm
  sgn}(\sigma) \, v_{\sigma(1)}\otimes \cdots \otimes v_{\sigma(n)} \ ,
\end{align*}
where the sum runs over permutations $\sigma$ of degree $n$ with sign
${\rm sgn}(\sigma)$. 
We denote the alternatization of the map
\eqref{eq:altmap} by
$\CJ_n=\Lie_n\circ{\sf Alt}_n:V^{\otimes n}\to V$. Then the relations $\CJ_n=0$ are called \emph{homotopy Jacobi
  identities}; when evaluated explicitly on elements they involve 
cumbersome sign factors which are determined by the usual Koszul sign
rules for skew-symmetric maps of graded vector spaces.

In particular, the first relation $\CJ_1=0$ implies that $\ell_1$ is a
differential making $(V,\ell_1)$ into a cochain complex, while $\CJ_2=0$ implies that $\ell_1$ is a (graded)
derivation of the bracket $\ell_2$, i.e. $\ell_2$ is a cochain
map. The third relation $\CJ_3=0$ then implies that the bracket $\ell_2$ obeys the Jacobi identity
up to exact terms, i.e. $\ell_2$ induces a (graded) Lie bracket on the
cohomology of $\ell_1$. Differential graded Lie algebras can be regarded as
$L_\infty$-algebras where $\ell_n=0$ for all $n\geq3$.

Generalized gauge symmetries of classical field theories on a manifold
$M$ which are irreducible can be encoded
in $2$-term $L_\infty$-algebras~\cite{Fulp:2002kk,Hohm:2017pnh}. For the purposes of the present paper,
the underlying graded vector space encoding the kinematical data on an
open subset $U\subseteq M$ has non-zero homogeneous subspaces
only in degrees~$0$ and~$1$, and is given by
\begin{align} \label{eq:2term}
V = C^\infty(U) \oplus \Omega^1(U) \ ,
\end{align}
with the grading provided by differential form degree. The gauge
variations are then encoded by an $L_\infty$-structure
$(\ell_n)_{n=1}^\infty$ on $V$ with
$\ell_{n+1}(f, A^{\otimes n})\in\Omega^1(U)$, for $f\in
C^\infty(U)$ and $A\in\Omega^1(U)$, as the
formal power series
\begin{align}
\delta_f A &= \sum_{n=0}^\infty\, \frac{t^n}{n!} \,
  (-1)^{\frac{n\,(n-1)}2} \, \ell_{n+1}(f, A^{\otimes n})
  \nonumber \\[4pt] &=
  \ell_1(f) + t\,\ell_2(f, A)-\tfrac{t^2}2\,\ell_3(f,
  A, A) + O(t^3) \label{eq:Linftygt}
\end{align}
in $\Gamma_{\rm pol}(J^1T^*U)[[t]]$. For ``standard'' gauge symmetries
which only involve a finite number of brackets, one can set $t=1$ and
avoid the use of formal power series as well as jet coordinates
altogether, but the constructions of this paper necessitate their
usage in general.

For the graded vector space \eqref{eq:2term}, since $\ell_n$ is of
degree $2-n$, it vanishes unless its entries contain at least one
and at most two gauge parameters. Hence the only non-trivial
homotopy Jacobi identities for the $L_\infty$-structure on $V$ involve
two and three gauge parameters, that is, ${\cal J}_{n+2}(f, g, 
A^{\otimes n} )=0$ and $\CJ_{n+3}(f,g,h,A^{\otimes n})=0$.
It was shown in~\cite{Berends:1984rq,Fulp:2002kk,Hohm:2017pnh} that
the homotopy Jacobi identities 
involving two gauge parameters imply the
closure of the symmetry variations 
\begin{equation*}
                      [\one+\delta_{f},\one+\delta_g] A
                      =\delta_{t\,[\![f,g]\!]( A)} A \ ,
\end{equation*}
with the formal power series
\begin{align}
[\![f,g]\!]( A)  &=-\sum_{n= 0}^\infty \, \frac{t^n}{ n!} \, 
            (-1)^{\frac{n\,(n-1)}{ 2}}
            \, \ell_{n+2}(f, g, A^{\otimes n} ) \nonumber \\[4pt] &=
            -\ell_2(f, g) - t\, \ell_3(f, g, A) +
            \tfrac{t^2}2\,\ell_4(f, g, A, A) +
            O(t^3) \label{eq:Linftyclosure}
          \end{align}
          in $C^\infty_{\rm pol}(J^1T^*U)[[t]]$. The homotopy Jacobi identities
          involving three gauge parameters then guarantee that the strict Jacobi identities
          \begin{align*}
{\sf Cyc}_{f,g,h}\,\big[\one+\delta_h,[\one+\delta_f,\one+\delta_g]\big]A=0
          \end{align*}
          hold for any triple of gauge variations.

The following explicit formulas are useful for concrete
checks of the homotopy Jacobi identities for gauge transformations.
\begin{lemma}\label{lem:homotopygauge}
For $f,g,h\in C^\infty(U)$ and $A\in\Omega^1(U)$, the
$L_\infty$-relations ${\cal J}_{n+2}(f, g, 
A^{\otimes n} )=0$ and $\CJ_{n+3}(f,g,h,A^{\otimes n})=0$ for $n\geq1$ 
are given explicitly by
\begin{align}
\CJ_{n+2}\big(f,g,A^{\otimes n}\big) &=
                                       \ell_1\big(\ell_{n+2}(f,g,A^{\otimes
                                       n})\big) -
                                       (-1)^{n}\,\ell_{n+2}\big(\ell_1(f),g,A^{\otimes
                                       n}\big)-(-1)^{n}\,\ell_{n+2}\big(f,\ell_1(g),A^{\otimes
                                       n}\big) \notag\\
& \quad \, +\sum_{i=1}^{n}\,(-1)^{(i+1)\,(n-i+1)} \,
                                                          \bigg[\binom{n}{i-1}\,\ell_{n-i+2}\big(\ell_{i+1}(f,g,A^{\otimes i-1}),A^{\otimes n-i+1}\big) \notag\\
& \quad \, \hspace{4cm} +(-1)^{i} \,
                                                                                                                                                                   \binom{n}{i}\,\ell_{n-i+2}\big(\ell_{i+1}(f,A^{\otimes i}),g,A^{\otimes n-i}\big) \label{Jnfg}\\
& \quad \, \hspace{4.5cm} -(-1)^{i}\,\binom{n}{i}\,\ell_{n-i+2}\big(\ell_{i+1}(g,A^{\otimes
   i}),f,A^{\otimes n-i}\big)\bigg]
   \ , \notag
\end{align}
and
\begin{align}
  & \CJ_{n+3}\big(f,g,h,A^{\otimes n}\big) \notag \\[4pt]
  & \hspace{0.5cm} = {\sf
                                         Cyc}_{f,g,h}\bigg((-1)^{n}\,\ell_{n+3}\big(\ell_1(f),g,h,A^{\otimes
                                         n}\big)
    +(-1)^n\,\ell_{2}\big(\ell_{n+2}(f,g,A^{\otimes n}),h\big) \notag\\
& \hspace{2.5cm} +\sum_{i=1}^{n}\,(-1)^{(i+1)\,(n-i)}\, \bigg[\binom{n}{i}\,\ell_{n-i+3}\big(\ell_{i+1}(f,A^{\otimes i}),g,h,A^{\otimes n-i}\big) \label{Jnfgh}\\
& \quad \, \hspace{5.5cm} -
                                                                                                                                                                                                   (-1)^{i}\,\binom{n}{i-1}\,\ell_{n-i+3}\big(\ell_{i+1}(f,g,A^{\otimes
                                                                                                                                                                                                   i-1}),h,A^{\otimes
                                                                                                                                                                                                   n-i+1}\big)
                                                                                                                                                             \bigg]
                                                                                                                                                             \bigg)
                                                                                                                                                             \
                                                                                                                                                             . \notag
\end{align}
\end{lemma}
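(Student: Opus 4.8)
The plan is to prove the lemma by directly expanding the homotopy Jacobi map $\CJ_N=\Lie_N\circ{\sf Alt}_N$ from \eqref{eq:altmap} on the two families of homogeneous inputs $(f,g,A^{\otimes n})$ and $(f,g,h,A^{\otimes n})$, and then discarding the contributions that vanish for degree reasons on the two-term complex \eqref{eq:2term}. The single structural fact that does all the work is this: since $\ell_k$ has degree $2-k$ and $V$ is concentrated in degrees $0$ and $1$, a bracket $\ell_k$ evaluated on $j$ gauge parameters (degree-$0$ entries) and $k-j$ one-forms (degree-$1$ entries) produces an element of degree $2-j$, which lies in $V$ only for $j\in\{1,2\}$. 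Hence every nonvanishing bracket carries \emph{exactly one or two} gauge parameters; in particular $\ell_1$ acts as the de Rham differential on functions and annihilates one-forms.

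First I would classify the surviving terms of $\Lie_N$ according to how the gauge parameters distribute between the inner bracket $\ell_k$ and the outer bracket $\ell_{N-k+1}$. For $\CJ_{n+2}(f,g,A^{\otimes n})$ there are two parameters to place: both inside $\ell_k$ (forcing $\ell_k$ to have degree-$0$ output, which then serves as the lone parameter of the outer bracket), producing the terms $\ell_{n-i+2}(\ell_{i+1}(f,g,A^{\otimes i-1}),A^{\otimes n-i+1})$; one inside and one outside, producing the families $\ell_{n-i+2}(\ell_{i+1}(f,A^{\otimes i}),g,A^{\otimes n-i})$ and its $f\leftrightarrow g$ partner; and the two boundary $\ell_1$-contributions $\ell_1(\ell_{n+2}(f,g,A^{\otimes n}))$ and $\ell_{n+2}(\ell_1(f),g,A^{\otimes n})$ together with $f\leftrightarrow g$. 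Every other distribution—in particular any inner bracket built purely from one-forms—has output in degree $2$ and drops out, which is exactly why \eqref{Jnfg} contains only these term types. For $\CJ_{n+3}(f,g,h,A^{\otimes n})$ the same bookkeeping with three parameters produces the terms listed in \eqref{Jnfgh}, and the three inequivalent choices of which parameter is singled out as the one that does \emph{not} enter a common bracket with the others is precisely what assembles the cyclic sum ${\sf Cyc}_{f,g,h}$.

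Next I would carry out the combinatorics that turns the full unshuffle sum into the binomial-weighted expressions. The crucial observation is that the $n$ one-forms are identical and of odd degree, so transposing two of them contributes a sign $-1$ from ${\sf Alt}_N$ which is exactly cancelled by the Koszul sign $-(-1)^{|A|\,|A|}=+1$ incurred when they are reordered inside a graded skew-symmetric $\ell_k$; the $A$'s therefore behave symmetrically and the only genuine freedom is how many of them enter the inner bracket. Counting these choices collapses the factorials $k!\,(N-k)!$ in \eqref{eq:altmap} against the number of permutations yielding each term, leaving precisely the coefficients $\binom{n}{i}$ and $\binom{n}{i-1}$ of \eqref{Jnfg} and \eqref{Jnfgh}.

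The main obstacle is the sign bookkeeping. What must be verified is that the product of the permutation sign from ${\sf Alt}_N$, the prefactor $(-1)^{k\,(N-k)}$ in $\Lie_N$, and the accumulated Koszul signs from moving the odd one-forms and the degree-$0$ parameters into their slots collapses to the clean factors $(-1)^{(i+1)\,(n-i+1)}$ in \eqref{Jnfg} and $(-1)^{(i+1)\,(n-i)}$ in \eqref{Jnfgh}. I would handle this by fixing, for each surviving term type and each index $i$, a canonical representative permutation placing the arguments in the displayed order, computing its signed value once, and then invoking the symmetry of the $A$'s established above to conclude that every equivalent permutation contributes identically. The $\ell_1$ terms need separate care, since $\ell_1$ raises degree and thereby shifts the parity of the slot it occupies; tracking this shift is what isolates the factors $(-1)^n$ multiplying the differential terms. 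Once these representative signs are checked, the two formulas follow.
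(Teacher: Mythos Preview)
Your proposal is correct and follows essentially the same approach as the paper's proof: both arguments classify the surviving terms in $\CJ_N$ by how the gauge parameters distribute between the inner and outer brackets, use the degree constraint on the two-term complex to discard $\ell_k(A^{\otimes k})$ and $\ell_1(A)$, and then collapse the unshuffle sum to binomial multiplicities via the identical $A$-entries. The paper organizes the count slightly differently by working directly with the $\binom{n+2}{k}$ inequivalent splittings and verifying the Vandermonde-type identity $\binom{n}{k-2}+2\binom{n}{k-1}+\binom{n}{k}=\binom{n+2}{k}$, and is terser about the signs (noting only that the extra $(-1)^i$ comes from moving $f$ or $g$ past $i$ one-forms), but the logic is identical to yours.
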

\begin{proof}
The alternatization of the sum $\Lie_{n+2}(f,g,A^{\otimes n})$ in
(\ref{eq:altmap}) involves, for fixed $n$ and $k$, a sum over $\binom{n+2}{k}=(n+2)!/k!\,(n+2-k)!$
inequivalent splittings of permutations $\sigma\in S_{n+2}$~\cite{Hohm:2017pnh}. If $k=1$ the contribution to
(\ref{eq:altmap}) is $\ell_{n+2}(\ell_1(f),g,A^{\otimes
  n})+\ell_{n+2}(f,\ell_1(g),A^{\otimes n})$, and when $k=n+2$ the
contribution is $\ell_1(\ell_{n+2}(f,g,A^{\otimes n}))$; note that
terms such as $\ell_{n+2}(f,g,\ell_1(A),A^{\otimes n-1})$ do not
appear as $\ell_1(A)=0$ for degree reasons. 
If $k\neq 1,n+2$ then, taking into account that there are $n$ identical
entries $A$, the sum over inequivalent splittings of permutations
$\sigma\in S_{n+2}$  will contain $\binom{n}{k-2}$ contributions of the type
$\ell_{n-k+3}(\ell_k(f,g,A^{\otimes k-2}),A^{\otimes n-k+2})$,
$\binom{n}{k-1}$ contributions of each type
$\ell_{n-k+3}(\ell_k(f,A^{\otimes k-1}),g,A^{\otimes n-k+1})$ and $\ell_{n-k+3}(\ell_k(g,A^{\otimes k-1}),f,A^{\otimes n-k+1})$, 
and also $\binom{n}{k}$ elements $\ell_{n-k+3}(\ell_k(A^{\otimes
  k}),f,g,A^{\otimes n-k})$; these latter contributions however vanish
as $\ell_k(A^{\otimes k})=0$ for degree reasons. Altogether, the total
number of such contributions is
\begin{equation*}
\binom{n}{k-2}+\binom{n}{k-1}+\binom{n}{k-1}+\binom{n}{k}=\binom{n+2}{k}
\ ,
\end{equation*}
which is exactly the number of all contributions to the sum over
$\sigma\in S_{n+2}$. This results in
\eqref{Jnfg}, where the sign factor $(-1)^i$ in the last two lines appears from moving $f$ and $g$ from the second argument to the $(i{+}1)$-th argument of the  $(n-i+2)$-bracket. Following the same line of argument we find
\eqref{Jnfgh}. 
\end{proof}

\subsection{$L_\infty$-structures on Poisson gauge algebroids}
\label{sec:PoissonLinfty}

It is instructive to start with the simpler case of Poisson gauge transformations
from Section~\ref{sec:Poissongauge}, where we know
$[\![f,g]\!]( A)  =\{f,g\}_\theta$ to all orders and a Poisson gauge
symmetry corresponds to a Lie algebra structure on the graded vector
space $V$. Geometrically, this structure is
encoded in the action algebroid
\begin{align*}
C^\infty(U)\ltimes \Omega^1(U) \longrightarrow \Omega^1(U)
\end{align*}
corresponding to the Lie algebra 
$(C^\infty(U),\{\,\cdot\,,\,\cdot\,\}_\theta)$ and the Lie module
$\Omega^1(U)$ over $C^\infty(U)$; for the moment we drop the formal
power series extension to streamline the presentation. This is the Lie
algebroid with trivial vector bundle $C^\infty(U)\times\Omega^1(U)$
over $\Omega^1(U)$ (regarded as a Fr\'echet space with the weak
Whitney $C^\infty$-topology), whose anchor map
$\rho:C^\infty(U)\times\Omega^1(U)\to T\Omega^1(U)$ sends a pair
$(f,A)$ to the gauge variation $\delta^\theta_fA$ of
Definition~\ref{def:Poissongauge}, and whose Lie bracket is induced by the
Poisson bracket $\{\,\cdot\,,\,\cdot\,\}_\theta$ and the gauge
variations. Generally, any Lie algebroid gives rise to a cochain
complex (its Chevalley-Eilenberg algebra), and in the case of an action algebroid this yields the BRST
complex of a classical field theory, which is dual to a gauge
$L_\infty$-algebra (see e.g.~\cite{Jurco:2018sby}).

In fact, the proper reinstatement of the formal power series
extension, as required by the symplectic embedding approach, into this Lie
algebroid perspective necessitates the use of $L_\infty$-algebras. For
this, let us note the following useful geometric way of thinking 
about the `Taylor expansion' of the twisting one-forms
$s_A^*\gamma(\dd f,\,\cdot\,)$ in \eqref{gtA} which are induced by
\eqref{eq:gammaijxp}. Without loss of generality we can assume that
the functions $\gamma_i^{j|i_1\cdots
  i_n} \in C^\infty(U)$ are symmetric in their last $n$ indices, and introduce the
$(n+1,1)$-tensor fields
$\gamma^{(n)}\in\Omega^1\big(U,\mfX(U)\otimes\mathfrak{X}(U)^{\odot n}\big)$
which in local coordinates read as
\begin{align*}
  \gamma^{(n)} = n!\,\gamma_i^{j|i_1\cdots
  i_n}(x) \,
\partial_j\otimes(\partial_{i_1}\odot\cdots \odot\partial_{i_n})\otimes \dd
  x^i \ .
\end{align*}
Then
\begin{align}\label{eq:sAgammaformal}
s_A^*\gamma(\dd f,\,\cdot\,) = \sum_{n=1}^\infty\, \frac{t^{n-1}}{n!} \,
  \gamma^{(n)}\big(\dd f,A^{\otimes n}\big)
\end{align}
as a formal power series in $\Omega^1(U)[[t]]$.
\begin{proposition}\label{prop:PoissonLinfty}
Let $(T^*M,\omega)$ be a local symplectic embedding of a Poisson manifold $(M,\theta)$, and let $U\subseteq M$ be an
open subset. Then there is an $L_\infty$-structure
$(\ell_n^{\,\theta})_{n=1}^\infty$ on
$C^\infty(U)\oplus\Omega^1(U)$, unique up to
$L_\infty$-quasi-isomorphism, which induces the Poisson gauge
algebroid constructed by Proposition~\ref{APhi} with non-vanishing
maps on coincident degree~$1$ entries given by
\begin{align*}
  \ell^{\,\theta}_1(f) & = \dd f \ , \\[4pt]
  \ell^{\,\theta}_2(f, g)& = -\{f,g\}_\theta \ , \\[4pt]
  \ell^{\,\theta}_2(f, A) & =  \{A,f\}_\theta +\gamma^{(1)}(\dd f,A) \ , \\[4pt]
  \ell_{n+1}^{\,\theta}\big(f, A^{\otimes n}\big) & =
                                          (-1)^{\frac{n\,(n-1)}2}
                                          \, 
                                          \gamma^{(n)}\big(\dd
                                          f,A^{\otimes n}\big) \ ,
\end{align*}
and skew-symmetry imposed by definition, for $n\geq2$, $f,g\in
C^\infty(U)$, and $A\in\Omega^1(U)$. 
\end{proposition}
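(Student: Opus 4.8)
The plan is to read off the candidate brackets directly from the gauge transformation and closure data provided by Proposition~\ref{APhi}, to argue that these are the only brackets that can be non-zero, and then to deduce the homotopy Jacobi identities from the closure of the Poisson gauge algebra together with the Jacobi identity for $\{\,\cdot\,,\,\cdot\,\}_\theta$. First I would exploit the grading of $V=C^\infty(U)\oplus\Omega^1(U)$ by form degree: since $\ell_n$ has degree $2-n$, a bracket with $k$ entries in degree~$0$ and $n-k$ entries in degree~$1$ has output in degree $2-k$, so that, as $V$ is concentrated in degrees~$0$ and~$1$, only $k=1$ and $k=2$ can yield a non-zero result. Hence the whole $L_\infty$-structure is carried by the brackets $\ell_{n+1}(f,A^{\otimes n})\in\Omega^1(U)$ and $\ell_{n+2}(f,g,A^{\otimes n})\in C^\infty(U)$, which are precisely the data entering \eqref{eq:Linftygt} and \eqref{eq:Linftyclosure}.

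I would then \emph{define} the brackets by matching these expansions against the explicit gauge variation \eqref{gtA} and the closure bracket $[\![f,g]\!]_\theta(A)=\{f,g\}_\theta$ of Proposition~\ref{APhi}. Substituting the formal series \eqref{eq:sAgammaformal} for $s_A^*\gamma(\dd f,\,\cdot\,)$ into \eqref{gtA} and comparing coefficients of $t^n$ with \eqref{eq:Linftygt} yields $\ell_1(f)=\dd f$, $\ell_2(f,A)=\{A,f\}_\theta+\gamma^{(1)}(\dd f,A)$, and $\ell_{n+1}(f,A^{\otimes n})=(-1)^{n(n-1)/2}\,\gamma^{(n)}(\dd f,A^{\otimes n})$ for $n\geq2$, the sign $(-1)^{n(n-1)/2}$ being exactly what is needed to cancel the corresponding factor in \eqref{eq:Linftygt} since $(-1)^{n(n-1)}=1$. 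Likewise, as $[\![f,g]\!]_\theta(A)=\{f,g\}_\theta$ is independent of $A$, comparison with \eqref{eq:Linftyclosure} forces $\ell_2(f,g)=-\{f,g\}_\theta$ and $\ell_{n+2}(f,g,A^{\otimes n})=0$ for all $n\geq1$, reproducing the listed brackets.

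It remains to check the identities $\CJ_{n+2}(f,g,A^{\otimes n})=0$ and $\CJ_{n+3}(f,g,h,A^{\otimes n})=0$. Rather than grinding through Lemma~\ref{lem:homotopygauge} term by term, I would invoke the order-by-order correspondence recalled before that lemma: with the brackets fixed by \eqref{eq:Linftygt}--\eqref{eq:Linftyclosure}, the coefficient of $t^{n+1}$ in the closure relation \eqref{closure} is, up to a non-zero factor, the full polarization of $\CJ_{n+2}(f,g,A^{\otimes n})$, so the two-parameter identities are equivalent to closure, which Proposition~\ref{APhi} already guarantees. For the three-parameter identities I would use that in the Poisson case the closure bracket is field-independent, so the gauge variations furnish a genuine representation of the Lie algebra $(C^\infty(U),\{\,\cdot\,,\,\cdot\,\}_\theta)$; applying closure twice gives ${\sf Cyc}_{f,g,h}\,[\one+\delta^\theta_h,[\one+\delta^\theta_f,\one+\delta^\theta_g]]A=\delta^\theta_{t^2\,{\sf Cyc}_{f,g,h}\{h,\{f,g\}_\theta\}_\theta}A$, which vanishes by the Poisson Jacobi identity (valid since $\Pim=0$). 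By the same polarization argument this strict Jacobi identity is equivalent to $\CJ_{n+3}=0$, completing the verification that the $\ell^{\,\theta}_n$ define an $L_\infty$-structure.

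Finally, uniqueness up to $L_\infty$-quasi-isomorphism follows because every such structure induces the \emph{same} strict object, namely the Poisson gauge action Lie algebroid $C^\infty(U)\ltimes\Omega^1(U)\to\Omega^1(U)$ of Section~\ref{sec:PoissonLinfty}; the dependence on the non-unique choice of symplectic embedding enters only through the higher homotopies $\gamma^{(n)}$, which are the data of one particular $L_\infty$-model of this fixed algebroid and are therefore related by quasi-isomorphisms. I expect the main obstacle to reside precisely here and in the bookkeeping of Koszul signs: making the passage between the closure/Jacobi relations and the homotopy Jacobi identities rigorous requires checking that the combinatorial and sign conventions in \eqref{eq:Linftygt}--\eqref{eq:Linftyclosure} and in Lemma~\ref{lem:homotopygauge} match those implicit in Proposition~\ref{APhi}, while the quasi-isomorphism statement needs the identification of the gauge $L_\infty$-algebra with the Chevalley--Eilenberg (BRST) model of the action algebroid, which is unique up to homotopy.
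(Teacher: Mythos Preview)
Your proposal is correct and follows essentially the same strategy as the paper for deriving the brackets: both read them off by matching the gauge variation \eqref{gtA} against the expansion \eqref{eq:Linftygt} via \eqref{eq:sAgammaformal}, and the closure bracket against \eqref{eq:Linftyclosure}. Where you argue the homotopy Jacobi identities directly from closure and the Poisson Jacobi identity, the paper simply invokes \cite[Theorem~2]{Fulp:2002kk} for the special case of gauge transformations generated by a Lie algebra action; your argument is precisely the content that theorem packages, and indeed Remark~\ref{rem:ell2theta} of the paper carries out the explicit local-coordinate check that $\CJ^{\,\theta}_{n+2}(f,g,A^{\otimes n})=0$ is equivalent to the symplectic embedding equations \eqref{h5}, confirming your polarization reasoning.

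The one substantive difference is the uniqueness argument. The paper proceeds concretely via Seiberg-Witten maps: different completions of the brackets are related by invertible field redefinitions $\chi:\Gamma_{\rm pol}(J^1T^*U)[[t]]\to\Gamma_{\rm pol}(J^1T^*U)[[t]]$ fixing gauge parameters and deforming the identity on gauge fields, and one verifies directly that the conjugated variations $\hat\delta^\theta_f:=\chi\circ\delta^\theta_f\circ\chi^{-1}$ again close the same Poisson gauge algebra, then cites \cite{BBKT} for the identification of such field redefinitions with $L_\infty$-quasi-isomorphisms. Your route via the action Lie algebroid and its Chevalley--Eilenberg model is morally equivalent but less explicit; the paper's Seiberg-Witten computation is exactly what pins down the quasi-isomorphism you anticipate and supplies the concrete verification you correctly flag as the main obstacle.
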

\begin{proof}
The brackets $\ell_n^{\,\theta}$ follow from comparing \eqref{closure}
with \eqref{eq:Linftyclosure} and \eqref{gtA} with \eqref{eq:Linftygt}
order by order in $t$ using \eqref{eq:sAgammaformal}. The fact that our symplectic embeddings define
Poisson gauge algebras in the sense of
Definition~\ref{def:Poissongauge}, as we proved in
Proposition~\ref{APhi}, then implies that the statement is a special instance
of~\cite[Theorem~2]{Fulp:2002kk} for the case of gauge transformations
which are generated by Lie algebra actions
(see~\cite[Section~6]{Fulp:2002kk}).

The uniqueness statement follows
from noting that different completions of the brackets to
non-coincident gauge field entries, using the homotopy Jacobi
identities, are related by invertible field redefinitions
$\chi:\Gamma_{\rm pol}(J^1T^*U)[[t]]\to\Gamma_{\rm pol}(J^1T^*U)[[t]]$, called Seiberg-Witten
maps, which leave invariant the
gauge parameters $f$ and define deformations of the gauge fields:
$\chi(A)\big|_{t=0} = A$ for $A\in\Omega^1(U)$. We define the new gauge transformations
$\chi(A)\mapsto \chi(A)+\hat\delta_f^{\theta}\chi(A)$ by setting
\begin{align*}
\hat\delta_f^\theta := \chi\circ\delta_f^\theta\circ\chi^{-1} \ .
\end{align*}
Then the field redefinition maps gauge orbits onto
gauge orbits:
\begin{align*}
\chi\big(A+\delta_f^\theta A\big) = \chi(A) +
  \hat\delta_f^\theta\chi(A) \ ,
\end{align*}
and it preserves the Poisson gauge algebra:
\begin{align*}
\big[\one+\hat\delta_f^\theta,\one+\hat\delta_g^\theta\big] \chi(A) =
  \chi\circ\big[\one+\delta_f^\theta,\one+\delta_g^\theta\big]A
  =
  \chi\circ\delta_{t\,\{f,g\}_\theta}^\theta A =
  \hat\delta_{t\,\{f,g\}_\theta}^\theta \chi(A) \ . 
\end{align*}
In \cite{BBKT} it was shown that the Seiberg-Witten maps $\chi$ correspond
to $L_\infty$-quasi-isomorphisms which describe the arbitrariness in
the definition of the related 
$L_\infty$-algebras in the $L_\infty$-bootstrap approach~\cite{BBKL}.
\end{proof}

\begin{example}\label{ex:Linftyconstant}
Let $M=\real^d$ with a constant Poisson structure $\theta$. By
Example~\ref{ex:constemb}, in this
case we can take $\gamma^{(n)}=0$ for all $n\geq1$. Then the only
non-vanishing maps are
\begin{align*}
\ell_1^{\,\theta}(f) = \dd f \ , \quad \ell_2^{\,\theta}(f,g) =
  -\{f,g\}_\theta \qquad \mbox{and} \qquad \ell_2^{\,\theta}(f,A) =
  \{A,f\}_\theta \ .
\end{align*}
Thus in this case the symplectic embedding of $(\real^d,\theta)$ in
$(T^*\real^d,\omega_0+t\,\theta^*)$ makes
$C^\infty(\real^d)\oplus\Omega^1(\real^d)$ into a differential graded
Lie algebra, with Lie bracket given by the Poisson
bracket $\{\,\cdot\,,\,\cdot\,\}_\theta$. This describes the algebroid
of Poisson gauge transformations from Example~\ref{ex:Rd}. 
\qen\end{example}

\begin{remark}\label{rem:ell2theta}
The brackets $\ell_2^{\,\theta}(f,A)$ in
Proposition~\ref{prop:PoissonLinfty} encode the failure of the
exterior derivative $\dd$ in being a derivation of the Poisson algebra in
general. The homotopy Jacobi identity $\CJ^{\,\theta}_2(f,g)=0$ reads
\begin{align*}
\dd\{f,g\}_\theta = \{\dd f,g\}_\theta + \{f,\dd g\}_\theta
  +\gamma^{(1)}(\dd g,\dd f) - \gamma^{(1)}(\dd f,\dd g) \ ,
\end{align*}
which using \eqref{gtrec} can be written locally in components as
\begin{align*}
\partial_i\{f,g\}_\theta = \{\partial_if,g\}_\theta +
  \{f,\partial_ig\}_\theta
  + \partial_i\theta^{jk}\,\partial_jf\,\partial_kg \ .
\end{align*}
This is just the familiar violation of the Leibniz rule for non-constant
Poisson bivectors $\theta$.

Using $\ell^{\,\theta}_{n+2}(f,g,A^{\otimes n})=0$ for $n\geq1$, we can
demonstrate explicitly that the higher Jacobi identities are
equivalent to the local equations \eqref{eq2} for a symplectic
embedding of a Poisson manifold, as an illustration of the tight
relationship between our symplectic embeddings and the corresponding
$L_\infty$-structures. For this, we use Lemma~\ref{lem:homotopygauge}
and write \eqref{Jnfg} as
\begin{align}
\CJ^{\,\theta}_{n+2}\big(f,g,A^{\otimes n}\big) &=
                                       \ell^{\,\theta}_{n+1}\big(\ell^{\,\theta}_2(f,g),A^{\otimes
                                       n}\big) -
                                       (-1)^{n}\,\ell^{\,\theta}_{n+2}\big(\ell^{\,\theta}_1(f),g,A^{\otimes
                                       n}\big) -
                                       (-1)^{n}\,\ell^{\,\theta}_{n+2}\big(f,\ell^{\,\theta}_1(g),A^{\otimes
                                       n}\big) \notag \\
& \quad \, - n\,\Big(\ell^{\,\theta}_{n+1}\big(\ell^{\,\theta}_2(f,A),g,A^{\otimes n-1}\big)-\ell^{\,\theta}_{n+1}\big(\ell^{\,\theta}_2(g,A),f,A^{\otimes n-1}\big) \notag\\
& \quad \, \hspace{1cm} + \ell^{\,\theta}_{2}\big(\ell^{\,\theta}_{n+1}(f,A^{\otimes n}),g\big)-\ell^{\,\theta}_{2}\big(\ell^{\,\theta}_{n+1}(g,A^{\otimes n}),f\big)\Big) \label{Jnfg1}\\
& \quad \, - \sum_{i=3}^{n}\,(-1)^{(i+1)\,(n-i)} \, \binom{n}{i-1} \,
                                                                                                                                                                                      \Big(\ell^{\,\theta}_{n-i+3}\big(\ell^{\,\theta}_i(f,A^{\otimes
                                                                                                                                                                                      i-1}),g,A^{\otimes
                                                                                                                                                                                      n-i+1}\big)
                                                                                                                                                                                      \notag\\
& \quad \, \hspace{6cm}
                                                                                                                                                                                                 -\ell^{\,\theta}_{n-i+3}\big(\ell^{\,\theta}_i(g,A^{\otimes
                                                                                                                                                                                                 i-1}),f,A^{\otimes
                                                                                                                                                                                                 n-i+1}\big)\Big)
                                                                                                                                                                                                 \
                                                                                                                                                                                                 . \notag
\end{align}
We substitute the brackets from Proposition~\ref{prop:PoissonLinfty}
in (\ref{Jnfg1}), and after some careful simplification in local
coordinates its components become
\begin{align*}
\CJ^{\,\theta}_{n+2}\big(f,g,A^{\otimes n}\big)_i &=
                                         n!\,(-1)^{\frac{(n+1)\,(n+2)}{2}}
  \\
  & \quad \, \times \, \Big((n+1)\,\big(\gamma_i^{k|li_2\cdots
                                         i_{n+1}}
                                         -\gamma_i^{l|ki_2\cdots
                                         i_{n+1}}\big) \\
& \quad \, \hspace{1cm} +\sum_{m=1}^n\,(n-m+1)\,\big(\gamma_j^{l|i_1\cdots i_{m}}\,\gamma_i^{k|ji_{m+1}\cdots i_{n+1}} -
\gamma_j^{k|i_1\cdots i_{m}}\,\gamma_i^{l|ji_{m+1}\cdots i_{n+1}} \big) \\
& \quad \, \hspace{2cm} -\theta^{kj}\,\partial_j\gamma_i^{l|i_2\cdots i_{n+1}} + \theta^{lj}\,\partial_j\gamma_i^{k|i_2\cdots i_{n+1}} -
                                                                                   \gamma_i^{j|i_2\cdots i_{n+1}}\,\partial_j\theta^{lk}\Big) \\
  & \quad \, \hspace{8cm} \times \, 
                                                                                   \partial_kf\,\partial_lg\,A_{i_2}\cdots
                                                                                   A_{i_{n+1}}
                                                                                   \
                                                                                   .
\end{align*}
This vanishes as a consequence of (\ref{h5}), and thus the closure
condition for the gauge algebra implies that the brackets
$\ell_n^{\,\theta}$ indeed define an $L_\infty$-algebra. In other words, the homotopy Jacobi identities are equivalent to the Poisson integrability condition $[\omega^{-1},\omega^{-1}]=0$, as written in \eqref{eq:Omegamint} and \eqref{eq:Thetanalmost} (with $\underline{\theta}^{ij}(x,p)=\theta^{ij}(x)$); algebraically, the brackets follow from a standard higher derived bracket construction~\cite{Voronov2003}, as we shall see in Section~\ref{sec:Pinfty}.

This discussion illustrates the necessity of the
$L_\infty$-algebra formulation even in the case of Poisson gauge
transformations for non-constant bivectors $\theta$, despite the fact that they
are still defined by an underlying Lie algebra action.
\qen
\end{remark}

\begin{remark} \label{rem:uniqueness}
We can now give a homotopy algebraic answer to the question of
uniqueness of symplectic embeddings for a given Poisson manifold
$(M,\theta)$. Similarly to the uniqueness statement of
Proposition~\ref{prop:PoissonLinfty}, different local symplectic
embeddings $(T^*M,\omega)$ correspond to field redefinitions of the
brackets of the corresponding $L_\infty$-algebras, which are related
to one another through
$L_\infty$-quasi-isomorphisms~\cite{Kupriyanov:2019ezf,BBKT}. 
  \qen
\end{remark}

\subsection{$L_\infty$-structures on almost Poisson gauge algebroids}

The general case of almost Poisson gauge transformations from
Section~\ref{sec:quasiP} is markedly different from the Poisson case,
as now there is no underlying Lie algebra structure on $V$ and hence
no corresponding action algebroid. However, suppressing momentarily
the formal power series extension again as well as the jet coordinate
dependence, there is still an underlying Lie algebroid with vector bundle
\begin{align*}
C^\infty(U) \longrightarrow \CCE(U) \longrightarrow \Omega^1(U)
\end{align*}
whose fibre over a gauge field $A\in \Omega^1(U)$ is
$s_A^* C^\infty(T^*U) \simeq C^\infty(U)$. A field dependent gauge parameter 
can then be identified with a section of this bundle. The bracket and anchor map 
of this gauge algebroid are induced by the bracket
$[\![\,\cdot\,,\,\cdot\,]\!]_\theta$ and the gauge variation
$(f,A)\mapsto\delta^\theta_fA$ of Definition~\ref{def:quasiPoissongauge}. 

The corresponding $L_\infty$-structure making this Lie algebroid
picture precise is induced by the symplectic embedding construction of
Section~\ref{sec:quasiP} and may be presented in the following
geometric way. For this, we introduce tensor fields whose local components
are the coefficients of the formal power series expansions of
Sections~\ref{sec:quasiPoisson} and~\ref{sec:Lagrangian}, in order to
write the gauge variations and brackets of
Remark~\ref{rem:quasiPoissongauge} as formal power series analogously
to \eqref{eq:sAgammaformal}. Assuming without loss of generality that
the functions $\theta^{ij|i_1\cdots i_n}\in C^\infty(U)$ for $n\geq 2$
are symmetric
in their last $n$ indices, we introduce $n{+}2$-tensors
$\theta^{(n)}\in\mfX^2\big(U,\mfX(U)^{\odot n}\big)$ in local
coordinates by
\begin{align*}
\theta^{(n)} := n! \, \theta^{ij|i_1\cdots i_n}(x) \,
  (\partial_i\wedge\partial_j)\otimes(\partial_{i_1} \odot\cdots\odot
  \partial_{i_n}) \ .
\end{align*}
We set $\theta^{(0)}:=\theta$ and $\theta^{(1)}:=-\Pim$. Then
\begin{align}\label{eq:sAthetaformal}
s_{A}^*\underline{\theta}(\pi^*\,\cdot\,,\pi^*\,\cdot\,) = \sum_{n=0}^\infty \, \frac{t^n}{n!} \
  \theta^{(n)}\big(\,\cdot\,,\,\cdot\,,A^{\otimes n}\big)
\end{align}
as a formal power series in $\mfX^2(U)[[t]]$.

Similarly, taking $\Upm^{ijk|i_1\cdots i_{n-1}}\in C^\infty(U)$ for
$n\geq 2$ to be
symmetric in their last $n$ indices, we introduce $n{+}2$-tensors
$\Upm^{(n)}\in\mfX\big(U, \mfX(U)\otimes\mfX(U)^{\odot n}\big)$ in local
coordinates by
\begin{align*}
\Upm^{(n)} := n! \, \Upm^{ijk|i_1\cdots i_{n-1}}(x) \,
  \partial_i\otimes\partial_j \otimes
  (\partial_k\odot\partial_{i_1}\odot\cdots\odot \partial_{i_{n-1}}) \ .
\end{align*}
We set $\Upm^{(1)}:=\Pim$. By the construction of Section~\ref{sec:Lagrangian}, the vector fields
$\Upm^{(n)}(\dd f,A^{\otimes n})\in\mfX(U)$ completely determine the formal power series
expansions of the Lagrangian multipliers $L_f\in
\mfX_{\rm pol}^{\tt h}(J^1T^*U)[[t]]$ through
\begin{align}\label{eq:Lfformal}
L_f = -\sum_{n=1}^\infty\, \frac{t^{n+1}}{(n+1)!} \, L^{(n)}\big(\dd
  f,A^{\otimes n}\big) \ ,
\end{align}
where
\begin{align*}
& L^{(n)}\big(\dd
  f,A^{\otimes n}\big) := \Upm^{(n)}\big(\dd f,A^{\otimes
  n}\big) + \sum_{k=1}^{\lfloor \frac{n-1}2 \rfloor} \
  \sum_{\stackrel{\scriptstyle l_1,\dots,l_{k+1}\geq 1}{\scriptstyle
  l_1+\cdots+ l_{k+1}=n-k}} \, \frac{(n+1)!}{(l_1+1)! \cdots (l_{k+1}+1)!}
  \\ & \hspace{2cm}\ \times
  \Upm^{(l_1)}\Big(\Upm^{(l_2)\otimes}\big(\DD A,\cdots
  \Upm^{(l_k)\otimes}(\DD
  A,\Upm^{(l_{k+1})\otimes}(\DD A,\dd f,A^{\otimes
  l_{k+1}}),A^{\otimes l_k}), \cdots A^{\otimes l_2}\big), A^{\otimes
  l_1}\Big) 
\end{align*}
with the sum omitted for $n=1,2$.

\begin{proposition}\label{prop:quasiPoissonLinfty}
Let $(T^*M,\omega)$ be a local symplectic embedding of an almost Poisson
manifold $(M,\theta)$ with Jacobiator $\Pim$, and let $U\subseteq M$ be an
open subset. Then there is an $L_\infty$-structure
$(\ell_n^{\,\theta})_{n=1}^\infty$ on 
$C^\infty(U)\oplus\Omega^1(U)$, unique up to
$L_\infty$-quasi-isomorphism, which induces the almost Poisson gauge
algebroid constructed by Proposition~\ref{p1} with non-vanishing
maps on coincident degree~$1$ entries given as follows. The brackets
involving a single gauge parameter are given by
\begin{align*}
  \ell^{\,\theta}_1(f) & = \dd f \ , \\[4pt]
  \ell^{\,\theta}_2(f, A) & =  \{A,f\}_\theta +\gamma^{(1)}(\dd f,A) \ , \\[4pt]
  \ell_3^{\,\theta}\big(f, A^{\otimes 2}\big) & = -\gamma^{(2)}\big(\dd
                                          f,A^{\otimes 2}\big) +
                                                2\,\Pim^\otimes(\dd
                                                f,A,\DD A) +\dd
                                                A\big(\Pim(\dd
                                                f,A,\,\cdot\,),\,\cdot\,\big)
                                                \ , \\[4pt]
  \ell_{n+1}^{\,\theta}\big(f,A^{\otimes n}\big)
                       &= (-1)^{\frac{n\,(n-1)}2} \,
                         \bigg[\gamma^{(n)}\big(\dd f,A^{\otimes
                         n}\big) - n\,
                         \theta^{(n-1)\otimes}\big(\dd
                         f,\DD A,A^{\otimes n-1}\big) \\
  & \hspace{4cm} - \dd
                         A\big(L^{(n-1)}(\dd f,A^{\otimes n-1}),\,\cdot\,\big) \\
  & \quad \, + \sum_{k=2}^{n-1} \, \binom nk \bigg( \gamma^{(n-k)}\Big(\DD
    A\big(L^{(k-1) \otimes}(\dd f,A^{\otimes
    k-1})\big),A^{\otimes n-k}\Big) \\
  & \quad \hspace{2cm}
    -\gamma^{(n-k)\otimes}\big(\DD
    A,A^{\otimes n-k},L^{(k-1)}(\dd f,A^{\otimes k-1})\big) \\
  & \quad \hspace{2cm}
    -(n-k)\,\theta^{(n-k-1)\otimes}\Big(\DD
    A\big(L^{(k-1) \otimes}(\dd f,A^{\otimes
    k-1})\big),\DD A,A^{\otimes
    n-k-1}\Big)\bigg)\bigg] \ ,
\end{align*}
for $n\geq3$. The brackets involving two gauge parameters are given by
\begin{align*}
  \ell_2^{\,\theta}(f,g) &= -\{f,g\}_\theta \ , \\[4pt]
  \ell_3^{\,\theta}(f,g,A) &= \Pim(\dd f,\dd g,A) \ , \\[4pt]
  \ell_4^{\,\theta}\big(f,g,A^{\otimes 2}\big) &= \theta^{(2)}\big(\dd
                                                 f,\dd g,A^{\otimes
                                                 2}\big) \ , \\[4pt]
  \ell_5^{\,\theta}\big(f,g,A^{\otimes 3}\big) &= \theta^{(3)}\big(\dd
                                                 f,\dd g,A^{\otimes
                                                 3}\big)-\tfrac32\, \dd
                                                 A\big(\Pim(\dd
                                                 f,A,\,\cdot\,),\Pim(\dd
                                                 g,A,\,\cdot\,)\big) \
                                                 , \\[4pt]
  \ell_{n+2}^{\,\theta}\big(f,g,A^{\otimes n}\big) &=
                                                     -(-1)^{\frac{n\,(n-1)}2}
  \\
  & \hspace{-2.7cm} \times
                         \bigg[\theta^{(n)}\big(\dd f,\dd g,A^{\otimes
                                                     n}\big) -
                                                     \sum_{k=1}^{n-2}\,\frac1{k+1}\,\binom
                                                     nk \, \dd
                                                     A\big(L^{(k)}(\dd
                                                     f,A^{\otimes
                                                     k}),L^{(n-k-1)}(\dd
                                                     g,A^{\otimes
                                                     n-k-1})\big) \\
  & \hspace{-2.2cm} + \sum_{k=3}^{n-1}\,\frac1{k+1}\,\binom nk \
    \sum_{l=1}^{k-2}\, \binom{k+1}{l+1} \,
    \bigg(\gamma^{(n-k)}\Big(\DD A\big(L^{(l)\otimes}(\dd f,A^{\otimes
    l})\big),A^{\otimes n-k},L^{(k-l-1)}(\dd g,A^{\otimes k-l-1})\Big)
  \\
  & \quad \hspace{2.7cm} - \gamma^{(n-k)}\Big(\DD A\big(L^{(l)\otimes}(\dd g,A^{\otimes
    l})\big),A^{\otimes n-k},L^{(k-l-1)}(\dd f,A^{\otimes k-l-1})\Big)
  \\
  & \hspace{-1cm} -(n-k)\,\theta^{(n-k-1)}\Big(\DD
    A\big(L^{(l)\otimes}(\dd f,A^{\otimes l})\big),\DD
    A\big(L^{(k-l-1)\otimes}(\dd g,A^{\otimes k-l-1})\big),A^{\otimes
    n-k-1}\Big) \bigg) \bigg] \ ,
\end{align*}
for $n\geq 4$. In these expressions skew-symmetry between gauge
parameters and gauge fields is imposed by definition, for $f,g\in
C^\infty(U)$ and $A\in\Omega^1(U)$.
\end{proposition}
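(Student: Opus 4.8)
The plan is to mirror the proof of Proposition~\ref{prop:PoissonLinfty}, adapted to accommodate the genuinely field-dependent closure bracket. The conceptual input --- that an $L_\infty$-structure exists at all --- is not something I would establish by checking homotopy Jacobi identities directly; instead I would deduce it from Proposition~\ref{p1}. That proposition already establishes that the gauge variations \eqref{c1} close the almost Poisson gauge algebra \eqref{c2} and satisfy the Jacobi consistency condition \eqref{j3}, so that $(f,A)\mapsto A+\delta_f^\theta A$ is an almost Poisson gauge transformation in the sense of Definition~\ref{def:quasiPoissongauge}. By Lemma~\ref{lem:homotopygauge}, the identities $\CJ_{n+2}^{\,\theta}(f,g,A^{\otimes n})=0$ are equivalent to the closure \eqref{c2}, while $\CJ_{n+3}^{\,\theta}(f,g,h,A^{\otimes n})=0$ is equivalent to the strict Jacobi identity \eqref{j3}. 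Hence the correspondence of \cite[Theorem~2]{Fulp:2002kk} (see also \cite{Hohm:2017pnh}), now invoked in the generalized setting with field-dependent structure functions rather than in the Lie-algebra special case used for Proposition~\ref{prop:PoissonLinfty}, guarantees that the gauge data assemble into an $L_\infty$-structure on $C^\infty(U)\oplus\Omega^1(U)$.

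The content of the statement then reduces to \emph{identifying} the brackets on coincident degree-one entries in closed form, which I would do by matching two formal power series order by order in $t$. First I would rewrite the gauge variation $\delta_f^\theta A$ of Remark~\ref{rem:quasiPoissongauge} entirely in terms of the tensor fields $\gamma^{(n)},\theta^{(n)},\Upm^{(n)}$ and the Lagrangian-multiplier series $L_f$, using the generating identities \eqref{eq:sAgammaformal}, \eqref{eq:sAthetaformal} and \eqref{eq:Lfformal}; comparing with the $L_\infty$ gauge-variation expansion \eqref{eq:Linftygt} reads off $\ell_1^{\,\theta}(f)=\dd f$, $\ell_2^{\,\theta}(f,A)$, and the higher $\ell_{n+1}^{\,\theta}(f,A^{\otimes n})$ for $n\geq2$. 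Second I would expand the field-dependent bracket $[\![f,g]\!]_\theta(A)$ from \eqref{l3} and Remark~\ref{rem:quasiPoissongauge}, again in terms of $\theta^{(n)}$ and $L^{(n)}$, and compare with the closure expansion \eqref{eq:Linftyclosure} to read off $\ell_2^{\,\theta}(f,g)$, $\ell_3^{\,\theta}(f,g,A)$ and the remaining $\ell_{n+2}^{\,\theta}(f,g,A^{\otimes n})$. The leading orders serve as a check: matching the order-$t$ term of \eqref{l3} (displayed just below \eqref{lambda2}) against \eqref{eq:Linftyclosure} yields $\ell_2^{\,\theta}(f,g)=-\{f,g\}_\theta$ and $\ell_3^{\,\theta}(f,g,A)=\Pim(\dd f,\dd g,A)$, and all single-parameter brackets must reduce to those of Proposition~\ref{prop:PoissonLinfty} when $\Pim=0$.

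The main obstacle is this combinatorial matching, not any conceptual point. The Lagrangian-multiplier series \eqref{eq:Lfformal} is a nested sum over compositions $l_1+\cdots+l_{k+1}=n-k$, and substituting it into the gauge variation and the closure bracket produces a proliferation of terms involving the field strength $\dd A$ and iterated contractions of $\theta^{(m)}$ and $\gamma^{(m)}$ with $\DD A$. The work lies in tracking the factorial and binomial weights together with the signs $(-1)^{\frac{n(n-1)}2}$ arising from the Koszul rule in \eqref{eq:Linftygt}--\eqref{eq:Linftyclosure}, and in verifying that these terms organize precisely into the displayed closed forms --- in particular, that the nested $L^{(k)}$-contributions collect into the single sums over $k$ for the one-parameter brackets and into the double sums over $k$ and $l$ for the two-parameter brackets. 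I expect this to be lengthy but mechanical bookkeeping, with no genuine obstruction, since the existence of \emph{some} $L_\infty$-completion is already secured by the argument of the first paragraph.

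Finally, for uniqueness up to $L_\infty$-quasi-isomorphism I would argue exactly as in Proposition~\ref{prop:PoissonLinfty}. The brackets are pinned down only on coincident gauge-field entries; different polarizations to non-coincident entries, as well as the choice of symplectic embedding and of the Lagrangian multipliers $L_f$, correspond to invertible field redefinitions (Seiberg--Witten maps) $\chi$ fixing the gauge parameters with $\chi(A)\big|_{t=0}=A$, which carry gauge orbits to gauge orbits and preserve the almost Poisson gauge algebra. By \cite{BBKT,Kupriyanov:2019ezf} such field redefinitions are precisely $L_\infty$-quasi-isomorphisms, which gives the asserted uniqueness.
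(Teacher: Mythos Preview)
Your proposal is correct and follows essentially the same route as the paper: invoke Proposition~\ref{p1} together with \cite[Theorem~2]{Fulp:2002kk} in the field-dependent setting to guarantee existence of the $L_\infty$-structure, then extract the explicit brackets by a (laborious but mechanical) order-by-order comparison of the gauge variation and closure bracket of Remark~\ref{rem:quasiPoissongauge} with \eqref{eq:Linftygt} and \eqref{eq:Linftyclosure} using the expansions \eqref{eq:sAgammaformal}, \eqref{eq:sAthetaformal} and \eqref{eq:Lfformal}, with uniqueness handled via Seiberg--Witten maps as in Proposition~\ref{prop:PoissonLinfty}. The paper's own proof says exactly this, in fewer words.
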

\begin{proof}
The proof is completely analogous to the proof of
Proposition~\ref{prop:PoissonLinfty}. The brackets $\ell_n^{\,\theta}$
now follow from a straightforward but laborious comparison of the gauge variation and bracket of
Remark~\ref{rem:quasiPoissongauge} with \eqref{eq:Linftygt} and
\eqref{eq:Linftyclosure}, respectively, order by order in $t$ using
\eqref{eq:sAgammaformal}, \eqref{eq:sAthetaformal} and~\eqref{eq:Lfformal}. The fact that Proposition~\ref{p1}  constructs an
almost Poisson gauge algebra, in the sense of
Definition~\ref{def:quasiPoissongauge}, then implies that the statement is a special instance
of~\cite[Theorem~2]{Fulp:2002kk} for the general case of field
dependent gauge transformations.
\end{proof}

\begin{remark}
  When $\theta$ is a Poisson structure, then the tensors $\Pim$,
  $\theta^{(n)}$ for $n\geq2$ and $L^{(k)}$ all vanish, and the
  $L_\infty$-structure of Proposition~\ref{prop:quasiPoissonLinfty}
  reduces to the $L_\infty$-structure of
  Proposition~\ref{prop:PoissonLinfty}. The brackets $\ell_3^{\,\theta}$ in
Proposition~\ref{prop:quasiPoissonLinfty} coincide exactly with the
brackets defined in~\cite{BBKL} (after taking into account that the
Jacobiator $\Pim$ in~\cite{BBKL} differs from ours by a factor $\frac13$). Moreover, the brackets
$\ell_{n+2}^{\,\theta}(f,g,A^{\otimes n})$ for $n=0,1,2$ coincide with
the brackets obtained in~\cite{Kupriyanov:2019ezf}. In particular, the
homotopy Jacobi identity $\CJ^{\,\theta}_3(f,g,h)=0$ for $f,g,h\in C^\infty(U)$
reads
\begin{align*}
{\sf Cyc}_{f,g,h}\,\{f,\{g,h\}_\theta\}_\theta = \tfrac12\,\Pim(\dd f,\dd
  g,\dd h) \ ,
\end{align*}
which is the familiar violation of the strict Jacobi identity for an
almost Poisson bracket with non-vanishing Jacobiator $\Pim$. However, for $n>2$
the brackets $\theta^{(n)}(\dd f,\dd g,A^{\otimes n})$ are corrected by
terms involving contributions from the Lagrangian multiplier vector
fields and do not on their own constitute the correct
$L_\infty$-structure required by the gauge closure condition
\eqref{eq:qPoissonalg} at higher orders, contrary to the conjecture
of~\cite{Kupriyanov:2018yaj}. We shall consider an explicit example
in Section~\ref{sec:Examples} below.
\qen
\end{remark}

\subsection{$P_\infty$-algebras of exterior differential forms}
\label{sec:Pinfty}

Let us now discuss some potential applications of our constructions to deformation quantization. A central problem in the formulation of noncommutative gauge theories is to find an extension of a star-product, which quantizes an almost Poisson manifold $(M,\theta)$, to the de~Rham complex $(\Omega^\bullet(M),\dd)$. Even at the purely kinematical level this problem is non-trivial, as the vector space action mentioned in Remark~\ref{rem:Abracket} does not generally make $\Omega^1(U)[[\hbar]]$ into a $C^\infty(U)[[\hbar]]$-bimodule. The problem can be traced back to the semi-classical limit: the differential graded algebra $\Omega^\bullet(M)$ is not generally a Poisson algebra, even when $\theta$ is a Poisson bivector field. In the case of symplectic manifolds, the problem of endowing $\Omega^\bullet(M)$ with the structure of a differential graded Poisson algebra is discussed in e.g.~\cite{Chu:1997ik,Hawkins:2002rf,Beggs:2003ne,McCurdy:2009xz}; the general construction depends on the choice of an almost symplectic connection or of a contravariant connection. Here we shall present an alternative treatment based on the constructions of this paper which is more general: it does not require the auxiliary data of a connection and works for any almost Poisson bivector field~$\theta$.

Our main observation is that a symplectic embedding,
which always exists locally, naturally induces a $P_\infty$-structure on the de~Rham complex of the underlying almost Poisson manifold. A \emph{$P_\infty$-algebra} is a graded commutative algebra $\CCA$
together with an $L_\infty$-structure $\{\ell_n\}_{n=1}^\infty$ such
that the differential $\ell_1:\CCA\to\CCA$ is a derivation of
degree~$1$ and, for each $n\geq2$ and fixed elements $a_1,\dots,a_{n-1}\in\CCA$, the
map $a\mapsto \ell_n(a_1,\dots,a_{n-1},a)$ is a derivation of
degree $2-n+\sum_{i=1}^{n-1}\,|a_i|$; this is a strong homotopy version of a
Poisson algebra. $P_\infty$-algebras were introduced by Cattaneo and
Felder in their approach to quantization of coisotropic submanifolds
of Poisson manifolds~\cite{Cattaneo:2005zz}. Just as a Poisson algebra 
can be regarded as the semi-classical limit of an associative algebra
in deformation quantization, $P_\infty$-algebras arise as
semi-classical limits of $A_\infty$-algebras.

Since a symplectic embedding $(T^*M,\omega)$ of a generic almost Poisson structure
$\theta$ similarly involves a Lagrangian submanifold of a symplectic
manifold, we can offer a homotopy algebraic explanation for the meaning of the
symplectic structure $\omega$ on $T^*M$ away from the zero section,
as well as a new perspective on the relation between our
symplectic embeddings and the semi-classical limit of a deformation
quantization of $\theta$. Let $C$ be any manifold. The general result
of~\cite[Proposition~2.1]{Cattaneo:2005zz} then constructs a
$P_\infty$-structure on the graded commutative algebra
$\Gamma\big(C,\midwedge^\bullet E\big)$ of sections of the exterior
algebra of any vector bundle $E\to C$ whose total space is a Poisson
manifold.

Applying this result to $E=T^*M$ with a local symplectic
embedding $\omega$ of the almost Poisson structure $\theta$,
we find that the almost Poisson bracket on $C^\infty(M)$ can be
viewed as part of a $P_\infty$-structure on the de~Rham complex of the
manifold $M$, induced by the Poisson structure $\omega^{-1}$ on
$T^*M$. 
\begin{proposition}\label{prop:PinftydeRham}
Let $(T^*M,\omega)$ be a local symplectic embedding of an almost Poisson
manifold $(M,\theta)$ with Jacobiator $\Pim$, and let $U\subseteq M$ be an
open subset. Then there is a $P_\infty$-structure
$\{\varrho_n^{\,\theta}\}_{n=1}^\infty$ on the exterior
algebra $\Omega^\bullet(U)$ of differential forms on $U$ defined as follows. On generators of $\Omega^\bullet(U)$, the
non-vanishing brackets are defined by
\begin{align}
  \varrho_1^{\,\theta}(f) &= \dd f \ ,  \qquad \varrho_1^{\,\theta}(\alpha) = \dd\alpha \ , \nonumber \\[4pt]
  \varrho_2^{\,\theta}(f,g) &= \{f,g\}_\theta \ , \qquad \varrho_2^{\,\theta}(\alpha,\beta) = \{\alpha,\beta\}_\theta -\gamma^{(1)\otimes}(\DD\alpha,\beta) - \gamma^{(1)\otimes}(\DD\beta,\alpha) \ , \nonumber \\[4pt]
  \varrho_2^{\,\theta}(f,\alpha) &= \{f,\alpha\}_\theta - \gamma^{(1)}(\dd f,\alpha) \ , \qquad \varrho_3^{\,\theta}(f,g,\alpha) = \Pim(\dd f,\dd g,\alpha) \ , \nonumber \\[4pt]
\varrho_n^{\,\theta}(\alpha_1,\dots,\alpha_n) &= \frac{(-1)^n}{(n-1)!} \, \Big( \sum_{i=1}^n\, \gamma^{(n-1)\otimes}\big(\DD\alpha_i,\alpha_1,\dots,\widehat{\alpha_i},\dots,\alpha_n\big) \nonumber \\
& \quad \, \hspace{1cm} - 2\,(n-1) \, \sum_{i<j} \, \theta^{(n-2)\otimes}\big(\DD\alpha_i,\DD\alpha_j,\alpha_1,\dots,\widehat{\alpha_i},\dots,\widehat{\alpha_j},\dots,\alpha_n\big)\Big) \ , \nonumber \\[4pt]
  \varrho_{n+1}^{\,\theta}(f,\alpha_1,\dots,\alpha_n) &= \frac{1}{n!} \,
                                             \gamma^{(n)}(\dd
                                             f,\alpha_1,\dots,\alpha_n) \notag \\
                                             & \quad \, + \frac1{(n-1)!} \, \sum_{i=1}^n \, \theta^{(n-1)\otimes}\big(\dd f,\DD\alpha_i,\alpha_1,\dots,\widehat{\alpha_i},\dots,\alpha_n\big)
                                             \ , \notag \\[4pt]
  \varrho_{n+2}^{\,\theta}(f,g,\alpha_1,\dots,\alpha_n) &= \frac{(-1)^n}{n!} \,
                                               \theta^{(n)}(\dd f,\dd
                                               g,\alpha_1,\dots,\alpha_n)
                                               \ , \label{eq:PinftydeRham}
\end{align}
for $n\geq2$, $f,g\in C^\infty(U)$, and
$\alpha,\beta,\alpha_1,\dots,\alpha_n\in\Omega^1(U)$, where a hat indicates omission of the corresponding entry. The brackets are then defined on higher degree forms by uniquely extending \eqref{eq:PinftydeRham}
to linear maps $\varrho_n^{\,\theta}:\Omega^\bullet(U)^{\otimes
  n}\to\Omega^\bullet(U)$ as polyderivations.
\end{proposition}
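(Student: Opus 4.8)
The plan is to realise the asserted $P_\infty$-structure as an instance of the higher derived bracket construction of~\cite{Voronov2003}, applied to the genuine Poisson structure $\omega^{-1}$ on $T^*U$ furnished by the symplectic embedding, and then to match the resulting brackets with~\eqref{eq:PinftydeRham} by an explicit fibrewise computation. The existence part is essentially immediate. By Definition~\ref{def:symplembed}(c) the zero section $U\subset T^*U$ is Lagrangian, hence coisotropic, and the integrability condition~\eqref{eq:Omegamint} guarantees that $\omega^{-1}$ is a bona fide (formal) Poisson bivector. Thus $(T^*U,\omega^{-1})$ is precisely the total space of the vector bundle $E=T^*U\to U$ carrying a Poisson structure required by~\cite[Proposition~2.1]{Cattaneo:2005zz}, and that result produces a $P_\infty$-structure on $\Gamma\big(U,\midwedge^\bullet T^*U\big)=\Omega^\bullet(U)$ for which the homotopy Jacobi identities and the polyderivation axioms hold automatically. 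The only substantive task that remains is to compute these brackets and identify them with~\eqref{eq:PinftydeRham}.

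To describe the derived bracket mechanism concretely, I would identify $\Omega^\bullet(U)$ with the fibrewise-polynomial functions on $T^*U$ in which the fibre coordinates $p_i$ are regarded as odd generators $\xi_i=\dd x^i$, so that a $k$-form is homogeneous of fibre-degree $k$; the cosymplectic bracket $\{\,\cdot\,,\,\cdot\,\}_{\omega^{-1}}$ then transports to a graded Poisson bracket on this algebra. The $P_\infty$-brackets $\varrho_n^{\,\theta}$ are the derived brackets obtained by iterated graded Poisson bracketing against the lifted arguments, followed by restriction $\iota^*$ to the zero section $\{p_i=0\}$. The differential $\varrho_1^{\,\theta}=\dd$ is produced by the canonical piece $\Thetam_0$ of $\omega^{-1}$ in~\eqref{eq:Thetanalmost}, while each higher bracket extracts a fibrewise Taylor coefficient of the deformation terms $\Thetam_n$.

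The computation itself exploits the explicit local form~\eqref{PBq} of $\omega^{-1}$, whose fibre-polynomial coefficients $\underline{\theta}^{ij}(x,p)$ and $\gamma^i_j(x,p)$ have the expansions~\eqref{l12} and~\eqref{eq:gammaijxp} organised by the tensor fields $\theta^{(n)}$ and $\gamma^{(n)}$ through~\eqref{eq:sAthetaformal} and~\eqref{eq:sAgammaformal}. Evaluating the iterated brackets on the generators $f$ and $\alpha$ thus reduces to substituting these Taylor coefficients: the two-function bracket reproduces $\varrho_2^{\,\theta}(f,g)=\{f,g\}_\theta$ by~\eqref{eq:zetapicond}, the fibre-linear term $-\Pim^{ijk}p_k$ of the horizontal block $\partial_i\wedge\partial_j$ of $\Thetam_2$ yields $\varrho_3^{\,\theta}(f,g,\alpha)=\Pim(\dd f,\dd g,\alpha)$, and the mixed and purely-form brackets assemble analogously from the mixed block $\partial_i\wedge\tilde\partial^j$ (carrying $\gamma^{(n)}$) and the horizontal block (carrying $\theta^{(n)}$). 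The key structural point is that, since the graded Poisson bracket is a biderivation, every derived bracket is automatically a polyderivation of the prescribed degree in each slot; this simultaneously supplies the derivation axioms of a $P_\infty$-algebra and legitimises defining $\varrho_n^{\,\theta}$ on the generators and extending uniquely as polyderivations, exactly as in the statement.

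I expect the main obstacle to be bookkeeping rather than structure: the homotopy Jacobi identities come for free from~\cite{Cattaneo:2005zz}, so the genuine labour is verifying that the derived brackets collapse to precisely the coefficients of~\eqref{eq:PinftydeRham} — the factors $\tfrac{(-1)^n}{(n-1)!}$ and $\tfrac1{n!}$, the weight $2\,(n-1)$ multiplying the $\theta^{(n-2)}$ term, and the exact distribution of the $\DD\alpha_i$ contractions. The most delicate issue is the passage between the even fibre coordinates $p_i$, in which $\omega^{-1}$ and its Taylor coefficients are naturally symmetric tensors, and the odd generators $\xi_i$, in which the brackets must be graded (anti)symmetric; keeping the Koszul signs consistent across the three families of brackets (two gauge parameters, one, and none) is where all the care is needed. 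Useful cross-checks are that all three families are built from the same tensorial backbone $\gamma^{(n)},\theta^{(n)}$ appearing in Proposition~\ref{prop:quasiPoissonLinfty}, and that specialising to a Poisson bivector ($\Pim=0$ and $\theta^{(n)}=0$ for $n\geq2$) collapses the $\varrho_n^{\,\theta}$ to a differential graded Poisson structure on $\Omega^\bullet(U)$.
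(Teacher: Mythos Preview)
Your proposal is correct and takes essentially the same approach as the paper: both invoke the Cattaneo--Felder $P_\infty$-construction for the Lagrangian zero section $U\subset(T^*U,\omega^{-1})$ (the paper cites~\cite[Theorem~2.2]{Cattaneo:2005zz} and the derived bracket construction of~\cite[Section~2.6]{Cattaneo:2005zz}), identify the brackets $\varrho_n^{\,\theta}$ as transverse Taylor coefficients of the cosymplectic bivector, and then read off the explicit formulas from the expansions~\eqref{eq:gammaijxp} and~\eqref{l12}. The paper's presentation is slightly more concrete --- it writes the brackets directly as $\tilde\partial^{i_1}\cdots\tilde\partial^{i_n}\{\,\cdot\,,\,\cdot\,\}_{\omega^{-1}}\big|_{p=0}$ evaluated on pullbacks $\pi^*f$ and fibre-linear functions $X^ip_i$ --- rather than passing through an odd-variable identification, but the content and the acknowledged bookkeeping burden are the same.
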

\begin{proof}
Since $\varrho_n^{\,\theta}$ is of degree $2-n$, it vanishes on elements of degree~$0$ or $1$ except in the cases given in \eqref{eq:PinftydeRham}. Its structural form is merely a translation of the statement
of~\cite[Theorem~2.2]{Cattaneo:2005zz} to this situation. That statement tells us that the components of the $P_\infty$-structure on $\Omega^\bullet(U)$ are the Taylor series expansion coefficents, in the transverse coordinates to the zero section $U\subset T^*U$, of the cosymplectic bivector field \eqref{PBq}. In local coordinates where $\alpha=\alpha_i\,\dd x^i\in\Omega^1(U)$, and vector fields $X=X^i\,\partial_i \in \mfX(U)$ are regarded as fibre-linear functions $X^i\,p_i$ on $T^*U$, these are given by
\begin{align*}
\varrho_{n+2}^{\,\theta}(f,g,\alpha_1,\dots,\alpha_n) &= (-1)^n \, \alpha_{1\,i_1}\cdots\alpha_{n\,i_n}\,\tilde\partial^{i_1}\cdots\tilde\partial^{i_n}\{\pi^*f,\pi^*g\}_{\omega^{-1}}\big|_{p=0} \ , \\[4pt]
\varrho_{n+1}^{\,\theta}(f,\alpha_1,\dots,\alpha_n)(X) &= \alpha_{1\,i_1}\cdots\alpha_{n\,i_n}\,\tilde\partial^{i_1}\cdots\tilde\partial^{i_n}\{\pi^*f,X^i\,p_i\}_{\omega^{-1}}\big|_{p=0} \ , \\[4pt]
\varrho_n^{\,\theta}(\alpha_1,\dots,\alpha_n)(X,Y) &= (-1)^n\,\alpha_{1\,i_1}\cdots\alpha_{n\,i_n}\,\tilde\partial^{i_1}\cdots\tilde\partial^{i_n}\{X^i\,p_i,Y^j\,p_j\}_{\omega^{-1}}\big|_{p=0} \ .
\end{align*}
Substituting the  series expansions \eqref{eq:gammaijxp} and \eqref{l12} at $t=1$, and using the Koszul formula for the de~Rham differential, after some calculation we arrive at the formulas \eqref{eq:PinftydeRham}. One can check directly that these brackets extend to a $P_\infty$-structure: the homotopy Jacobi identities are equivalent to the Poisson integrability condition $[\omega^{-1},\omega^{-1}]=0$, as written in \eqref{eq:Omegamint} and \eqref{eq:Thetanalmost}, and indeed the brackets follow algebraically from a standard higher derived bracket construction (see~\cite[Section~2.6]{Cattaneo:2005zz}).
\end{proof}

\begin{example}
Let $M=\real^d$ with a constant Poisson structure $\theta$. In this case all tensors $\Pim$, $\gamma^{(n)}$ and $\theta^{(n)}$ for $n\geq1$ vanish in Proposition~\ref{prop:PinftydeRham}, and the only non-zero brackets are given by
\begin{align*}
\varrho^{\,\theta}_1(\xi) = \dd \xi \qquad \mbox{and} \qquad \varrho_2^{\,\theta}(\xi,\zeta) = \{\xi,\zeta\}_\theta \ ,
\end{align*}
for all $\xi,\zeta\in\Omega^\bullet(\real^d)$.
In this case we recover the well-known realization of $\Omega^\bullet(\real^d)$ as a differential graded Poisson algebra. In general, however, the $P_\infty$-algebra of Proposition~\ref{prop:PinftydeRham} involves infinitely-many non-zero brackets on the exterior algebra of differential forms, even for Poisson bivectors.
\qen
\end{example}

\begin{remark}\label{rem:Linftyalgebroid}
An \emph{$L_\infty$-algebroid} is a Lie algebroid together with an
$L_\infty$-structure on its Chevalley-Eilenberg algebra whose
differential is the Lie algebroid differential; it is a
\emph{$P_\infty$-algebroid} if the brackets of the
$L_\infty$-structure are polyderivations. Hence an equivalent way of stating 
Proposition~\ref{prop:PinftydeRham} is that a symplectic embedding
makes the tangent Lie algebroid over an almost Poisson manifold into a
$P_\infty$-algebroid. In this language,
Proposition~\ref{prop:PinftydeRham} is essentially a special case
of~\cite[Theorem~9.4]{Oh:2003ay} when $(M,\theta)$ is a Poisson manifold. 
\qen\end{remark}

\begin{remark}\label{rem:Ainfty}
Proposition~\ref{prop:PinftydeRham} implies a new homotopy algebraic
construction of a deformation quantization of the exterior algebra $\Omega^\bullet(U)$ in the direction of a
generic almost Poisson bracket. Since $M$ is a Lagrangian submanifold
of the local symplectic embedding $(T^*M,\omega)$, and since for our local symplectic
embeddings the cohomological obstructions
of~\cite[Corollary~3.3]{Cattaneo:2005zz} are trivial, we may apply the
result of~\cite[Theorem~3.2]{Cattaneo:2005zz} to quantize the
$P_\infty$-algebra of Proposition~\ref{prop:PinftydeRham} to an
$A_\infty$-structure on $\Omega^\bullet(U)[[\hbar]]$ over
$\real[[\hbar]]$, which is a deformation of the exterior product on
$\Omega^\bullet(U)$ and whose semi-classical limit induces the
$P_\infty$-structure $\{\varrho_n^{\,\theta}\}_{n=1}^\infty$ in the
following sense. The alternatization of the structure maps of this
$A_\infty$-algebra, which are polydifferential operators, define an $L_\infty$-structure
$\{\varrho_n^{\,\star}\}_{n=1}^\infty$ on $\Omega^\bullet(U)[[\hbar]]$. Then
$\varrho_n^{\,\theta} = \frac1\hbar\,\varrho_n^{\,\star}\big|_{\hbar=0}$,
and in this sense we may regard the structure maps
$\{\varrho_n^{\,\theta}\}_{n=1}^\infty$ as a semi-classical limit of
$\{\varrho_n^{\,\star}\}_{n=1}^\infty$. These
considerations are based on a version of Kontsevich's formality
theorem for the case of Lagrangian submanifolds of a symplectic
manifold. The role of homotopy algebras in
nonassociative deformation quantization of twisted Poisson structures was anticipated
by~\cite{Mylonas2012}, and the present observation makes this precise
for arbitrary almost Poisson manifolds. The details are beyond the
scope of this paper and will be explored elsewhere.
\qen\end{remark}

\paragraph{Poisson gauge algebras.}

In the case of a Poisson structure $\theta$, the structure maps of Proposition~\ref{prop:PinftydeRham}, truncated to degrees~$0$ and~$1$ with $\theta^{(n)}=0$ for all $n\geq1$ and $\alpha=\beta=\alpha_1=\dots=\alpha_n=A$ for all $n\geq1$, essentially agree with those of
Proposition~\ref{prop:PoissonLinfty} up to numerical factors. This suggests that the
$L_\infty$-structure of Proposition~\ref{prop:PoissonLinfty} is
compatible with a graded commutative algebra structure on
$V=C^\infty(U)\oplus\Omega^1(U)$, such that the Poisson bracket on
$C^\infty(U)$ is part of a $P_\infty$-structure on the $2$-term
cochain complex $(V,\dd)$. This expectation turns out to be correct
and is the content of 
\begin{proposition}\label{prop:PinftyPoisson}
Let $(M,\theta)$ be a Poisson manifold and $U\subseteq M$ an open
subset. Let $\CCA=C^\infty(U)\oplus\Omega^1(U)$ be the graded
commutative algebra with multiplication defined by truncating the
product on the exterior algebra $\Omega^\bullet(U)$ at degree~$1$,
that is, 
\begin{align*}
f\cdot g = f\,g \ , \quad f\cdot A = f\, A \qquad \mbox{and} \qquad
  A\cdot B=0
\end{align*}
for all $f,g\in C^\infty(U)$ and $A,B\in\Omega^1(U)$. Then the
$L_\infty$-structure of Proposition~\ref{prop:PoissonLinfty} turns
$\CCA$ into a $P_\infty$-algebra.
\end{proposition}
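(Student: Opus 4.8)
The plan is to obtain the $P_\infty$-structure on $\CCA=C^\infty(U)\oplus\Omega^1(U)$ by truncating the full de~Rham $P_\infty$-algebra of Proposition~\ref{prop:PinftydeRham}. First I would note that $I:=\bigoplus_{k\geq2}\Omega^k(U)$ is a graded ideal of $(\Omega^\bullet(U),\wedge)$, and that the quotient $\Omega^\bullet(U)/I$ is canonically $C^\infty(U)\oplus\Omega^1(U)$ equipped with exactly the truncated product of the statement, since $A\wedge B\in\Omega^2(U)$ represents $0$. It then suffices to show that, when $\theta$ is Poisson, $I$ is moreover a $P_\infty$-ideal for $\{\varrho_n^{\,\theta}\}$, meaning $\varrho_n^{\,\theta}(I,\Omega^\bullet(U),\dots,\Omega^\bullet(U))\subseteq I$ for all $n$; granting this, the brackets descend to polyderivations $\bar\varrho_n^{\,\theta}$ of the quotient, so that $\{\bar\varrho_n^{\,\theta}\}$ endows $\CCA$ with a $P_\infty$-structure.

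To establish the ideal property I would combine the polyderivation property of $\varrho_n^{\,\theta}$ with a form-degree count. Since $\varrho_1^{\,\theta}=\dd$ raises degree and $\varrho_2^{\,\theta}$ is degree-preserving, both visibly preserve $I$. For $n\geq3$ I would take $\xi\in\Omega^k(U)$ with $k\geq2$ and expand $\varrho_n^{\,\theta}(\xi,\dots)$ by the Leibniz rule into terms in which the bracket consumes a single one-form generator of $\xi$ while the remaining $k-1\geq1$ one-forms are wedged outside. The leftover factor already has degree $\geq1$, so the only way a term could escape $I$ is if the inner bracket produced a degree-$0$ output; by the degree count this forces a bracket with two function entries, and such brackets are the two-parameter brackets of \eqref{eq:PinftydeRham}, which are proportional to $\Pim$ or to $\theta^{(m)}$ with $m\geq1$ and hence vanish in the Poisson case. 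Thus every surviving term has total degree $\geq2$ and lies in $I$. This vanishing of the degree-lowering two-parameter brackets is exactly the point where the Poisson hypothesis is essential, and it is the step I expect to be the crux: for $\Pim\neq0$ the bracket $\varrho_3^{\,\theta}(f,g,\alpha)=\Pim(\dd f,\dd g,\alpha)$ maps $I$ out of itself, obstructing the truncation, which is precisely why the almost Poisson gauge $L_\infty$-algebra is not a $P_\infty$-algebra.

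Next I would identify the descended brackets with those of Proposition~\ref{prop:PoissonLinfty}. Setting $\Pim=0$ and $\theta^{(m)}=0$ for $m\geq1$ in \eqref{eq:PinftydeRham} and reducing modulo $I$, the only surviving $\bar\varrho_n^{\,\theta}$ act on one-forms with a single degree-$0$ entry and are built from the same symmetric tensors $\gamma^{(n)}$ that appear in the $\ell_n^{\,\theta}$. In the $2$-term complex $\CCA$ each graded skew-symmetric map $\varrho_n^{\,\theta}$ of degree $2-n$ is nonzero on just this one input pattern, so comparing values on coincident entries yields $\ell_n^{\,\theta}=c_n\,\bar\varrho_n^{\,\theta}$ as maps on $\CCA$, with nonzero constants $c_n$ that absorb the d\'ecalage signs $(-1)^{n(n-1)/2}$ and the factorials relating the symmetric-tensor normalization to the $L_\infty$ convention fixed in Section~\ref{sec:Linfinity}.

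Finally I would conclude. By Proposition~\ref{prop:PoissonLinfty} the family $\{\ell_n^{\,\theta}\}$ is already known to be an $L_\infty$-structure on $\CCA$, so no homotopy Jacobi identities need to be revisited. The descended maps $\bar\varrho_n^{\,\theta}$ are polyderivations of the truncated product, and since the polyderivation property is linear in the bracket it is preserved under multiplication by the scalar $c_n$; thus each $\ell_n^{\,\theta}=c_n\,\bar\varrho_n^{\,\theta}$ is a polyderivation, while $\ell_1^{\,\theta}=\dd$ is a degree-$1$ derivation of the truncated product. This is exactly the assertion that $\{\ell_n^{\,\theta}\}$ makes $\CCA$ into a $P_\infty$-algebra.
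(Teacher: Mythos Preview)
Your proposal is correct and follows essentially the same route as the paper: both obtain the $P_\infty$-structure on $\CCA$ by truncating the de~Rham $P_\infty$-algebra of Proposition~\ref{prop:PinftydeRham} to degrees $0$ and $1$, noting that this is only possible in the Poisson case because the two-parameter brackets $\varrho_{n+2}^{\,\theta}(f,g,\alpha_1,\dots,\alpha_n)\propto\theta^{(n)}$ vanish there. The paper's proof is terse---it simply asserts that the truncation (after replacing $\theta$ by $-\theta$) ``is precisely'' the $L_\infty$-structure of Proposition~\ref{prop:PoissonLinfty}, and also points to a direct Leibniz-rule check; you supply the missing details by framing the truncation as a quotient by a $P_\infty$-ideal, by making explicit the nonzero scalars $c_n$ that absorb the normalization differences (what the paper handles via $\theta\mapsto-\theta$, while acknowledging ``up to numerical factors'' just before the proposition), and by cleanly separating the two ingredients: the homotopy Jacobi identities come from Proposition~\ref{prop:PoissonLinfty}, while the polyderivation property is inherited from the quotient and is manifestly stable under rescaling. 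One small wording issue: your claim that on the $2$-term complex each degree-$(2-n)$ skew map is ``nonzero on just this one input pattern'' is not true in general---the two-parameter pattern $(f,g,A^{\otimes n-2})$ is also allowed by degree---but in the Poisson case both $\ell_n^{\,\theta}$ and $\bar\varrho_n^{\,\theta}$ vanish on it, so the conclusion stands.
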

\begin{proof}
This follows from the fact that the subalgebra of Proposition~\ref{prop:PinftydeRham} obtained by truncation to degrees~$0$ and~$1$, and subsequent replacement of $\theta$ with $-\theta$, is precisely the stated $L_\infty$-structure on $\CCA$. It is also an easy direct check using the Leibniz rules for the exterior
derivative $\dd$, the Poisson bracket on $C^\infty(U)$, and the
$\Omega^1(U)$-valued bracket of Remark~\ref{rem:Abracket}.
\end{proof}

The significance of this observation is that it offers a natural path
towards a homotopy algebraic construction of noncommutative gauge
transformations beyond the semi-classical level. Similarly to the
discussion of Remark~\ref{rem:Ainfty}, the 
$P_\infty$-algebra of Proposition~\ref{prop:PinftyPoisson} can be
quantized to an
$A_\infty$-structure on $\CCA[[\hbar]]$ over $\real[[\hbar]]$ which is
a deformation of the algebra structure on $\CCA$ and whose
semi-classical limit induces the $P_\infty$-structure
$\{\ell_n^{\,\theta}\}_{n=1}^\infty$ through the 
alternatization $\{\ell_n^{\,\star}\}_{n=1}^\infty$ of the structure
maps of this $A_\infty$-algebra: $\ell_n^{\,\theta} =
\frac1\hbar\,\ell_n^{\,\star}\big|_{\hbar=0}$. Then the
Poisson gauge algebra is a semi-classical limit of a noncommutative
gauge algebra which is organized by the ``quantum''
$L_\infty$-structure $\{\ell_n^{\,\star}\}_{n=1}^\infty$, making the discussion at the beginning of
Section~\ref{sec:Poissongauge} somewhat more precise. While this is an
interesting approach to the construction of noncommutative gauge
symmetries, it also lies beyond the scope of the present paper and we
leave it for future investigation.

\paragraph{Almost Poisson gauge algebras.}

The situation
is much more complicated in the case of a general almost Poisson
bivector $\theta$. Now, because of the presence of the non-vanishing brackets $\rho_{n+2}^{\,\theta}$ in Proposition~\ref{prop:PinftydeRham} involving two functions and forms of higher degree, the corresponding truncation does not determine a subalgebra, as this would violate the homotopy Jacobi identities. 
The difference between the $L_\infty$-structure maps of Propositions~\ref{prop:quasiPoissonLinfty} and~\ref{prop:PinftydeRham} (aside from numerical factors) lies entirely in the terms involving the Lagrangian multipliers of the almost Poisson gauge algebroid, whose inclusion restores the homotopy Jacobi identities. As we will now show, these terms violate the derivation properties of the original $P_\infty$-algebra from Proposition~\ref{prop:PinftydeRham}.

For this,
let us study in detail the derivation properties of the brackets
$\ell^{\,\theta}_{n+2}(f,g,A^{\otimes n})$ and $\ell^{\,\theta}_{n+1}(f,A^{\otimes n})$ from
Proposition~\ref{prop:quasiPoissonLinfty} with the same graded
commutative algebra $\CCA$ as in
Proposition~\ref{prop:PinftyPoisson}. First of all, for the
multiplication of gauge parameters it is clear that the derivation
properties
\begin{align*}
\ell^{\,\theta}_{n+2}(f \cdot h,g,A^{\otimes n})&=f\cdot
                                       \ell^{\,\theta}_{n+2}(h,g,A^{\otimes
                                                  n}) + \ell^{\,\theta}_{n+2}(f,g,A^{\otimes
                                       n})\cdot h \
                                       , \\[4pt]
\ell^{\,\theta}_{n+1}(f \cdot h,A^{\otimes n})&=f\cdot
                                                \ell^{\,\theta}_{n+2}(h,A^{\otimes
                                                n}) + \ell^{\,\theta}_{n+2}(f,A^{\otimes n})\cdot h
\end{align*}
hold.

For the multiplication of gauge fields by gauge parameters, consider
first the brackets involving two gauge parameters. Since brackets
with three gauge parameters vanish for degree reasons, the desired derivation
property is just $C^\infty(U)$-linearity
\begin{equation}\label{ph2}
\ell^{\,\theta}_{n+2}(f ,g, h\cdot A,A^{\otimes n-1})=h\cdot \ell^{\,\theta}_{n+2}(f
,g,A^{\otimes n}) \ .
\end{equation}
This essentially means that the field dependent gauge parameter
$[\![f,g]\!]_\theta$ from Proposition~\ref{p1}, when evaluated on the
argument $h\cdot A$, should not depend on the derivatives of $h$. The
potentially problematic terms are the ones involving the Lagrangian
multiplier vector fields. However, the functions $\Lambdam^{ij}(A)$
constructed in Corollary~\ref{cor:LambdamA} satisfy
\begin{equation}
\Lambdam^{ij}(A)\,A_i=0 \ , \label{tr1}
\end{equation}
and together with Proposition~\ref{p2} it now easily follows that
\begin{equation*}
L^{ij}\big(A,\partial(h\,A)\big)=L^{ij}(A,h\,\partial A) \ .
\end{equation*}
Moreover, from (\ref{m5}) the identity (\ref{tr1}) also implies
\begin{equation*}
L^{ij}(A,\partial A)\,A_i=0 \ , 
\end{equation*}
and as a consequence the terms with derivatives of $h$ disappear
from $s_{h\,A}^*\{\Phi_{h\,A}(L_f),\Phi_{h\,A}(L_g)\}_{\omega^{-1}}$. It follows that
all terms involving derivatives of $h$ also disappear from
$[\![f,g]\!]_\theta (h\,A)$ and this implies (\ref{ph2}). In other
words, the
map $A\mapsto\ell^{\,\theta}_{n+2}(f ,g, \alpha_1,\dots,\alpha_{n-1},A)$
is a derivation, for all $n\geq1$ and $\alpha_1,\dots,\alpha_{n-1}\in\Omega^1(U)$.

For the brackets with a single gauge parameter the desired derivation
property reads
\begin{equation}\label{tr3}
\ell^{\,\theta}_{n+1}(f,h\cdot A, A^{\otimes
  n-1})=h\cdot\ell^{\,\theta}_{n+1}(f,A^{\otimes n}) +(-1)^{n-1}
\ell^{\,\theta}_{n+1}(f,h,A^{\otimes n-1})\cdot A \ .
\end{equation}
This relation is more complicated since it involves brackets of
different nature and different graded symmetry. In particular, since
the bracket with two gauge parameters is skew-symmetric in $f$ and
$h$, the relation (\ref{tr3}) implies the consistency condition
\begin{equation}\label{tr4}
\ell^{\,\theta}_{n+1}(f,h\cdot A, A^{\otimes
  n-1})+\ell^{\,\theta}_{n+1}(h,f\cdot A, A^{\otimes
  n-1})=h\cdot\ell^{\,\theta}_{n+1}(f,A^{\otimes
  n})+f\cdot\ell^{\,\theta}_{n+1}(h,A^{\otimes n}) \ .
\end{equation}
For $n=1$ it is easy to
see that the bracket $\ell_2^{\,\theta}$ from
Proposition~\ref{prop:quasiPoissonLinfty} satisfies \eqref{tr3}, while
for $n=2$ the relevant brackets are
\begin{align*}
\ell^{\,\theta}_3(f,h,A)&=\Pim(\dd f,\dd h,A) \ , \\[4pt]
\ell^{\,\theta}_3(f,A,B)&=-\gamma^{(2)}(\dd f,A,B) + \Pim^\otimes(\dd
                          f,A,\DD B) + \Pim^\otimes(\dd f,B,\DD A) \\
  & \quad \, +
                          \tfrac12\,\dd A\big(\Pim(\dd
                          f,B,\,\cdot\,),\,\cdot\,\big) +
                          \tfrac12\,\dd B\big(\Pim(\dd
                          f,A,\,\cdot\,),\,\cdot\,\big) \ ,
\end{align*}
for $f,h\in C^\infty(U)$ and $A,B\in\Omega^1(U)$. 
One then calculates explicitly
\begin{align*}
\ell^{\,\theta}_3(f,h\cdot
  A,A) &= -h\cdot\gamma^{(2)}(\dd f,A,A) + h\cdot\Pim^\otimes(\dd
         f,A,\DD A) + \Pim^\otimes(\dd f,A, \dd h\otimes A + h\,\DD A) \\
  & \quad \, +\tfrac12\,(\dd h\wedge A)\big(\Pim(\dd
    f,A,\,\cdot\,),\,\cdot\,\big) + \tfrac12\, h\cdot\dd A\big(\Pim(\dd
    f,A,\,\cdot\,),\,\cdot\,\big) \\
  & \quad \, + \tfrac12\, h\cdot\dd A\big(\Pim(\dd
    f,A,\,\cdot\,),\,\cdot\,\big) \\[4pt]
  &= h\cdot\big(-\gamma^{(2)}(\dd f,A,A) + 2\,\Pim^\otimes(\dd f,A,\DD
    A)+\dd A(\Pim(\dd f,A,\,\cdot\,),\,\cdot\,) \big) \\
  & \quad \, +\Pim(\dd f,A,\dd h)\,\cdot A +\tfrac12\,\Pim(\dd f, A,\dd
    h)\cdot A - 
    \tfrac12\,\dd h\cdot\Pim(\dd f,A,A) \\[4pt]
  &=h\cdot\ell^{\,\theta}_3(f,A,A)-\tfrac32\,\ell^{\,\theta}_3(f,h,A)\cdot
A \ ,
\end{align*}
showing that already for $n=2$ the derivation property (\ref{tr3}) is
violated. Starting from the next order $n=3$, even the consistency
condition (\ref{tr4}) is violated. In the case of a Poisson structure
the higher brackets all satisfy \eqref{tr4} because of their
$\CCA$-linearity, but for generic almost Poisson structures the
derivation property imposes severe restrictions on the
$L_\infty$-algebra. We summarise the present discussion in
\begin{proposition}\label{prop:PinftyquasiPoisson}
Let $(M,\theta)$ be an almost Poisson manifold with non-zero Jacobiator
$\Pim$, and let $U\subseteq M$ be an open
subset. Then the $L_\infty$-structure of
Proposition~\ref{prop:quasiPoissonLinfty} and the graded commutative
algebra structure of Proposition~\ref{prop:PinftyPoisson} do not
combine into a compatible $P_\infty$-structure on
$C^\infty(U)\oplus\Omega^1(U)$.
\end{proposition}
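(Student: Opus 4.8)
The plan is to prove this negative statement by exhibiting an explicit failure of the biderivation axiom at the lowest non-trivial order, building directly on the computation carried out in the discussion preceding the statement. A $P_\infty$-structure on $\CCA$ requires that, for each $n$, the bracket $\ell_n^{\,\theta}(a_1,\dots,a_{n-1},\,\cdot\,)$ act as a graded derivation of the commutative product in its final slot; for brackets carrying a single gauge parameter this is precisely the requirement \eqref{tr3}. My strategy is to test \eqref{tr3} order by order and show that it already breaks at $n=2$, with the defect controlled by the Jacobiator $\Pim$.

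First I would collect the two relevant degree-$(-1)$ brackets from Proposition~\ref{prop:quasiPoissonLinfty}, namely $\ell^{\,\theta}_3(f,h,A)=\Pim(\dd f,\dd h,A)$ together with the expression for $\ell^{\,\theta}_3(f,A,B)$, and then evaluate $\ell^{\,\theta}_3(f,h\cdot A,A)$ by substituting $h\,A$ into the appropriate argument and expanding using $\DD(h\,A)=\dd h\otimes A+h\,\DD A$ and the Leibniz rule for $\dd$. The outcome, recorded in the calculation above, is
\begin{align*}
\ell^{\,\theta}_3(f,h\cdot A,A) = h\cdot\ell^{\,\theta}_3(f,A,A)-\tfrac32\,\ell^{\,\theta}_3(f,h,A)\cdot A \ .
\end{align*}
Subtracting the right-hand side of the $n=2$ instance of \eqref{tr3}, which carries coefficient $(-1)^{n-1}=-1$ on $\ell^{\,\theta}_3(f,h,A)\cdot A$, from this computed value leaves the obstruction
\begin{align*}
\big(-\tfrac32+1\big)\,\ell^{\,\theta}_3(f,h,A)\cdot A=-\tfrac12\,\Pim(\dd f,\dd h,A)\cdot A \ .
\end{align*}
Since $\Pim\neq0$ by hypothesis, I can choose $f,h\in C^\infty(U)$ and $A\in\Omega^1(U)$ with $\Pim(\dd f,\dd h,A)\neq0$, whence this defect is a non-zero element of $\Omega^1(U)$ and the derivation property genuinely fails. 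Because the violation is already present at $n=2$, there is no need to invoke the higher consistency condition \eqref{tr4}, whose breakdown from $n=3$ onwards was noted above.

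The main point worth isolating — and the genuine content of the obstruction — is not the arithmetic but its origin. The offending term $-\tfrac12\,\Pim(\dd f,\dd h,A)\cdot A$ comes entirely from the Lagrangian multiplier contributions that distinguish the brackets of Proposition~\ref{prop:quasiPoissonLinfty} from the $P_\infty$-brackets of Proposition~\ref{prop:PinftydeRham}; these multiplier terms are exactly the ones required to restore the homotopy Jacobi identities (hence a bona fide $L_\infty$-algebra), yet they simultaneously spoil the biderivation property that a $P_\infty$-structure demands. Accordingly, the subtle step is to confirm that the defect is \emph{intrinsic} rather than an artifact of normalisation: it is proportional to $\Pim$ and therefore cannot be cancelled while retaining the multiplier terms, whereas for $\Pim=0$ the multipliers vanish identically and one recovers the compatible $P_\infty$-structure of Proposition~\ref{prop:PinftyPoisson}. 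This shows that the incompatibility is governed by, and disappears precisely with, the Jacobiator, which is the assertion of the proposition.
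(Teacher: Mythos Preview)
Your proposal is correct and follows essentially the same approach as the paper: the paper's ``proof'' is precisely the discussion preceding the proposition, which computes $\ell^{\,\theta}_3(f,h\cdot A,A)$ and finds the coefficient $-\tfrac32$ in place of the required $(-1)^{n-1}=-1$, yielding the defect $-\tfrac12\,\Pim(\dd f,\dd h,A)\cdot A$. Your only addition is the explicit observation that for $\Pim\neq0$ one may choose $f,h,A$ making this defect non-zero, which the paper leaves implicit; note that $\Pim(\dd f,\dd h,A)\neq0$ at a point forces $A\neq0$ there, so the product in $\Omega^1(U)$ is indeed non-vanishing.
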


\begin{remark}\label{rem:openPinfty}
We do not solve the problem of finding a $P_\infty$-structure on the
almost Poisson gauge algebroid in this paper, but let us briefly mention
some possible ways that one may proceed.

As we have shown above, the
derivation properties are violated by precisely the terms involving
the Lagrangian multipliers $L_f$, which were introduced to cancel
the second term in the commutator of two gauge transformations in
\eqref{t7}. Instead of including $L_f$, one could try to cure the
problem by identifying this term as the contribution from a non-zero
homogeneous subspace $V_2$ in degree~$2$. This would lead to a
$3$-term $L_\infty$-algebra, which is likely a $P_\infty$-algebra
under a suitable extension of the truncated exterior product on
$C^\infty(U)\oplus \Omega^1(U)$. However, the gauge theory
interpretation of the space $V_2$ is not clear, and this appears to be
a general feature of physical systems based on almost Poisson algebras:
to maintain all desirable features one inevitably needs to introduce
some auxiliary (unphysical) degrees of freedom, such that the
elimination of these auxiliary variables comes at the price of losing
some of the desired properties (see~\cite{KS18} for an example of
this). In the present situation, we lose the Leibniz rule for the
$L_\infty$-structure, but still retain a well-defined gauge algebra.

Another possibility would be to work instead with a \emph{curved}
$L_\infty$-algebra. A \emph{curving} of an $L_\infty$-structure
$\{\ell_n\}_{n=1}^\infty$ on a graded vector space $V$ is an
additional map $\ell_0$ of degree~$2$ from the ground field into $V$,
which intertwines with the higher brackets through the homotopy Jacobi
identities extended to $\{\ell_n\}_{n=0}^\infty$; then $\ell_1$ is no
longer a differential and one loses the underlying cochain complex. In
our situation, it
may be possible to redefine the brackets $\ell_n^{\,\theta}$ (via an $L_\infty$-quasi-isomorphism) to absorb
the violation of the Leibniz rule, which may then violate the standard
$L_\infty$-relations, but may instead form a curved
$L_\infty$-algebra. However, a curved $L_\infty$-algebra also necessarily
contains a non-zero homogeneous subspace $V_2$ in degree~$2$, whose
meaning at the purely kinematic level of gauge symmetries is not
clear.

Finally, one could work with a weaker notion of $P_\infty$-structure,
where the graded commutative product is also replaced by a sequence of
higher products such that the commutativity of the product and the
Leibniz rule hold only up to homotopy. It would be interesting to
explore all of these modifications of the $L_\infty$-structure
exhibited in Proposition~\ref{prop:quasiPoissonLinfty} in order
to fully elucidate the structure of the almost Poisson gauge algebroid,
and its extension to noncommutative and nonassociative 
gauge transformations. 
\qen
\end{remark}

\begin{remark}\label{rem:AinftyBBKL}
Our constructions of $P_\infty$-structures above differ in several ways from the treatments of~\cite[Section~4.3]{BBKL} and~\cite[Section~4]{Kupriyanov:2019cug}, where an $L_\infty$-structure on the de~Rham complex, truncated at degree~$2$, was constructed for generic almost Poisson structures. 

Firstly, the $L_\infty$-structure of Proposition~\ref{prop:PinftydeRham} differs for brackets involving forms of degree higher than one: for example, the $2$-bracket of $E\in\Omega^2(U)$ and $f\in C^\infty(U)$ is simply the almost Poisson  bracket in~\cite{BBKL,Kupriyanov:2019cug}, while in our case the $2$-bracket is
\begin{align*}
\varrho^{\,\theta}_2(f,E) = \{f,E\}_\theta - 2\,\gamma^{(1)\otimes}(\dd f,E) \ .
\end{align*}
On the one hand the brackets of~\cite{BBKL,Kupriyanov:2019cug} do not define a $P_\infty$-structure, while on the other hand the brackets of Proposition~\ref{prop:PinftydeRham} are not designed to close an arbitrary gauge algebra. For a Poisson bivector $\theta$, the $P_\infty$-algebra of Proposition~\ref{prop:PinftydeRham} contains the Poisson gauge algebra of Proposition~\ref{prop:PoissonLinfty}, and following the standard $L_\infty$-algebra formulation of field theories~\cite{Hohm:2017pnh,Jurco:2018sby}, it also contains the higher degree spaces needed to formulate the dynamics of a particular Poisson gauge theory. For an almost Poisson structure, one can use the $P_\infty$-algebra in this way to formulate a gauge algebra without the need of Lagrangian multipliers, at the price of obtaining a closure condition that involves the field equations in addition to the field dependent gauge transformations~\cite{Hohm:2017pnh,Jurco:2018sby}.
For example, in $d=3$ dimensions the brackets appear to define an almost Poisson Chern-Simons theory which is different from that of~\cite{BBKL,Kupriyanov:2019cug}; it would be interesting to further develop this field theory, which in our symplectic embedding approach can be written down concisely and explicitly to all orders using the brackets \eqref{eq:PinftydeRham}.

Secondly, an $A_\infty$-structure on $\CCA[[\hbar]]$  is
sketched in~\cite[Appendix~B]{BBKL}, where the first few
multiplication maps are deduced up to order $\hbar^2$. However, this
structure is different from what is proposed above, as in their case
the classical limit reduces the $A_\infty$-algebra to the \emph{differential}
graded commutative algebra $(\CCA,\dd)$, whereas here we propose that
the classical limit should simply be the algebra $\CCA$ without
further structure. In other words, even the differential of the
$A_\infty$-structure should be considered as part of the deformation
quantization of the graded commutative algebra $\CCA$.
\qen
\end{remark}

\subsection{Comparison with the $L_\infty$-bootstrap}

According to the prescription of the $L_\infty$-bootstrap approach to
constructing almost Poisson gauge algebroids~\cite{BBKL}, one
starts with the natural structure maps
\begin{align*}
\ell_1(f)=\dd f \qquad \mbox{and} \qquad \ell_2(f,g)=-\{f,g\}_\theta \ ,
\end{align*}
and attempts to construct the rest of the $L_\infty$-structure by
consistently solving the homotopy Jacobi identities order by order.
Let us briefly comment on the benefits of our approach to deformations of
gauge transformations, based on symplectic
embeddings, over the approach based on the $L_\infty$-bootstrap, which
was used in~\cite{Kupriyanov:2019ezf} to propose recursion relations
for the construction of the gauge $L_\infty$-algebras:

\begin{itemize}
\item In the $L_\infty$-bootstrap approach, one can define
  contributions to the gauge transformations order by order in the
  formal deformation parameter $t$. Symplectic embeddings are more
  appropriate for computing explicit all orders expressions, which are
  sometimes asymptotic expansions of analytic functions known in
  closed form. We will
  illustrate this in several examples in Section~\ref{sec:Examples} below.
  
\item Following the method proposed in \cite{Kupriyanov:2019ezf}, one
  can recursively construct the brackets of the form $\ell_{n+1}(f,\ell_1(g),
  \ell_1(h),\dots)$ from previously defined brackets at lower
  orders. In the case of Poisson deformations of gauge theories, there is no problem
  in restoring $\ell_{n+1}(f,A^{\otimes n})$ from the given brackets
  $\ell_{n+1}(f,\ell_1(g), \ell_1(h),\dots)$. However, in the case of
  almost Poisson deformations the situation is much more complicated,
  as the passage from the brackets $\ell_{n+1}(f,\ell_1(g), \ell_1(h),\dots)$ to
  $\ell_{n+1}(f,A^{\otimes n})$ is extremely non-trivial. This problem
  is circumvented when working with symplectic embeddings.
  
\item From purely technical and calculational standpoints, the approach of this paper
  based on symplectic embeddings is significantly simpler then the
  bootstrap approach proposed in \cite{Kupriyanov:2019ezf}.

  \item The $L_\infty$-bootstrap approach describes a linearization of
    the complete $A_\infty$-structure of noncommutative gauge
    variations, which loses part of the information required to
    completely determine the semi-classical limit of the full
    noncommutative gauge transformations. The missing information is a
    derivation property, which requires a $P_\infty$-algebra to
    describe the semi-classical limit, and this is naturally captured
    by our symplectic embedding construction.
\end{itemize}

\newsection{Examples}
\label{sec:Examples}

In previous sections we already looked at the two simplest examples of
Poisson bivector fields: the case of a manifold $M$ with the trivial
Poisson structure $\theta_0=0$, for which the symplectic embedding is
given by the associated symplectic groupoid $(T^*M,\omega_0)$ for $M$,
and $M=\real^d$ with a constant Poisson structure $\theta(x)=\theta$,
whose symplectic embedding is given by the strict deformation of the
cotangent bundle $T^*\real^d$ with symplectic form
$\omega=\omega_0+t\,\theta^*$; the corresponding Poisson
gauge transformations were described in Example~\ref{ex:Rd} and the
associated $L_\infty$-structure on the gauge algebroid in
Example~\ref{ex:Linftyconstant}. The purpose of this final section is to extend these
basic examples to some more complicated examples, and in particular to
consider an example of an almost Poisson structure. These cases
will generally involve symplectic embeddings of $(M,\theta)$ given by a
formal deformation of the cotangent bundle of $M$, while the
(almost) Poisson gauge algebroid is correspondingly described by an
$L_\infty$-algebra which in general is no longer simply a differential graded Lie
algebra and involves infinitely many brackets.

\subsection{Linear Poisson structures}

We consider first a large class of Poisson structures whereby the symplectic
embedding can be described as a \emph{strict} deformation of
$(T^*M,\omega_0)$. Let $\mfg$ be a Lie algebra of dimension $d$ with structure constants in a given basis
denoted by $f^{ij}_k$. On $M=\real^d$ we can define the linear Poisson
bivector field~\cite{Alekseev}
\begin{equation}\label{e1}
\theta^{ij}(x)=f^{ij}_k\,x^k \ ,
\end{equation}
which, by regarding $\real^d$ as the dual of the Lie algebra $\mfg$, is the
Kirillov-Kostant Poisson structure on $\mfg^*$. In this case the
function $\Sigma(p,p',x)$ in \eqref{eq:Spxdef} is a
generating function for the Dynkin series for the Baker-Campbell-Hausdorff formula for the
Lie algebra $\mfg=(\real^d)^*$; if $G$ is any Lie group whose Lie
algebra is $\mfg$, then an integrating symplectic groupoid is
$T^*G\simeq G \ltimes\mfg^*$, regarded as the action groupoid with
respect to the coadjoint action, see e.g.~\cite{Cattaneo:2000iw}. Using the
polydifferential representation constructed
in~\cite{Gutt,Meljanac,Kupriyanov:2015uxa} one finds, in the notation
of Section~\ref{sec:Poissonembedding}, that a local
symplectic embedding of the Poisson structure (\ref{e1}) can be
written as
 \begin{equation}\label{e3}
   \gamma^i_j(p)= \sum_{n=1}^\infty \, \frac{t^{n-1}\,B_n}{n!} \,
   {f}^{ij_1}_{k_1}\,{f}^{k_1j_2}_{k_2}\cdots{f}^{k_{n-1}j_n}_{j}\,
   p_{j_1}\cdots p_{j_n} \ ,
\end{equation}
where $B_n$ are the Bernoulli numbers
($B_n=-\frac12,\frac16,0,-\frac1{30},0,\dots$ for $n=1,2,3,4,5,\dots$). 

Suppose that the structure constants of $\mfg$ are chosen such that
the $d{\times}d$
matrix $\sf M$, with elements ${\sf M}^{i}_l(p):={f}^{ij_1}_k\, {f}^{kj_2}_l\,p_{j_1}\,p_{j_2}$,
is diagonalizable for all transverse coordinates $p$. Following~\cite{Kupriyanov:2015uxa}, one may then
construct an analytic function $\gamma^i_j(p)$ whose Taylor expansion
around $t=0$ coincides with the asymptotic series \eqref{e3}. For
this, we observe that (\ref{e3}) can be rewritten as
 \begin{equation}\label{e4}
   \gamma^i_j(p)=-\tfrac12\,f^{il}_j\,p_{l}+\tfrac1t\,\mathcal{X}\big(-t^2\,{\sf
     M}/{2}\big)^i_j \ ,
\end{equation}
where $\mathcal{X}({\sf M})^i_j$ is the matrix-valued function with
\begin{equation*}
\mathcal{X}(u)=\sqrt{\tfrac{u}{2}}\cot\sqrt{\tfrac{u}{2}}-1=\sum_{n=1}^\infty\,
\frac{(-2)^n\,B_{2n}\,u^{n}}{(2n)!} \ ,
\end{equation*}
and the power series converges for $u\in\complex$ with $|u|<\frac12$.
Since ${\sf M}$ is diagonalizable, there exists a non-degenerate
$d{\times}d$ matrix $\sf S$ such that
\begin{equation*}
{\sf M}={\sf S}\, {\sf D}\, {\sf S}^{-1} \ ,
\end{equation*}
where $\sf D$ is the diagonal matrix whose entries are the eigenvalues
$\lambda_1(p),\dots,\lambda_d(p)$ of $\sf M$ on the diagonal. Thus (\ref{e4}) becomes
 \begin{equation*}
   \gamma^i_j(p)=-\tfrac12\,f^{il}_j\,p_{l}+\tfrac1t\,\big[{\sf S}\,
   \mathcal{X}\big(-t^2\,{\sf D}/{2}\big)\, {\sf S}^{-1}\big]^{i}_j \ ,
\end{equation*}
where
\begin{equation*}
 \mathcal{X}({\sf D})=\begin{pmatrix}
 \mathcal{X}(\lambda_1) & & \\ & \ddots & \\ & &
 \mathcal{X}(\lambda_d) \end{pmatrix}
 \ .
\end{equation*}

Let us now consider two particular examples.

\begin{example}
  Let $\mfg=\mathfrak{su}(2)$ with the Lie-Poisson structure
  \begin{align*}
\theta^{ij}(x) = 2\,\varepsilon^{ij}{}_k\,x^k
  \end{align*}
on $\mathfrak{su}(2)^*=\real^3$, where
$\varepsilon^{ijk}$ is the Levi-Civita symbol in three dimensions, and
we used the standard Euclidean inner product on $\real^3$ to raise and lower indices: $\varepsilon^{ij}{}_k:=\varepsilon^{ijl}\,\delta_{lk}$;
the factor of~$2$ is just for convenience. In this case the generalized Bopp shift \eqref{eq:Boppshift} is given by~\cite{Kupriyanov:2015uxa}
\begin{align*}
\pi_\theta(x,p)^i = x^i - t\,\varepsilon^{ij}{}_k\,p_j\,x^k + t^2\,\chi\big(t^2\,|p|^2\big)\,\big(x^i\,|p|^2-p^i\,p_j\,x^j\big) \ ,
\end{align*}
from which one calculates 
\begin{equation*}
\gamma^i_j(p)= -
\varepsilon_j{}^{ik}\,p_k + t\, \chi\big(t^2\,|p|^2\big) \,\big( |p|^2\,
\delta^i_j- p_j\,p^i\big) \ ,
\end{equation*}
where
\begin{equation*}
\chi(u)=\tfrac1u\,\big(\sqrt{u}\cot\sqrt{u}-1\big) \qquad \mbox{with}
\quad \chi(0) = -\tfrac13 \ ,
\end{equation*}
and $|p|^2:=\delta^{ij}\,p_i\,p_j$.

According to (\ref{gtA}), the corresponding deformation of abelian
gauge transformations yields the Poisson gauge transformations~\cite{Kupriyanov:2019ezf}
\begin{equation*}
 \delta_{f}^\theta A=\dd f+t\, x\cdot (\nabla A\times\nabla f)+t\,A\times\nabla
 f+t^2\, \chi\big(t^2\,|A|^2\big)\,\big(|A|^2\,\dd f- (A\cdot\nabla
 f)\,A\big) \ ,
\end{equation*}
where $A\times\nabla f:=\ast(A\wedge\dd f)$, with $\ast$ the Hodge duality operator on
the Euclidean vector space $\real^3$, while
$|A|^2:=\delta^{ij}\,A_i\,A_j$ and $A\cdot\nabla
f:=\delta^{ij}\,A_i\,\partial_j f$; we also abbreviated
$\{A,f\}_\theta=: x\cdot (\nabla A\times\nabla f)$ at $x\in\real^3$. This may be verified directly to
close the Poisson gauge algebra (\ref{closure}) and to have the
correct deformation property \eqref{initial}. It encodes the gauge symmetry
of rotationally invariant Poisson gauge theories, which by
Proposition~\ref{prop:PoissonLinfty} is generated 
by the action of the $P_\infty$-algebra with infinitely many non-vanishing brackets in coincident
gauge field entries given by
\begin{align*}
  \ell^{\,\theta}_1(f) &= \dd f \ , \\[4pt]
  \ell^{\,\theta}_2(f,g) &= -x\cdot(\nabla f \times \nabla g) \ , \\[4pt]
  \ell^{\,\theta}_2(f,A) &= x\cdot (\nabla A\times\nabla f)+ A\times\nabla f \ ,
  \\[4pt]
  \ell^{\,\theta}_{2n+1}\big(f,A^{\otimes 2n}\big) &= B_{2n}\,\big(|A|^2\big)^{n-1}
                                          \, \big(|A|^2\,\dd f - (A\cdot\nabla f)\,
                  A\big) \ , 
\end{align*}
for $n\geq1$.
\qen\end{example}

\begin{example}
Let $\mfg$ be the $d$-dimensional $\kappa$-Minkowski algebra which
yields the Kirillov-Kostant structure
\begin{equation*}
\theta_a^{ij}(x)=2\,\big(a^i\,x^j-a^j\,x^i\big) \ ,
\end{equation*}
parameterized by a fixed constant vector $(a^i)\in\real^d$. For this Poisson structure one finds \cite{KKV}
\begin{equation*}
\gamma^i_j(p)= \big(\tfrac1t\,\sqrt{1+t^2\,\langle a, p\rangle^2}-\tfrac1t+\langle a,
p\rangle \big) \,
\delta^i_j -a^i\,p_j \ ,
\end{equation*}
where $\langle a, p\rangle :=a^i\,p_i$ is the usual pairing between
$\real^d=\mfg^*$ and $\mfg$. 
The corresponding deformation of abelian gauge transformations
becomes the Poisson gauge transformations
\begin{equation*}
 \delta_{f}^{\theta_a} A=\big(\sqrt{1+t^2\,(\iota_aA)^2}+t\,\iota_a A\big) \, \dd
 f + t\,\{A,f\}_{\theta_a} - t\,(\iota_a\dd f)\, A \ ,
\end{equation*}
where $\iota_a$ denotes interior multiplication with the constant
vector field $a^i\,\partial_i$ on $\real^d$. By expanding the square root
in its Taylor series around $t=0$ using the binomial series
\begin{align*}
\sqrt{1+u^2} = 1 - \sum_{n=1}^\infty \, \frac2n \, \binom{2n-2}{n-1}
  \, \Big(-\frac{u^2}4\Big)^n \ ,
\end{align*}
we can calculate the non-zero coincident gauge field
brackets of the corresponding $P_\infty$-algebra from
Proposition~\ref{prop:PoissonLinfty} to get
\begin{align*}
  \ell^{\,\theta_a}_1(f) &= \dd f \ , \\[4pt]
  \ell^{\,\theta_a}_2(f,g) &= -\{f,g\}_{\theta_a} \ , \\[4pt]
  \ell^{\,\theta_a}_2(f,A) &= \{A,f\}_{\theta_a} + \iota_a(A\wedge\dd
                             f) \ , \\[4pt]
  \ell^{\,\theta_a}_{2n+1}\big(f,A^{\otimes 2n}\big) &= -\frac1{2^{2n-1}} \,
                                                \frac{(2n-2)!\,(2n)!}{(n-1)!\,n!}
                                                \ (\iota_aA)^{2n} \,
                                                \dd f \ ,
\end{align*}
for $n\geq1$.
\qen\end{example}

\subsection{Poisson families}

We will now show that a rather large class of
non-linear Poisson structures
on $M=\real^2$ admit symplectic embeddings that yield Poisson gauge
algebras whose homotopy algebraic structures are given by differential graded Poisson algebras,
though in a different form than what we have encountered thus far in
this paper. The basic idea is to start from the outset with smooth
families of bivectors $\theta_t\in\mfX^2(\real^2)$,
parameterized smoothly by $t\in[0,1]$. Any such bivector defines a family of
Poisson structures on $\real^2$ by dimensional
reasons and can be written as
\begin{align}\label{e24}
\theta_t^{ij}(x) = \vartheta_t(x)\,\varepsilon^{ij} \ ,
\end{align}
where $\varepsilon^{ij}$ is the Levi-Civita symbol in two
dimensions. We assume that the smooth functions $\vartheta_t(x)$
satisfy two properties: 
\begin{itemize}
\item[(a)] $\vartheta_t(x)\neq0$ for all $x\in\real^2$ and $t\in[0,1]$; and
\item[(b)] $\vartheta_0(x)=1$ for all $x\in\real^2$.
\end{itemize}

From this we can construct a symplectic embedding of
$(\real^2,\theta_t)$ by finding a one-form $b=b_i(x)\,\dd x^i$ on
$\real^2$ whose components solve the first order differential equation
\begin{equation}\label{e25}
1+t\,\varepsilon^{ij}\,\partial_ib_j=\frac{1}{\vartheta_t(x)}
\ ,
\end{equation}
together with the Jacobian equation
\begin{equation}\label{e26}
 \det(\partial_{i}b_{j}) = 0 \ .
\end{equation}
For given $\vartheta_t(x)$ the solution of (\ref{e25}) and (\ref{e26})
was constructed in \cite{Gomes:2009tk}, where a first order classical mechanics on
the cotangent bundle of $\real^2$ was proposed whose canonical
quantization gives a quantization of Poisson algebras
with non-constant bivectors. The action functional of this one-dimensional
topological sigma-model is given by
\begin{align*}
{\mathcal S}(X,P) = \int_\real \, \langle P,\dd X\rangle + \frac
  t2\,(P+X^*b)\wedge\dd(P+X^*b) \ ,
\end{align*}
where $(X,P):\real\to T^*\real^2=\real^2\times(\real^2)^*$ are
(suitably supported) smooth maps, with $\langle\,\cdot\,,\,\cdot\,\rangle$ the 
pairing between $(\real^2)^*$ and $\real^2$. Hamiltonian reduction of the
phase space of this sigma-model by its rank~$2$ second class constraints
defines Dirac brackets which, in our language, yield a symplectic embedding of
the family \eqref{e24} with cosymplectic structure
\begin{align}\label{e29}
\omega_t^{-1} =
  \tfrac t2\,\vartheta_t(x)\, \varepsilon^{ij}\, \partial_i\wedge\partial_j +
  \tfrac12\,\gamma_t{}^i_j(x)\,
  \big(\partial_i\wedge\tilde\partial{}^j+\tilde\partial{}^j\wedge\partial_i\big)
  \ ,
\end{align}
where
\begin{align}\label{eq:gammaijx}
\gamma_t{}^i_j(x) =
  \vartheta_t(x)\,\big(\delta_j^i-t\,\varepsilon^{ik}\,
  \partial_kb_j(x)\big) \ ,
\end{align}
and the Jacobian condition \eqref{e26} ensures the Lagrangian section
condition. 
By condition (a) above this is indeed non-degenerate for all
$t\in[0,1]$, and by condition (b) it coincides with the canonical
cosymplectic structure on $T^*\real^2$ at $t=0$. In particular, it
defines a \emph{strict} deformation of the cotangent symplectic
groupoid $(T^*\real^2,\omega_0)$ for $\real^2$.

The important new feature here, compared to our previous formulation
in \eqref{PB1}, is that the matrix \eqref{eq:gammaijx} does not depend
on the transverse coordinates $p$. Regarding it as a $(1,1)$-tensor field on
$\real^2$, according to \eqref{gtA0} the deformation of abelian gauge
transformations by the Poisson family \eqref{e24} is given by the Poisson
gauge transformations
\begin{align}\label{e30}
\delta_f^{\theta_t}A = \gamma_t(\dd f,\,\cdot\,) +
  t\,\{A,f\}_{\theta_t} \ .
\end{align}
The $L_\infty$-algebra formulation of these gauge symmetries has the
remarkable property given by

\begin{proposition}\label{e31}
The Poisson gauge transformations \eqref{e30} are generated by the
action of the
family of differential graded Poisson algebras on
$C^\infty(\real^2)\oplus\Omega^1(\real^2)$ with non-vanishing brackets 
\begin{align*}
\ell_1^{\,\theta_t}(f) = \gamma_t(\dd f,\,\cdot\,) \ , \quad
  \ell_2^{\,\theta_t}(f,g) = -\{f,g\}_{\theta_t} \qquad \mbox{and}
  \qquad \ell_2^{\,\theta_t}(f,A) = \{A,f\}_{\theta_t} \ ,
\end{align*}
for $f,g\in C^\infty(\real^2)$ and $A\in\Omega^1(\real^2)$.
\end{proposition}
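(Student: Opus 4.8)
The plan is to read the homotopy brackets straight off the explicit variation \eqref{e30} and its closure, and then to promote the resulting two-term structure to a differential graded Poisson algebra. First I would observe that for each fixed $t\in[0,1]$ the cosymplectic structure \eqref{e29} is a genuine symplectic embedding of $(\real^2,\theta_t)$ in the sense of Definition~\ref{def:symplembed}: conditions (a) and (b) on $\vartheta_t$ give $\omega_t^{-1}|_{t=0}=\omega_0^{-1}$ and the correct restriction to the zero section, while the Jacobian condition \eqref{e26} enforces the Lagrangian section property. Proposition~\ref{APhi} then applies, so that \eqref{gtA0} computed from \eqref{e29} reproduces \eqref{e30} and the variations close the Poisson gauge algebra \eqref{closure} with $[\![f,g]\!]_{\theta_t}(A)=\{f,g\}_{\theta_t}$. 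I would invoke the dictionary of Proposition~\ref{prop:PoissonLinfty} (that is,~\cite[Theorem~2]{Fulp:2002kk} for a gauge action of a Lie algebra) to obtain an $L_\infty$-structure on $C^\infty(\real^2)\oplus\Omega^1(\real^2)$, whose brackets are fixed by matching \eqref{eq:Linftygt} and \eqref{eq:Linftyclosure} against \eqref{e30} and its closure bracket.

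The decisive input is that the matrix \eqref{eq:gammaijx} does not depend on the transverse coordinates $p$, so that $s_A^*\gamma_t=\gamma_t$ introduces no dependence on $A$. Hence the first term of \eqref{e30} is entirely independent of $A$ and the second is exactly linear in $A$; the variation is affine in the gauge field. Comparing with \eqref{eq:Linftygt}, I would read off $\ell_1^{\,\theta_t}(f)=\gamma_t(\dd f,\,\cdot\,)$ from the $A$-independent part — a genuine $t$-deformation of the de~Rham differential, reducing to $\dd f$ at $t=0$ by condition (b) — and $\ell_2^{\,\theta_t}(f,A)=\{A,f\}_{\theta_t}$ from the coefficient of the $A$-linear term, while all higher maps $\ell_n^{\,\theta_t}$ with $n\geq3$ vanish on coincident gauge field entries and hence vanish identically by graded skew-symmetry. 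Matching \eqref{closure} with \eqref{eq:Linftyclosure} fixes $\ell_2^{\,\theta_t}(f,g)=-\{f,g\}_{\theta_t}$. Since only $\ell_1^{\,\theta_t}$ and $\ell_2^{\,\theta_t}$ survive, the $L_\infty$-algebra is two-term, i.e.~a differential graded Lie algebra, with smooth dependence on $t$ inherited from $\vartheta_t$ and $\gamma_t$.

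To upgrade this to a differential graded Poisson algebra I would follow the pattern of Proposition~\ref{prop:PinftyPoisson}: endow $\CCA=C^\infty(\real^2)\oplus\Omega^1(\real^2)$ with the truncated exterior product ($f\cdot g=f\,g$, $f\cdot A=f\,A$, $A\cdot B=0$) and check the derivation properties. These reduce to routine Leibniz computations: $\ell_2^{\,\theta_t}(f,\,\cdot\,)$ is a derivation of the product by the biderivation property of the bracket recorded in Remark~\ref{rem:Abracket}, and $\ell_1^{\,\theta_t}$ is a derivation because $\dd$ is one and $\gamma_t$ is $C^\infty(\real^2)$-linear in its one-form slot. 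For a two-term structure the homotopy Jacobi identities collapse to $\CJ_1^{\,\theta_t}=0$ (automatic, since nothing sits in degree two), $\CJ_2^{\,\theta_t}=0$, and $\CJ_3^{\,\theta_t}=0$, the last being the strict Jacobi identity for $\ell_2^{\,\theta_t}$, which holds because every bivector on $\real^2$ is Poisson for dimensional reasons.

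The hard part will be the compatibility identity $\CJ_2^{\,\theta_t}=0$ between the \emph{deformed} differential $\ell_1^{\,\theta_t}=\gamma_t(\dd\,\cdot\,,\,\cdot\,)$ and the bracket $\ell_2^{\,\theta_t}$. Because $\ell_1^{\,\theta_t}$ is no longer the bare de~Rham differential, in contrast with Propositions~\ref{prop:PoissonLinfty} and~\ref{prop:PinftyPoisson}, one cannot simply quote those results; the relation is not the familiar failure of the Leibniz rule of Remark~\ref{rem:ell2theta} but its $\gamma_t$-twisted version, which in local coordinates becomes a first order differential equation relating $\gamma_t$, $\vartheta_t$ and $b$. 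I expect the crux to be recognising this equation as precisely the Poisson integrability condition $[\omega_t^{-1},\omega_t^{-1}]=0$ for \eqref{e29} — equivalently the system \eqref{e25}--\eqref{e26} solved in~\cite{Gomes:2009tk} — so that, the Dirac bracket there being genuinely Poisson, $\CJ_2^{\,\theta_t}=0$ holds. In practice one can also bypass this computation altogether by reading $\CJ_2^{\,\theta_t}=0$ directly as the closure statement already supplied by Proposition~\ref{APhi}.
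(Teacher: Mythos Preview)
Your proposal is correct and rests on the same underlying fact as the paper --- the Poisson integrability of the cosymplectic bivector \eqref{e29} --- but the paper reaches the conclusion more directly. Rather than passing through the gauge algebra dictionary of Propositions~\ref{APhi} and~\ref{prop:PoissonLinfty} and then checking $\CJ_2^{\,\theta_t}=0$ afterwards, the paper simply observes that in local coordinates $\ell_1^{\,\theta_t}(f)_i=\{f,p_i\}_{\omega_t^{-1}}$, so that the Leibniz compatibility $\ell_1^{\,\theta_t}\{f,g\}_{\theta_t}=\{\ell_1^{\,\theta_t}(f),g\}_{\theta_t}+\{f,\ell_1^{\,\theta_t}(g)\}_{\theta_t}$ is \emph{literally} the Jacobi identity for $\{\,\cdot\,,\,\cdot\,\}_{\omega_t^{-1}}$ applied to $f,g,p_i$, together with the Lagrangian condition $\{p_i,p_j\}_{\omega_t^{-1}}=0$. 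This bypasses the need to unpack \eqref{e25}--\eqref{e26} or to invoke Proposition~\ref{APhi} as a shortcut. Your route has the merit of situating the example explicitly inside the general $L_\infty$ framework (and makes the reduction from Proposition~\ref{prop:PoissonLinfty} to a DGLA transparent), while the paper's route is shorter and exhibits the DGLA axioms as direct shadows of the ambient symplectic structure without any intermediate machinery.
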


\begin{proof}
This is a simple consequence of the fact that the components of the
Poisson bivector \eqref{e29} do not depend on $p$, the Jacobi
identities for the Poisson brackets, and the Lagrangian zero section condition
$\{p_i,p_j\}_{\omega_t^{-1}}=0$. In coordinates
\begin{align*}
\ell^{\,\theta_t}_1(f) = \{f,p_i\}_{\omega_t^{-1}}\,\dd x^i \ ,
\end{align*}
and the differential condition $\big(\ell_1^{\,\theta_t}\big)^2=0$ follows
from the Jacobi identity for the Poisson bracket
$\{\,\cdot\,,\,\cdot\,\}_{\omega_t^{-1}}$ along with
$\{p_i,p_j\}_{\omega_t^{-1}}=0$. Similarly, the derivation property
\begin{align*}
  \ell_1^{\,\theta_t}\{f,g\}_{\theta_t} =
  \big\{\ell_1^{\,\theta_t}(f),g\big\}_{\theta_t} +
  \big\{f,\ell^{\,\theta_t}_1(g)\big\}_{\theta_t}
\end{align*}
holds as a consequence of the Jacobi identity for the bracket
$\{\,\cdot\,,\,\cdot\,\}_{\omega_t^{-1}}$. The homotopy Jacobi
identity $\CJ_3=0$ is simply the graded Jacobi identity for the Lie bracket
$\{\,\cdot\,,\,\cdot\,\}_{\theta_t}$, and the derivation properties of the brackets are clear.
\end{proof}

\begin{remark}
Proposition~\ref{e31} implies that there is no need to introduce higher
brackets in the $L_\infty$-structure on the gauge algebra in this
case, provided that one ``twists'' the differential, which in our
previous treatments always coincided with the exterior derivative
$\dd$, in such a way that the new differential is a derivation of the
Poisson algebra corresponding to \eqref{e24}. This generalizes 
Example~\ref{ex:Linftyconstant} for constant Poisson structures, recovered here in the case that $\vartheta_t(x)=1$ for all
$x\in\real^2$ and $t\in[0,1]$, for which $b=0$. The ``twisting'' here is
captured automatically by the symplectic embedding approach to the
deformation of gauge transformations that we developed in the present
paper. 
\qen\end{remark}

\begin{example}
Consider the family of rotationally symmetric Poisson structures
\eqref{e24} with 
\begin{align*}
\vartheta_t(x) = \frac1{1+t\,|x|^2} \ ,
\end{align*}
where $|\cdot|$ is the standard Euclidean norm. In this case, one
finds
\begin{align*}
b_i(x) = -\tfrac14\,|x|^2\,\varepsilon_{ij}\,x^j
\end{align*}
as a solution to \eqref{e25} and \eqref{e26}. The twisted differential
from Proposition~\ref{e31} then has components
\begin{align*}
\ell_1^{\,\theta_t}(f)_i = \frac1{1+t\,|x|^2} \, \Big( \big(1+\tfrac t4\,|x|^2\big)\,\partial_if +
  \tfrac
  t2\,\varepsilon_{ik}\,x^k\,x^l\,\varepsilon_l{}^j\,\partial_jf \Big)
  \ ,
\end{align*}
for $i=1,2$.
\qen\end{example}

\subsection{Magnetic Poisson structures}
\label{sec:magneticPoisson}

Let $M=T^*Q$ be the cotangent bundle of a $d$-dimensional manifold
$Q$, with bundle projection $\varpi:M\to Q$. The manifold $M$ is a symplectic manifold with
canonical symplectic two-form which we denote by $\sigma_0$. We write local
coordinates on $M$ as $(x^i)=(q^a,q^*_a)$, with $a=1,\dots,d$ and
$i=1,\dots,2d$, where $(q^a)$ are local coordinates on $Q$ and
$(q^*_a)$ are canonically conjugate coordinates in the normal
directions to the zero section $Q\subset M$. We denote the
corresponding derivatives by $(\partial_i)=(\partial_a,\partial_*^a)$,
where $\partial_a=\partial/\partial q^a$ and
$\partial_*^a=\partial/\partial q^*_a$.

Let $B\in\Omega^2(Q)$ be an arbitrary two-form on the base manifold
  $Q$. Its pullback to $M$ deforms the symplectic structure $\sigma_0$ to an almost
  symplectic form
  \begin{align*}
\sigma_B = \sigma_0-\varpi^*B
  \end{align*}
which is closed if and only if $B$ is a closed two-form on $Q$. The
inverse $\theta_B=\sigma_B^{-1}$ is an $H$-twisted Poisson structure
on $M$, with twisting three-form $H\in\Omega^3(M)$ given by
\begin{align*}
H = \varpi^*\dd B \ .
\end{align*}
The bivector $\theta_B$ defines a \emph{magnetic Poisson structure} on
$M$; the terminology comes from monopole physics where the
two-form $B$ plays the role of a magnetic field.
In local coordinates, where $B=\frac12\,B_{ab}(q) \, \dd q^a\wedge\dd
q^b$ and $H=\frac1{3!}\,H_{abc}(q)\,\dd q^a\wedge\dd q^b\wedge\dd
q^c$, the bivector $\theta_B$ reads 
\begin{align*}
\theta_B = \tfrac12\,(\partial_a\wedge
  \partial_*^a+\partial_*^a\wedge\partial_a) + \tfrac12\,B_{ab}(q)\,
  \partial_*^a\wedge\partial_*^b \ ,
\end{align*}
and the Jacobiator
\begin{align*}
  \Pim_B=[\theta_B,\theta_B]=\tfrac1{3!}\,
  H_{abc}(q)\,\partial_*^a\wedge\partial_*^b\wedge\partial_*^c 
\end{align*}
has non-vanishing components
only in the normal directions to the zero section $Q\subset M$.

Deformation quantization of such twisted Poisson manifolds was originally considered
in~\cite{Mylonas2012} using Kontsevich's formalism. Their higher
geometric quantization was developed in~\cite{Bunk:2018qvk} by regarding the
three-form $H$ as the curvature of a trivial gerbe on $M$, and the
extension to non-trivial gerbes is discussed in~\cite{Bunk:2020rju};
see~\cite{Szabo:2019gvu} for a review of the different perspectives to
quantization of magnetic Poisson structures. In the language of
Remark~\ref{rem:twistedB}, the almost cosymplectic structure
$\sigma_B^{-1}$ is equivalent to the canonical cosymplectic structure
$\sigma_0^{-1}$ by means of a $B$-transformation, which leads to many simplifications in its symplectic
embedding construction: many components of the tensors $\gamma^{(n)}$ and
$\theta^{(n)}$ of the symplectic embedding vanish as various
combinations of derivatives acting on the components of $\theta_B$ and
$\Pim_B$, which are functions on the base manifold $Q$, are identically zero; in particular
$\Pim_B^{ijk}\,\partial_k\theta_B^{lm}=0$, see e.g.~\cite{KS17}.

\paragraph{Linear magnetic Poisson structures.}

We specialise now to
$Q=\real^d$ with the \emph{linear} magnetic Poisson structure defined by the two-form
\begin{align*}
B_{ab}(q) = \tfrac12\,H_{abc}\,q^a \ ,
\end{align*}
where $H$ is a constant three-form on $\real^d$. This is
the twisted Poisson analog of a constant Poisson structure: In this
case the components of the Jacobiator $\Pim_B$ are constant and the generalized Bopp shift \eqref{eq:Boppshift} reads
\begin{align*}
\pi_{\theta_B}(x,p)^i = x^i-\tfrac t2\,\theta_B^{ij}(x)\,p_j \ ,
\end{align*}
so $\gamma^{(n)}=0$ and $\theta^{(n)}=0$ for all
$n\geq2$~\cite{KS18}. We denote the fibre coordinates of $T^*M$ by
$(p_i)=(\xi_a,\xi_*^a)$. In the notation of Sections~\ref{sec:Poissonembedding}
and~\ref{sec:quasiPoisson}, the non-zero components of the symplectic
embedding are then given by 
\begin{align*}
\gamma_{ab}(\xi_*) = -\tfrac12\,H_{abc}\,\xi_*^c \ , \quad
  \underline{\theta}\,^a_b=-\underline{\theta}\,_{a}^b = \delta^a_b
  \qquad \mbox{and} \qquad \underline{\theta}_{\,ab}(q,\xi_*) =
  H_{abc}\,(q^c-t\,\xi_*^c) \ .
\end{align*}

For the corresponding deformation of abelian gauge transformations, we
note that the tensors $\Upm^{(n)}$ determined from Corollary~\ref{cor:LambdamA} also vanish for all
$n\geq2$, so that in this case $\Lambdam^{ij}(A) =
-\tfrac12\,\Pim_B^{ijk}\,A_k$. This depends only on the transverse
components of the gauge fields $A\in\Omega^1(M)$ to the zero section
$Q\subset M$, so we decompose gauge fields as
\begin{align}\label{eq:Aqdecomp}
  A = A_i(x)\, \dd x^i = \alpha_a(q,q^*)\, \dd q^a + \alpha^a_*(q,q^*)\, \dd q_a^*
\end{align}
and obtain explicitly
\begin{align*}
\Lambdam_{ab}(\alpha_*) = -\tfrac12\,H_{abc}\,\alpha_*^c
\end{align*}
as the only non-zero components of $\Lambdam^{ij}(A)$.
The Lagrangian multipliers $L_f$ are constructed from
Proposition~\ref{p2} as formal power series
which are now given explicitly to all orders by the constant Jacobiator
$\Pim_B$ on $M$. We can write them in terms of an analytic
function on the jet space $J^1T^*M$ whose Taylor series around
$t=0$ coincides with the asymptotic series \eqref{m5}. For this, we
introduce the $2d{\times}2d$ matrix ${\sf M}$ with elements
\begin{align*}
  {\sf M}^{a}_b(\alpha_*,\partial_* \alpha_*)=-H_{bce}\,\alpha_*^c\,\partial_*^a\alpha_*^e
\end{align*}
and observe that the non-vanishing components in \eqref{m5} can be rewritten as
\begin{align}\label{eq:Labsum}
L_{ab}(\alpha_*,\partial_* \alpha_*) = -\frac12\,H_{ace}\,\alpha_*^e \ 
  \sum_{n=0}^\infty \, \Big(\frac{t^2}2\Big)^n \, \big({\sf
  M}^n\big)^{c}_{b} = -\frac12\,H_{ace}\,\alpha_*^e\,\big[\big(\one -
  \tfrac{t^2}2 \,{\sf M}\big)^{-1}\big]_{b}^{c} \ .
\end{align}
The Lagrangian multipliers 
\begin{align*}
L_f={L_f}_a\,\partial_*^a \qquad \mbox{with} \quad {L_f}_a = t^2\, L_{ab}(\alpha_*,\partial_*\alpha_*) \, \partial_*^bf
\end{align*}
are then transverse to $Q\subset M$, depend only on the transverse
components of the gauge fields and their normal derivatives, and are
determined entirely by the normal derivatives of gauge parameters
$f\in C^\infty(M)$. 

From Remark~\ref{rem:quasiPoissongauge}, the corresponding deformation
of abelian gauge transformations is given by the almost Poisson gauge
transformations
\begin{align*}
\delta_f^{\theta_B} A = \delta_f^{\theta_B}\alpha_a \, \dd q^a +
  \delta_f^{\theta_B}\alpha^a_* \, \dd q^*_a \ ,
\end{align*}
where
\begin{align}
\delta_f^{\theta_B}\alpha_a &=
                              \partial_af+t\,\{\alpha_a,f\}_{\theta_B}
                              + \tfrac t2\,
                              H_{abc}\,\alpha_*^c\,\partial_*^bf +
                              t^2\,H_{bce}\,\alpha_*^c\,\partial_*^ef\,\partial_*^b\alpha_a
  \nonumber \\
  & \quad \, \hspace{1cm} +\tfrac{t^2}2\, H_{bln}\,\alpha_*^n\,\partial_*^kf\, \big[\big(\one -
  \tfrac{t^2}2 \,{\sf M}\big)^{-1}\big]_{k}^{l} \,
    \big(\partial_a\alpha_*^b-\partial_*^b\alpha_a + \tfrac t2\,
    H_{ace}\,\alpha_*^e\,\partial_*^c\alpha_*^b
    +t\,\{\alpha_a,\alpha_*^b\}_{\theta_B} \nonumber \\
  & \quad \, \hspace{7cm} + t^2\,
    H_{cem}\,\alpha_*^m\,\partial_*^c\alpha_*^b\,\partial_*^e\alpha_a
    \big) \ , \nonumber \\[4pt]
\delta_f^{\theta_B}\alpha_*^a &= \partial_*^af +
                                t\,\{\alpha_*^a,f\}_{\theta_B} +
                                t^2\,H_{bce}\,\alpha_*^c\,\partial_*^ef\,\partial_*^b\alpha_*^a
  \nonumber \\
  & \quad \, \hspace{1cm}+\tfrac{t^2}2\, H_{bln}\,\alpha_*^n\,\partial_*^kf\, \big[\big(\one -
  \tfrac{t^2}2 \,{\sf M}\big)^{-1}\big]_{k}^{l} \,
    \big(\partial_*^a\alpha_*^b-\partial_*^b\alpha_*^a
    +t\,\{\alpha_*^a,\alpha_*^b\}_{\theta_B} \nonumber\\
  & \quad \, \hspace{7cm} +
    t^2\,H_{cem}\,\alpha_*^m\,\partial_*^c\alpha_*^b\,\partial_*^e\alpha_*^a
    \big) \ . \label{eq:deltaalpha}
\end{align}
These satisfy the almost Poisson gauge algebra \eqref{eq:qPoissonalg}
with the field dependent gauge parameter
\begin{align}
[\![f,g]\!]_{\theta_B}(\alpha_*) &= \{f,g\}_{\theta_B} -
                            t\,H_{abc}\,\alpha_*^c\,\partial_*^af\,\partial_*^bg
  \nonumber \\
  & \quad \, -\tfrac{t^3}4\,H_{bsn}\,H_{crm}\,\alpha_*^n\,\alpha_*^m\,
    \partial_*^kf\,\partial_*^pg\, \big[\big(\one -
  \tfrac{t^2}2 \,{\sf M}\big)^{-1}\big]_{k}^{s} \, \big[\big(\one -
    \tfrac{t^2}2 \,{\sf M}\big)^{-1}\big]_{p}^{r} \label{eq:fgthetaBA}
  \\
  & \quad \, \hspace{4cm} \times 
    \big(\partial_*^b\alpha_*^c-\partial_*^c\alpha_*^b + t\,
    \{\alpha_*^b,\alpha_*^c\}_{\theta_B} 
    -t^2\,H_{ael}\,\alpha_*^l\,\partial_*^a\alpha_*^b\,\partial_*^e\alpha_*^c
    \big) \ . \nonumber
\end{align}
Notice that, apart from the terms involving the almost Poisson brackets
$\{\,\cdot\,,\,\cdot\,\}_{\theta_B}$, the gauge variations of the
transverse components $\alpha_*^a$ in \eqref{eq:deltaalpha} as well as
the brackets \eqref{eq:fgthetaBA} are
determined completely by normal components to the embedding $Q\subset
M$. 

The $L_\infty$-algebra of these almost Poisson gauge symmetries is
given by Proposition~\ref{prop:quasiPoissonLinfty}, with infinitely
many non-vanishing brackets which in
this example greatly simplify due to the vanishing of most tensors in
the symplectic embedding and in the Lagrangian multipliers.  
They can be read
off directly from \eqref{eq:deltaalpha} and \eqref{eq:fgthetaBA} by reinstating
the formal power series expansion \eqref{eq:Labsum}. For the coincident
gauge field brackets involving a single gauge parameter, we use the
decomposition \eqref{eq:Aqdecomp} to similarly write
$\ell_{n+1}^{\,\theta_B}\big(f,A^{\otimes n}\big)\in\Omega^1(M)$ in
component form
\begin{align*}
\ell_{n+1}^{\,\theta_B}\big(f,A^{\otimes n}\big) =
  \lambda^{\theta_B}_{n+1}\big(f,A^{\otimes n}\big)_a \, \dd q^a +
  \lambda^{\theta_B}_{n+1}\big(f,\alpha_*^{\otimes n}\big)^a_* \, \dd q^*_a \ ,
\end{align*}
where
\begin{align*}
  \lambda_1^{\theta_B}(f)_a &= \partial_af \ , \\[4pt]
  \lambda_2^{\theta_B}(f,A)_a &= \{\alpha_a,f\}_{\theta_B} +
                                \tfrac12\,H_{abc}\,\alpha_*^c\,\partial_*^bf
                                \ , \\[4pt]
  \lambda_3^{\theta_B}\big(f,A^{\otimes2}\big)_a &=
                                                   H_{blk}\,\alpha_*^l\,\partial^k_*f
                                                   \,
                                                   \big(\partial_a\alpha_*^b-3\,\partial_*^b\alpha_a\big)
                                                   \ , \\[4pt]
  \lambda_4^{\theta_B}\big(f,A^{\otimes3}\big)_a &= 3\,
                                                   H_{blk}\,\alpha_*^l\,\partial_*^kf
                                                   \,
                                                   \big(\{\alpha_a,\alpha_*^b\}_{\theta_B}
                                                   +
                                                   \tfrac12\,H_{ace}\,\alpha_*^e\,\partial_*^c\alpha_*^b\big)
                                                   \ , \\[4pt]
  \lambda_{2n-1}^{\theta_B}\big(f,A^{\otimes2n-2}\big)_a &=
                                                         -\frac{(2n-2)!}{2^{n-1}}
                                                         \,
                                                         H_{blm}\, H_{kb_1c_1}\,H_{a_1b_2c_2}\cdots
                                                         H_{a_{n-3}b_{n-2}c_{n-2}}\,\alpha_*^m\, \alpha_*^{b_1}\cdots\alpha_*^{b_{n-2}}\,\partial_*^kf                                            
  \\
  & \quad \, \hspace{1cm} \times \,
    \partial_*^{a_1}\alpha_*^{c_1}\cdots
    \partial_*^{a_{n-3}}\alpha_*^{c_{n-3}} \,
    \partial_*^l\alpha_*^{c_{n-2}}\,\big(\partial_a\alpha_*^b-3\,\partial_*^b\alpha_a\big)
  \ , \\[4pt]
  \lambda_{2n}^{\theta_B}\big(f,A^{\otimes2n-1}\big)_a &=
                                                         -\frac{(2n-1)!}{2^{n-1}}
                                                         \,
                                                         H_{blm}\, H_{kb_1c_1}\,H_{a_1b_2c_2}\cdots
                                                         H_{a_{n-3}b_{n-2}c_{n-2}}\,\alpha_*^m\, \alpha_*^{b_1}\cdots\alpha_*^{b_{n-2}}\,\partial_*^kf\,
  \\
  & \quad \, \hspace{1cm} \times \,
    \partial_*^{a_1}\alpha_*^{c_1}\cdots\partial_*^{a_{n-3}}\alpha_*^{c_{n-3}}\,\partial_*^l\alpha_*^{c_{n-2}}
    \,
    \big(\{\alpha_a,\alpha_*^b\}_{\theta_B} + \tfrac12\,H_{ace}\,\alpha_*^e\,\partial_*^c\alpha_*^b
    \big) \ , 
\end{align*}
and
\begin{align}
\lambda_1^{\theta_B}(f)_*^a &= \partial_*^af \ , \nonumber \\[4pt]
\lambda_2^{\theta_B}(f,\alpha_*)_*^a &= \{\alpha_*^a,f\}_{\theta_B} \ ,
  \nonumber \\[4pt]
  \lambda_3^{\theta_B}\big(f,\alpha_*^{\otimes2}\big)_*^a &=
                                                            H_{blk}\,\alpha_*^l\,\partial_*^kf
                                                            \,
                                                            \big(\partial_*^a\alpha_*^b-3\,\partial_*^b\alpha_*^a\big)
                                                            \ ,
  \nonumber \\[4pt]
  \lambda_4^{\theta_B}\big(f,\alpha_*^{\otimes3}\big)_*^a &=
                                                            3\,H_{blk}\,\alpha_*^l\,\partial_*^kf
                                                            \,
                                                            \{\alpha_*^a,\alpha_*^b\}_{\theta_B}
                                                            \ , \label{eq:lambdamagnetic}\\[4pt]
\lambda_{2n-1}^{\theta_B}\big(f,\alpha_*^{\otimes2n-2}\big)_*^a &=
                                                         -\frac{(2n-2)!}{2^{n-1}}
                                                         \,
                                                         H_{blm}\, H_{kb_1c_1}\,H_{a_1b_2c_2}\cdots
                                                         H_{a_{n-3}b_{n-2}c_{n-2}}\,\alpha_*^m\, \alpha_*^{b_1}\cdots\alpha_*^{b_{n-2}}\,\partial_*^kf                                            
  \nonumber \\
  & \quad \, \hspace{1cm} \times \,
    \partial_*^{a_1}\alpha_*^{c_1}\cdots
    \partial_*^{a_{n-3}}\alpha_*^{c_{n-3}} \,
    \partial_*^l\alpha_*^{c_{n-2}}\,\big(\partial_*^a\alpha_*^b-3\,\partial_*^b\alpha_*^a\big)
  \ , \nonumber \\[4pt]
  \lambda_{2n}^{\theta_B}\big(f,\alpha_*^{\otimes2n-1}\big)_*^a &=
                                                         -\frac{(2n-1)!}{2^{n-1}}
                                                         \,
                                                         H_{blm}\, H_{kb_1c_1}\,H_{a_1b_2c_2}\cdots
                                                         H_{a_{n-3}b_{n-2}c_{n-2}}\,\alpha_*^m\, \alpha_*^{b_1}\cdots\alpha_*^{b_{n-2}}\,\partial_*^kf\,
  \nonumber \\
  & \quad \, \hspace{1cm} \times \,
    \partial_*^{a_1}\alpha_*^{c_1}\cdots\partial_*^{a_{n-3}}\alpha_*^{c_{n-3}}\,\partial_*^l\alpha_*^{c_{n-2}}
    \,
    \{\alpha_*^a,\alpha_*^b\}_{\theta_B} \ , \nonumber
\end{align}
for $n\geq3$. For the coincident gauge field brackets involving two
gauge parameters, after some further tedious calculation and simplification we obtain for the first few brackets
\begin{align}
  \ell_2^{\,\theta_B}(f,g) &= -\{f,g\}_{\theta_B} \ , \nonumber \\[4pt]
  \ell_3^{\,\theta_B}(f,g,\alpha_*) &=
                             H_{abc}\,\alpha_*^c\,\partial_*^af\,\partial_*^bg
                             \ , \nonumber \\[4pt]
  \ell_4^{\,\theta_B}\big(f,g,\alpha_*^{\otimes2}\big) &= 0 \ ,
                                                       \nonumber \\[4pt]
  \ell_5^{\,\theta_B}\big(f,g,\alpha_*^{\otimes 3}\big) &= -\tfrac32\, H_{bke}\,H_{cpm}\,\alpha_*^e\,\alpha_*^m\,
    \partial_*^kf\,\partial_*^pg\, 
    \big(\partial_*^b\alpha_*^c-\partial_*^c\alpha_*^b\big) \ ,
  \nonumber \\[4pt]
  \ell_6^{\,\theta_B}\big(f,g,\alpha_*^{\otimes 4}\big) &= -6\,H_{bke}\,H_{cpm}\,\alpha_*^e\,\alpha_*^m\,
    \partial_*^kf\,\partial_*^pg\,
                                                 \{\alpha_*^b,\alpha_*^c\}_{\theta_B}
                                                        \ , \nonumber \\[4pt]
  \ell_7^{\,\theta_B}\big(f,g,\alpha_*^{\otimes5}\big) &=
                                                       -30\,H_{bke}\,H_{cpm}\,H_{arl}\,\alpha_*^e\,\alpha_*^m\,\partial_*^kf\,\partial_*^pg
  \nonumber \\
  & \quad \, \qquad \times \,
    \big(\partial_*^a\alpha_*^b\,\partial_*^r\alpha_*^c + \tfrac12\,
    \partial_*^b\alpha_*^a\,\partial_*^r\alpha_*^c + \tfrac12\,
    \partial_*^a\alpha_*^b\,\partial_*^c\alpha_*^r\big) \ , \label{eq:ellfgmagnetic1}
\end{align}
together with the higher order brackets
\begin{align}
  \ell_{2n}^{\,\theta_B}\big(f,g,\alpha_*^{\otimes2n-2}\big) &=
                                                             \frac{(2n-2)!}{2^{n-1}}
                                                             \,
                                                             H_{bse}\,H_{crm}\,\alpha_*^e\,\alpha_*^m\,\alpha_*^{b_1}
                                                             \cdots
                                                             \alpha_*^{b_{n-3}}\,\partial_*^kf\,\partial_*^pg\,\{\alpha_*^b,\alpha_*^c\}_{\theta_B}
  \nonumber \\
  & \quad \, \times \,\Big(H_{a_1b_2c_2}\cdots
    H_{a_{n-4}b_{n-3}c_{n-3}}\,\partial_*^{a_1}\alpha_*^{c_1}\cdots\partial_*^{a_{n-4}}\alpha_*^{c_{n-4}}
  \nonumber \\
  & \quad \, \hspace{2cm} \times \, \big(\delta_k^s
    \, H_{pb_1c_1}\,\partial_*^r\alpha_*^{c_{n-3}} + \delta_p^r\,
    H_{kb_1c_1}\,\partial_*^s\alpha_*^{c_{n-3}}\big) \nonumber \\
  & \quad \, \qquad + \sum_{l=1}^{n-4} \,
    H_{kb_1c_2}\,H_{a_1b_2c_2}\cdots
    H_{a_{l-1}b_lc_l} \nonumber \\
  & \quad \, \qquad \hspace{1.5cm} \times \, H_{pb_{l+1}c_{l+1}}\,H_{a_{l+1}b_{l+2}c_{l+2}}\cdots
    H_{a_{n-4}b_{n-3}c_{n-3}} \nonumber \\
  & \quad \, \qquad \hspace{2cm} \times \,
    \partial_*^{a_1}\alpha_*^{c_1}\cdots\partial_*^{a_{l-1}}\alpha_*^{c_{l-1}}\,\partial_*^s\alpha_{*}^{c_l}
  \nonumber \\
  & \quad \, \qquad \hspace{2.5cm} \times \, \partial_*^{a_{l+1}}\alpha_*^{c_{l+1}}\cdots\partial_*^{a_{n-4}}\alpha_*^{c_{n-4}}\,\partial_*^r\alpha_*^{c_{n-3}}
    \Big) \ , \label{eq:ellfgmagnetic2}
\end{align}
and
\begin{align}
  \ell_{2n+1}^{\,\theta_B}\big(f,g,\alpha_*^{\otimes2n-1}\big) &=
                                                               -\frac{(2n-1)!}{2^{n-1}}
                                                               \,
                                                               H_{bse}\,H_{crm}\,\alpha_*^e\,\alpha_*^m\,\alpha_*^{b_1}\cdots\alpha_*^{b_{n-2}}\,\partial_*^kf\,\partial_*^lg
  \nonumber \\
  & \quad \, \times \, \bigg(H_{a_1b_2c_2}\cdots
    H_{a_{n-4}b_{n-3}c_{n-3}}\,\partial_*^{a_1}\alpha_*^{c_1}\cdots\partial_*^{a_{n-4}}\alpha_*^{c_{n-4}}
  \nonumber \\
  & \quad \, \qquad \times \, \Big(\frac12\,
    H_{a_{n-3}b_{n-2}c_{n-2}}\,\partial_*^{a_{n-3}}\alpha_*^{c_{n-3}}\,\big(\partial_*^b\alpha_*^c
    - \partial_*^c\alpha_*^b\big) \nonumber \\
  & \quad \qquad \hspace{2cm} \times \, \big(\delta_k^s\,
    H_{pb_1c_1}\,\partial_*^r\alpha_*^{c_{n-2}} + \delta_p^r\,
    H_{kb_1c_1}\,\partial_*^s\alpha_*^{c_{n-2}}\big) \nonumber \\
  & \quad \, \qquad \qquad +
    H_{alb_{n-2}}\,\partial_*^a\alpha_*^b\,\partial_*^l\alpha_*^c\,\big(\delta_k^s\,
    H_{pb_1c_1}\,\partial_*^r\alpha_*^{c_{n-3}} + \delta_p^r\,
    H_{kb_1c_1}\,\partial_*^s\alpha_*^{c_{n-3}}\big) \Big) \nonumber \\
  & \quad \, \qquad +
    \Big(\frac12\,H_{a_{n-3}b_{n-2}c_{n-2}}\,\partial_*^{a_{n-3}}\alpha_*^{c_{n-3}}\,\partial_*^r\alpha_*^{c_{n-2}}\,\big(\partial_*^b\alpha_*^c
    - \partial_*^c\alpha_*^b\big) \nonumber \\
  & \quad \, \qquad \hspace{2cm} +
    H_{alb_{n-2}}\,\partial_*^a\alpha_*^b\,\partial_*^l\alpha_*^c\,\partial_*^r\alpha_*^{c_{n-3}}\Big)
  \nonumber \\
  & \quad \, \qquad \qquad \times \, \sum_{l=1}^{n-4} \,
    H_{kb_1c_1}\,H_{a_1b_2c_2}\cdots
    H_{a_{l-1}b_lc_l} \nonumber \\
  & \quad \qquad \qquad \hspace{1.5cm} \times \, H_{pb_{l+1}c_{l+1}}\,H_{a_{l+1}b_{l+2}c_{l+2}}\cdots
    H_{a_{n-4}b_{n-3}c_{n-3}} \nonumber \\
  & \quad \qquad \qquad \hspace{1.5cm} \times
    \partial_*^{a_1}\alpha_*^{c_1}\cdots\partial_*^{a_{l-1}}\alpha_*^{c_{l-1}}\,\partial_*^s\alpha_*^{c_l}\,\partial_*^{a_{l+1}}\alpha_*^{c_{l+1}}\cdots\partial_*^{a_{n-4}}\alpha_*^{c_{n-4}}
  \nonumber \\
  & \quad \, \qquad + \frac12\,H_{kb_1c_1}\,H_{a_1b_2c_2}\cdots
    H_{a_{n-4}b_{n-3}c_{n-3}}\,H_{pb_{n-2}c_{n-2}} \nonumber \\
  & \quad \, \qquad \hspace{2cm} \times \,
    \partial_*^{a_1}\alpha_*^{c_1}\cdots\partial_*^{a_{n-4}}\alpha_*^{a_{n-4}c_{n-4}}\,\partial_*^s\alpha_*^{c_{n-3}}\,\partial_*^r\alpha_*^{c_{n-2}}\bigg)
    \ , \label{eq:ellfgmagnetic3}
\end{align}
for $n\geq4$.

\begin{remark}\label{rem:triproducts}
The forms of the transverse gauge transformations in \eqref{eq:deltaalpha} and
the brackets \eqref{eq:fgthetaBA}, as well as their corresponding
$L_\infty$-structures \eqref{eq:lambdamagnetic} and
\eqref{eq:ellfgmagnetic1}--\eqref{eq:ellfgmagnetic3}, suggest a natural truncation of
almost Poisson gauge transformations in this case to gauge parameters and gauge
fields along the normal directions to the zero section $Q\subset
M$. Let $Q^*$ be
the submanifold defined by the equations $q_a=0$ in $M$. Given $f,g\in
C^\infty(Q^*)$ and $\alpha_*\in\Omega^1(Q^*)$, the pullbacks
$\delta_f^{\theta_B}\alpha_*\big|_{Q^*}$ and
\smash{$[\![f,g]\!]_{\theta_B}(\alpha_*)\big|_{Q^*}$} eliminate only the terms
involving almost Poisson brackets $\{\,\cdot\,,\,\cdot\,\}_{\theta_B}$,
leaving a non-trivial
dependence on the three-form $H$ which deforms the standard abelian
gauge transformations on the manifold
$Q^*$. In~\cite{Aschieri:2015roa} it was shown that this pullback
operation turns the nonassociative star-product on
$C^\infty(M)[[\hbar]]$, which quantizes the
twisted Poisson structure $\theta_B$, into a sequence of `triproducts' on
$C^\infty(Q^*)[[\hbar]]$, which quantizes the
trivector $\Pim_B\in\mfX^3(Q^*)$ that 
for $d=3$ defines a Nambu-Poisson structure (of degree~$3$) on
$Q^*=\real^3$. This pullback operation can thus be thought of as
defining a `Nambu-Poisson gauge symmetry' which ought to be related to
a 3-Lie algebra (or equivalently a related Lie 2-algebra) action on
$\Omega^1(Q^*)$. Indeed, the pullback of the $L_\infty$-structure to
$Q^*$ does not involve any $2$-brackets, as
$\ell_2^{\,\theta_B}(f,\alpha_*)\big|_{Q^*}=0$ and
\smash{$\ell_2^{\,\theta_B}(f,g)\big|_{Q^*}=0$}, and should be regarded as the
homotopy algebra action underlying this higher Lie algebra gauge
symmetry. It would be interesting to work out the details and
understand the underlying structures better.
\qen
\end{remark}

\appendix

\newsection{Cosymplectic brackets on the constraint locus}
\label{app:useful}

Let $(T^*M,\omega)$ be a local symplectic embedding of an almost Poisson
manifold $(M,\theta)$, let $U\subseteq M$ be an open subset, and let
$\lambda_0=p_i\,\dd x^i$ be the Liouville one-form on $T^*U$. Let
$f,g,h\in C^\infty(U)$ and $A\in\Omega^1(U)$. In this appendix we
derive some useful identities for the cosymplectic brackets on the
constraint locus ${\sf im}(s_A)\subset T^*U$, where
$\Phi_A=(p_i-A_i(x))\,\dd x^i=0$, which we use in the main text.

We will need the relation
\begin{align}
& s_A^*\{\pi^*f,\{\pi^*g,\lambda_0\}_{\omega^{-1}}\}_{\omega^{-1}} -
  s_A^*\{\pi^*f,\pi^*s_A^*\{\pi^*g,\lambda_0\}_{\omega^{-1}}\}_{\omega^{-1}}
  \nonumber \\[4pt]
& \hspace{5cm} = t\,s_A^*\{\pi^*f,\gamma(\pi^*\dd
            g,\,\cdot\,)\}_{\omega^{-1}} - t\,
            s_A^*\{\pi^*f,\pi^*s_A^*\gamma(\dd
            g,\,\cdot\,)\}_{\omega^{-1}} \nonumber \\[4pt]
& \hspace{5cm} =
                                                   s_A^*\big(\tilde\partial^i\{\pi^*g,\lambda_0\}_{\omega^{-1}}\big)\,\big(s_A^*\{\pi^*f,p_i\}_{\omega^{-1}}
                                                   -
                                                   s_A^*\{\pi^*f,\pi^*A_i\}_{\omega^{-1}}\big)
  \nonumber \\[4pt]
& \hspace{5cm} =
            s_A^*\big(\tilde\partial^i\{\pi^*g,\lambda_0\}_{\omega^{-1}}\big)
            \, s_A^*\{\pi^*f,(\Phi_A)_i\}_{\omega^{-1}} \ .
                      \label{r1}\end{align}
We also need
\begin{align}
& s_A^*\{\pi^*f,\{\pi^*g,\pi^*h\}_{\omega^{-1}}\}_{\omega^{-1}} -
                s_A^*\{\pi^*f,\pi^*s_A^*\{\pi^*g,\pi^*h\}_{\omega^{-1}}\}_{\omega^{-1}}
                \notag \\[4pt]
& \hspace{5cm} =
                                 t\,s_A^*\{\pi^*f,\{\pi^*g,\pi^*h\}_{\underline{\theta}}\}_{\omega^{-1}}
                                 - t\,
                                 s_A^*\{\pi^*f,\pi^*s_A^*\{\pi^*g,\pi^*h\}_{\underline{\theta}}\}_{\omega^{-1}}
                                 \notag \\[4pt]
& \hspace{5cm} =
                                                  s_A^*\big(\tilde\partial^i\{\pi^*g,\pi^*h\}_{\omega^{-1}}\big)
                                                  \, \big(s_A^*\{\pi^*f,p_i\}_{\omega^{-1}}
                                                   -
                                                   s_A^*\{\pi^*f,\pi^*A_i\}_{\omega^{-1}}\big)
  \nonumber \\[4pt]
& \hspace{5cm} =
            s_A^*\big(\tilde\partial^i\{\pi^*g,\pi^*h\}_{\omega^{-1}}\big)
            \, s_A^*\{\pi^*f,(\Phi_A)_i\}_{\omega^{-1}} \ .
\label{r2}\end{align}
Now the combination of (\ref{r1}) and (\ref{r2}) implies
\begin{align}
& s_A^*\{\pi^*f,\{\pi^*g,\Phi_A\}_{\omega^{-1}}\}_{\omega^{-1}} -
                s_A^*\{\pi^*f,\pi^*s_A^*\{\pi^*g,\Phi_A\}_{\omega^{-1}}\}_{\omega^{-1}}
                \notag \\[4pt]
 & \hspace{5cm}               =
  s_A^*\big(\tilde\partial^i\{\pi^*g,\Phi_A\}_{\omega^{-1}}\big) \,
  s_A^*\{\pi^*f,(\Phi_A)_i\}_{\omega^{-1}} \ .
\label{r3}\end{align}

Next we calculate
\begin{align}
& s_A^*\{\{\pi^*g,\pi^*f\}_{\omega^{-1}},\Phi_A\}_{\omega^{-1}} -
                s_A^*\{\pi^*s_A^*\{\pi^*g,\pi^*f\}_{\omega^{-1}},\Phi_A\}_{\omega^{-1}}
                \notag \\[4pt]
& \hspace{1.5cm} =
                                 s_A^*\big(\tilde\partial^i\{\pi^*g,\pi^*f\}_{\omega^{-1}}\big)
                                 \,
                                 \big(-s_A^*\{\pi^*A_i,\lambda_0\}_{\omega^{-1}}
                                 - s_A^*\{p_i,\pi^*A\}_{\omega^{-1}} +
                                 s_A^*\{\pi^*A_i,\pi^*A\}_{\omega^{-1}}\big)
                                 \notag \\[4pt]
& \hspace{1.5cm} =
                                                  s_A^*\big(\tilde\partial^i\{\pi^*g,\pi^*f\}_{\omega^{-1}}\big)
                                                  \,
                                                  s_A^*\{(\Phi_A)_i,\Phi_A\}_{\omega^{-1}}
                                                  \ .
\label{r4}\end{align}
One may also check
\begin{align}
& s_A^*\{\{\pi^*f,(\Phi_A)_j\}_{\omega^{-1}},\Phi_A\}_{\omega^{-1}} -
  s_A^*\{\pi^*s_A^*\{\pi^*f,(\Phi_A)_j\}_{\omega^{-1}},\Phi_A\}_{\omega^{-1}}
  \notag \\[4pt]
& \hspace{5cm} =
                   s_A^*\big(\tilde\partial^i\{\pi^*f,(\Phi_A)_j\}_{\omega^{-1}}\big)
                   \, s_A^*\{(\Phi_A)_i,\Phi_A\}_{\omega^{-1}} \ ,
  \label{r5}\end{align}
and
\begin{align}
&
                s_A^*\{\{(\Phi_A)_k,(\Phi_A)_j\}_{\omega^{-1}},\Phi_A\}_{\omega^{-1}}
                -
                s_A^*\{\pi^*s_A^*\{(\Phi_A)_k,(\Phi_A)_j\}_{\omega^{-1}},\Phi_A\}_{\omega^{-1}}
                \notag \\[4pt]
& \hspace{5cm} =
                                 s_A^*\big(\tilde\partial^i\{(\Phi_A)_k,(\Phi_A)_j\}_{\omega^{-1}}\big)
                                 \,
                                 s_A^*\{(\Phi_A)_i,\Phi_A\}_{\omega^{-1}}
                                 \ .
  \label{r6}\end{align}

\newsection{Closure of almost Poisson gauge transformations}
\label{app:p1proof}

In this appendix we prove that the gauge transformations defined in
Proposition~\ref{p1} close the almost Poisson gauge algebra
\eqref{c2}. For this, we use \eqref{c20} to calculate the left-hand
side of \eqref{c2}, and after rearranging the terms we find
\begin{align}
&
                \delta^\theta_f\big(s_A^*\{\pi^*g,\Phi_A\}_{\omega^{-1}}+L_g^j\,s_A^*\{(\Phi_A)_j,\Phi_A\}_{\omega^{-1}}\big)-\delta^\theta_g\big(s_A^*\{\pi^*f,\Phi_A\}_{\omega^{-1}}+L_f^i\,s_A^*\{(\Phi_A)_i,\Phi_A\}_{\omega^{-1}}\big)
                \notag \\[4pt]
& = 
                -s_A^*\{\pi^*g+L_g^j\,(\Phi_A)_j,\pi^*s_A^*\{\pi^*f,\Phi_A\}_{\omega^{-1}}+L_f^i\,\pi^*s_A^*\{(\Phi_A)_i,\Phi_A\}_{\omega^{-1}}\}_{\omega^{-1}}\notag
  \\
& \hspace{0.5cm}
                                                                                                                                                        -s_A^*\big(\tilde\partial{}^k\{\pi^*f,\Phi_A\}_{\omega^{-1}}\big)\,\big(s_A^*\{\pi^*g,(\Phi_A)_k\}_{\omega^{-1}}+L_g^j\,s_A^*\{(\Phi_A)_j,(\Phi_A)_k\}_{\omega^{-1}}\big)
                                                                                                                                                        \notag
  \\
& \hspace{1cm}
       -L_f^i\,s_A^*\big(\tilde\partial{}^k\{(\Phi_A)_i,\Phi_A\}_{\omega^{-1}}\big)\,\big(s_A^*\{\pi^*g,(\Phi_A)_k\}_{\omega^{-1}}+L_g^j\,s_A^*\{(\Phi_A)_j,(\Phi_A)_k\}_{\omega^{-1}}\big)
       \notag \\
& \hspace{1.5cm}
                   +s_A^*\{\pi^*f+L_f^i\,(\Phi_A)_i,\pi^*s_A^*\{\pi^*g,\Phi_A\}_{\omega^{-1}}+L_g^j\,\pi^*s_A^*\{(\Phi_A)_j,\Phi_A\}_{\omega^{-1}}\}_{\omega^{-1}}\notag\\
& \hspace{2cm}
                                                                                                                                                                             +s_A^*\big(\tilde\partial{}^k\{\pi^*g,\Phi_A\}_{\omega^{-1}}\big)\,\big(s_A^*\{\pi^*f,(\Phi_A)_k\}_{\omega^{-1}}+L_f^i\,s_A^*\{(\Phi_A)_i,(\Phi_A)_k\}_{\omega^{-1}}\big)
                                                                                                                                                                             \notag
  \\
& \hspace{2.5cm}
       +L_g^j\,s_A^*\big(\tilde\partial{}^k\{(\Phi_A)_j,\Phi_A\}_{\omega^{-1}}\big)\,\big(s_A^*\{\pi^*f,(\Phi_A)_k\}_{\omega^{-1}}+L_f^i\,s_A^*\{(\Phi_A)_i,(\Phi_A)_k\}_{\omega^{-1}}\big)  \notag\\
& \hspace{3cm} -L_g^j\,s_A^*\{\pi^*s_A^*\{\pi^*f,(\Phi_A)_j\}_{\omega^{-1}} +
                                                                                                                                                                                                        L_f^i\,\pi^*s_A^*\{(\Phi_A)_i,(\Phi_A)_j\}_{\omega^{-1}},\Phi_A\}_{\omega^{-1}}
                                                                                                                                                                                                        \notag
  \\
& \hspace{3.5cm} +L_f^i\,s_A^*\{\pi^*s_A^*\{\pi^*g,(\Phi_A)_i\}_{\omega^{-1}} +
       L_g^j\,\pi^*s_A^*\{(\Phi_A)_j,(\Phi_A)_i\}_{\omega^{-1}},\Phi_A\}_{\omega^{-1}}
       \notag \\
& \hspace{4cm} +\big(\delta_f^\theta L_g^i - \delta_g^\theta
                   L_f^i\big)\,s_A^*\{(\Phi_A)_i,\Phi_A\}_{\omega^{-1}} \ .
\label{c6}\end{align}
By explicit calculation and using the formulas \eqref{r3}, \eqref{r5}
and \eqref{r6} from Appendix~\ref{app:useful} one may check
\begin{align*}
& s_A^*\{\pi^*f + L_f^i\,(\Phi_A)_i,\{\pi^*g +
                 L_g^j\,(\Phi_A)_j,\Phi_A\}_{\omega^{-1}}\}_{\omega^{-1}} \\
& \hspace{6cm} - s_A^*\{\pi^*f + L_f^i\,(\Phi_A)_i,\pi^*s_A^*\{\pi^*g +
                 L_g^j\,(\Phi_A)_j,\Phi_A\}_{\omega^{-1}}\}_{\omega^{-1}} \\[4pt]
& \hspace{1cm} =
                                                                                    \big(s_A^*(\tilde\partial{}^k\{\pi^*g,\Phi_A\}_{\omega^{-1}})
                                                                                    +
                                                                                    L_g^j\,s_A^*\big(\tilde\partial{}^k\{(\Phi_A)_j,\Phi_A\}_{\omega^{-1}}
                                                                                    +
                                                                                    s_A^*\{L_g^k,\Phi_A\}_{\omega^{-1}}
                                                                                    \big)
  \\
  & \hspace{5cm} \times \
    \big(s_A^*\{\pi^*f,(\Phi_A)_k\}_{\omega^{-1}}+L_f^i\,s_A^*\{(\Phi_A)_i,(\Phi_A)_k\}_{\omega^{-1}}\big)
    \ .
\end{align*}
Using this expression we rewrite (\ref{c6}) as
\begin{align*}
&
                 s_A^*\{\pi^*f+L_f^i\,(\Phi_A)_i,\{\pi^*g+L_g^j\,(\Phi_A)_j,\Phi_A\}_{\omega^{-1}}\}_{\omega^{-1}}
  \\
& \hspace{6cm} 
  -
  s_A^*\{\pi^*g+L_g^i\,(\Phi_A)_i,\{\pi^*f+L_f^j\,(\Phi_A)_j,\Phi_A\}_{\omega^{-1}}\}_{\omega^{-1}}
  \\
& \hspace{1cm} + s_A^*\{L_f^i,\Phi_A\}_{\omega^{-1}} \,
       s_A^*\{\pi^*g,(\Phi_A)_i\}_{\omega^{-1}} -
       s_A^*\{L_g^i,\Phi_A\}_{\omega^{-1}} \,
       s_A^*\{\pi^*f,(\Phi_A)_i\}_{\omega^{-1}} \\
& \hspace{1cm} +
       L_f^i\,s_A^*\{\pi^*s_A^*\{\pi^*g,(\Phi_A)_i\}_{\omega^{-1}},\Phi_A\}_{\omega^{-1}}
       -
       L_g^i\,s_A^*\{\pi^*s_A^*\{\pi^*f,(\Phi_A)_i\}_{\omega^{-1}},\Phi_A\}_{\omega^{-1}}
  \\
& \hspace{1cm} + 2\, \big( L_f^i\,s_A^*\{L_g^j,\Phi_A\}_{\omega^{-1}}
       - L_g^i\,s_A^*\{L_f^j,\Phi_A\}_{\omega^{-1}} \big) \,
       s_A^*\{(\Phi_A)_j,(\Phi_A)_i\}_{\omega^{-1}} \\
& \hspace{1cm} + 2\,
                                                         L_f^i\,L_g^j\,s_A^*\{\pi^*s_A^*\{(\Phi_A)_j,(\Phi_A)_i\}_{\omega^{-1}},\Phi_A\}_{\omega^{-1}}
                                                         +\big(\delta_f^\theta L_g^i - \delta_g^\theta
                   L_f^i\big)\,s_A^*\{(\Phi_A)_i,\Phi_A\}_{\omega^{-1}}
       \ .
\end{align*}
Using the Jacobi identity in the first two lines and simplifying the
remaining lines, we rewrite this expression as
\begin{align*}
& s_A^*\{\{\pi^*f+L_f^i\,(\Phi_A)_i,\pi^*g+L_g^j\,(\Phi_A)_j\}_{\omega^{-1}},\Phi_A\}_{\omega^{-1}} +\big(\delta_f^\theta L_g^i - \delta_g^\theta
                   L_f^i\big)\,s_A^*\{(\Phi_A)_i,\Phi_A\}_{\omega^{-1}}
  \\
& \hspace{2cm} - s_A^*\{L_g^j\,\pi^*s_A^*\{\pi^*f,(\Phi_A)_j\}_{\omega^{-1}} +
       L_f^i\,\pi^*s_A^*\{(\Phi_A)_i,\pi^*g\}_{\omega^{-1}},\Phi_A\}_{\omega^{-1}} \\
& \hspace{4cm} +
       2\,s_A^*\{L_f^i\,L_g^j\,\pi^*s_A^*\{(\Phi_A)_i,(\Phi_A)_j\}_{\omega^{-1}},\Phi_A\}_{\omega^{-1}}
       \ .
\end{align*}
Using $\Phi_A=0$ on the constraint locus ${\sf im}(s_A)$, we rewrite the first line so
that this expression becomes
\begin{align*}
& s_A^*\{\{\pi^*f,\pi^*g\}_{\omega^{-1}} {+}
  L_g^j\,\{\pi^*f,(\Phi_A)_j\}_{\omega^{-1}} {+}
                 L_f^i\,\{(\Phi_A)_i,\pi^*g\}_{\omega^{-1}} {+}
  L_f^i\,L_g^j\,\{(\Phi_A)_i,(\Phi_A)_j\}_{\omega^{-1}},\Phi_A\}_{\omega^{-1}}
  \\
& \quad +\big(s_A^*\{L_f^k,\pi^*g\}_{\omega^{-1}} +
       s_A^*\{\pi^*f,L_g^k\}_{\omega^{-1}} +
       L_f^i\,s_A^*\{(\Phi_A)_i,L_g^k\}_{\omega^{-1}} \\
& \hspace{2cm} +
       L_g^j\,s_A^*\{L_f^k,(\Phi_A)_j\}_{\omega^{-1}}\big) \,
       s_A^*\{(\Phi_A)_k,\Phi_A\}_{\omega^{-1}} +\big(\delta_f^\theta L_g^i - \delta_g^\theta
                   L_f^i\big)\,s_A^*\{(\Phi_A)_i,\Phi_A\}_{\omega^{-1}}
  \\
& \hspace{4cm} - s_A^*\{L_g^j\,\pi^*s_A^*\{\pi^*f,(\Phi_A)_j\}_{\omega^{-1}} +
       L_f^i\,\pi^*s_A^*\{(\Phi_A)_i,\pi^*g\}_{\omega^{-1}},\Phi_A\}_{\omega^{-1}} \\
& \hspace{6cm} +
       2\,s_A^*\{L_f^i\,L_g^j\,\pi^*s_A^*\{(\Phi_A)_i,(\Phi_A)_j\}_{\omega^{-1}},\Phi_A\}_{\omega^{-1}}
       \ .
\end{align*}
Applying again the formulas \eqref{r4}--\eqref{r6} from
Appendix~\ref{app:useful} in the first line of this
expression, after simplification we end up with
\begin{align*}
& s_A^*\{\pi^*s_A^*\{\pi^*f,\pi^*g\}_{\omega^{-1}} -
  L_f^i\,L_g^j\,\pi^*s_A^*\{(\Phi_A)_i,(\Phi_A)_j\}_{\omega^{-1}},\Phi_A\}_{\omega^{-1}}
  \\
& \quad + \Big(  \delta_f^\theta L_g^k - \delta_g^\theta
                   L_f^k +
       s_A^*\big(\tilde\partial^k\{\pi^*f,\pi^*g\}_{\omega^{-1}}\big) +
       L_g^j\,s_A^*\big(\tilde\partial^k\{\pi^*f,(\Phi_A)_j\}_{\omega^{-1}}\big)
  \\ 
& \hspace{1cm} +
       L_f^i\,s_A^*\big(\tilde\partial^k\{(\Phi_A)_i,\pi^*g\}_{\omega^{-1}}\big)
        +
       L_f^i\,L_g^j\,s_A^*\big(\tilde\partial^k\{(\Phi_A)_i,(\Phi_A)_j\}_{\omega^{-1}}\big)
       + s_A^*\{L_f^k,\pi^*g\}_{\omega^{-1}} \\
& \hspace{1.5cm} +
       s_A^*\{\pi^*f,L_g^k\}_{\omega^{-1}} + L_f^i\,s_A^*\{(\Phi_A)_i,L_g^k\}_{\omega^{-1}} +
                                           L_g^j\,s_A^*\{L_f^k,(\Phi_A)_j\}_{\omega^{-1}}
                                           \Big) \,
                                           s_A^*\{(\Phi_A)_k,\Phi_A\}_{\omega^{-1}}
                                           \ .
\end{align*}
By \eqref{l3} and \eqref{c4} this is equal to
$s_A^*\{t\,[\![f,g]\!]_\theta(A) + L^k_{t\,
  [\![f,g]\!]_\theta(A)}\,(\Phi_A)_k,\Phi_A\}_{\omega^{-1}}$, 
which proves the closure formula (\ref{c2}).


\begin{thebibliography}{99}

\bibitem{Alekseev}
A.~Yu.~Alekseev and A.~Z.~Malkin,
``Symplectic structures associated to Lie-Poisson groups,''
Commun. Math. Phys. \textbf{162} (1994) 147--174
[arXiv:hep-th/9303038].

\bibitem{Gutt2008}
M.~Ammar, V.~Chloup and S.~Gutt,
``Universal star products,''
Lett. Math. Phys. {\bf 84} (2008) 199--215
[arXiv:0804.1300 [math.SG]].

\bibitem{Aschieri:2015roa}
P.~Aschieri and R.~J.~Szabo,
``Triproducts, nonassociative star products and geometry of $R$-flux string compactifications,''
J. Phys. Conf. Ser. \textbf{634} (2015) 012004
[arXiv:1504.03915 [hep-th]].

\bibitem{Aschieri2002}
P.~Aschieri, I.~Bakovi\'c, B.~Jur\v{c}o and P.~Schupp,
``Noncommutative gerbes and deformation quantization,''
J. Geom. Phys. {\bf 60} (2010) 1754--1761
[arXiv:hep-th/0206101].

\bibitem{Beggs:2003ne}
E.~J.~Beggs and S.~Majid,
``Semi-classical differential structures,''
Pacific J. Math. {\bf 224} (2006) 1--44
[arXiv:math.QA/0306273].

\bibitem{Behr:2003qc}
W.~Behr and A.~Sykora,
``Construction of gauge theories on curved noncommutative space-time,''
Nucl. Phys. B \textbf{698} (2004) 473--502
[arXiv:hep-th/0309145].

\bibitem{Berends:1984rq}
F.~A.~Berends, G.~J.~H.~Burgers and H.~van Dam,
``On the theoretical problems in constructing interactions involving higher spin massless particles,''
Nucl. Phys. B \textbf{260} (1985) 295--322.

\bibitem{Bergshoeff:2000jn}
E.~Bergshoeff, D.~S.~Berman, J.~P.~van der Schaar and P.~Sundell,
``A noncommutative M-theory five-brane,''
Nucl. Phys. B \textbf{590} (2000) 173--197
[arXiv:hep-th/0005026].

\bibitem{Bieliavsky2006}
P.~Bieliavsky, M.~Cahen, S.~Gutt, J.~Rawnsley and L.~Schwachh\"ofer,
``Symplectic connections,''
Int. J. Geom. Meth. Mod. Phys. {\bf 03} (2006) 375--420
[arXiv:math.SG/0511194].

\bibitem{BBKT}
R.~Blumenhagen, M.~Brinkmann, V.~G.~Kupriyanov and M.~Traube,
``On the uniqueness of $L_\infty$-bootstrap: Quasi-isomorphisms are Seiberg-Witten maps,''
J. Math. Phys. \textbf{59} (2018) 123505
[arXiv:1806.10314 [hep-th]].

\bibitem{BBKL}
  R.~Blumenhagen, I.~Brunner, V.~G.~Kupriyanov and D.~L\"ust,
  ``Bootstrapping noncommutative gauge theories from $L_\infty$-algebras,''
  JHEP {\bf 05} (2018) 097
  [arXiv:1803.00732 [hep-th]].

\bibitem{Broka}
D.~Broka and P.~Xu,
``Symplectic realizations of holomorphic Poisson manifolds,''
arXiv:1512.08847~[math.DG].

\bibitem{Bunk:2018qvk}
S.~Bunk, L.~M\"uller and R.~J.~Szabo,
``Geometry and 2-Hilbert space for nonassociative magnetic translations,''
Lett. Math. Phys. \textbf{109} (2019) 1827--1866
[arXiv:1804.08953 [hep-th]].

\bibitem{Bunk:2020rju}
S.~Bunk, L.~M\"uller and R.~J.~Szabo,
``Smooth 2-group extensions and symmetries of bundle gerbes,''
Commun. Math. Phys. \textbf{384} (2021) 1829--1911
[arXiv:2004.13395 [math.DG]].

\bibitem{BOW20}
H.~Bursztyn, I.~Ortiz and S.~Waldmann,
``Morita equivalence of formal Poisson structures,''
Int. Math. Res. Not. rnab096 (2021)
[arXiv:2006.10240 [math.SG]].

\bibitem{Cattaneo:2000iw}
A.~S.~Cattaneo and G.~Felder,
``Poisson sigma-models and symplectic groupoids,''
Prog. Math. \textbf{198} (2001) 61--93
[arXiv:math.SG/0003023].

\bibitem{Cattaneo:2005zz}
A.~S.~Cattaneo and G.~Felder,
``Relative formality theorem and quantization of coisotropic
submanifolds,''
Adv. Math. {\bf 108} (2007) 521--548
[arXiv:math.QA/0501540].

\bibitem{Cattaneo}
A.~S.~Cattaneo and P.~Xu,
``Integration of twisted Poisson structures,''
J. Geom. Phys. {\bf 49} (2004) 187--196 
[arXiv:math.SG/0302268].

\bibitem{CattaneoDherinFelder}
  A.~S.~Cattaneo, B. Dherin and G. Felder,
  ``Formal symplectic groupoid,''
  Commun. Math. Phys. {\bf 253} (2005) 645--674
  [arXiv:math.SG/0312380].
  
\bibitem{Chu:1997ik}
C.~S.~Chu and P.~M.~Ho,
``Poisson algebra of differential forms,''
Int. J. Mod. Phys. A \textbf{12} (1997) 5573--5587
[arXiv:q-alg/9612031].

\bibitem{Chu:1999wz}
C.~S.~Chu, P.~M.~Ho and Y.~C.~Kao,
``Worldvolume uncertainty relations for D-branes,''
Phys. Rev. D \textbf{60} (1999) 126003
[arXiv:hep-th/9904133].

\bibitem{Cornalba:2001sm}
L.~Cornalba and R.~Schiappa,
``Nonassociative star product deformations for D-brane worldvolumes in curved backgrounds,''
Commun. Math. Phys. \textbf{225} (2002) 33--66
[arXiv:hep-th/0101219].

\bibitem{Crainic}
M.~Crainic and I.~M\u{a}rcu\c{t},
``On the existence of symplectic realizations,''
J. Sympl. Geom. {\bf 9} (2011) 435--444
[arXiv:1009.2058~[math.DG]].

\bibitem{Ciric:2020eab}
M.~Dimitrijevi\'c~\'Ciri\'c, G.~Giotopoulos, V.~Radovanovi\'c and R.~J.~Szabo,
``Homotopy Lie algebras of gravity and their braided deformations,''
Proc. Sci. \textbf{376} (2020) 198
[arXiv:2005.00454 [hep-th]].

\bibitem{Dolgushev} V.~A. Dolgushev, S.~L. Lyakhovich and A.~A. Sharapov, ``Wick type deformation quantization of Fedosov manifolds,'' Nucl. Phys. B \textbf{606} (2001) 647--672 [arXiv:hep -th/0101032].

\bibitem{Meljanac}
N.~Durov, S.~Meljanac, A.~Samsarov and Z.~Skoda,
``A universal formula for representing Lie algebra generators as formal power series with coefficients in the Weyl algebra,''
J. Algebra \textbf{309} (2007) 318--359
[arXiv:math.RT/0604096].

\bibitem{Fulp:2002kk}
R.~Fulp, T.~Lada and J.~Stasheff,
``sh-Lie algebras induced by gauge transformations,''
Commun. Math. Phys. \textbf{231} (2002) 25--43.

\bibitem{Gomes:2009tk}
M.~Gomes and V.~G.~Kupriyanov,
``Position-dependent noncommutativity in quantum mechanics,''
Phys. Rev. D \textbf{79} (2009) 125011
[arXiv:0902.3252 [math-ph]].

\bibitem{Gutt}
  S. Gutt,
  ``An explicit star-product on the cotangent bundle of a Lie
  group,''
  Lett. Math. Phys. \textbf{7} (1983) 249--258.
  
\bibitem{Hawkins:2002rf}
E.~Hawkins,
``Noncommutative rigidity,''
Commun. Math. Phys. {\bf 246} (2004) 211--235
[arXiv:math.QA/0211203].

\bibitem{Ho:2001qk}
P.~M.~Ho,
``Making nonassociative algebra associative,''
JHEP \textbf{11} (2001) 026
[arXiv:hep-th/0103024].

\bibitem{Ho:2001fi}
P.~M.~Ho and S.~P.~Miao,
``Noncommutative differential calculus for D-brane in non-constant $B$-field background,''
Phys. Rev. D \textbf{64} (2001) 126002
[arXiv:hep-th/0105191].

\bibitem{Ho:2000fv}
P.~M.~Ho and Y.~T.~Yeh,
``Noncommutative D-brane in non-constant NS--NS $B$-field background,''
Phys. Rev. Lett. \textbf{85} (2000) 5523--5526
[arXiv:hep-th/0005159].

\bibitem{Hohm:2017pnh}
O.~Hohm and B.~Zwiebach,
``$L_{\infty}$-algebras and field theory,''
Fortsch. Phys. \textbf{65} (2017) 1700014
[arXiv:1701.08824 [hep-th]].

\bibitem{Jurco:2018sby}
B.~Jur\v{c}o, L.~Raspollini, C.~S\"amann and M.~Wolf,
``$L_\infty$-algebras of classical field theories and the Batalin-Vilkovisky formalism,''
Fortsch. Phys. \textbf{67} (2019) 1900025
[arXiv:1809.09899 [hep-th]].

\bibitem{Karasev}
  M.~V. Karasev,
  ``Analogues of objects of the theory of
  Lie groups for nonlinear Poisson brackets,''
  Math. USSR-Izv. {\bf 28} (1987) 497--527.
  
  \bibitem{Voronov} H. M. Khudaverdian and Th. Voronov, ``Higher Poisson brackets and differential forms,'' AIP Conf. Proc. {\bf 1079}  (2008) 203--215 [arXiv:0808.3406 [math-ph]], 
  
  \bibitem{Klimcik} C. Klim\v{c}\'{\i}k and T. Strobl, ``WZW--Poisson manifolds,'' J. Geom. Phys. {\bf 43} (2002) 341--344 [arXiv:math.SG/0104189] . 

\bibitem{Kontsevich}
  M.~Kontsevich, ``Deformation quantization of Poisson manifolds,''
  Lett.\ Math.\ Phys.\ \textbf{66} (2003) 157--216
  [arXiv:q-alg/9709040].

\bibitem{Kup14}
  V.~G.~Kupriyanov,
  ``Quantum mechanics with coordinate dependent noncommutativity,''
  J.\ Math.\ Phys.\  {\bf 54} (2013) 112105
  [arXiv:1204.4823 [math-ph]].
  
\bibitem{Kupriyanov:2016hsm}
V.~G.~Kupriyanov,
``Weak associativity and deformation quantization,''
Nucl. Phys. B \textbf{910} (2016) 240--258
[arXiv:1606.01409 [hep-th]].

\bibitem{Kupriyanov:2018yaj}
V.~G.~Kupriyanov,
``Recurrence relations for symplectic realization of (quasi-)Poisson structures,''
J. Phys. A \textbf{52} (2019) 225204
[arXiv:1805.12040 [math-ph]].

\bibitem{Kupriyanov:2019ezf}
V.~G.~Kupriyanov,
``$L_\infty$-bootstrap approach to noncommutative gauge theories,''
Fortsch. Phys. \textbf{67} (2019) 1910010
[arXiv:1903.02867 [hep-th]].

\bibitem{Kupriyanov:2019cug}
V.~G.~Kupriyanov,
``Noncommutative deformation of Chern-Simons theory,''
Eur. Phys. J. C \textbf{80} (2020) 42
[arXiv:1905.08753 [hep-th]].

\bibitem{KS17}
  V.~G.~Kupriyanov and R.~J.~Szabo,
  ``$G_{2}$-structures and quantization of non-geometric M-theory backgrounds,''
  JHEP {\bf 02} (2017) 099
  [arXiv:1701.02574 [hep-th]].
  
\bibitem{KS18}
V.~G.~Kupriyanov and R.~J.~Szabo,
``Symplectic realization of electric charge in fields of monopole distributions,''
Phys. Rev. D \textbf{98} (2018) 045005
[arXiv:1803.00405 [hep-th]].

\bibitem{KV}
V.~G.~Kupriyanov and D.~V.~Vassilevich,
  ``Star products made (somewhat) easier,''
  Eur.\ Phys.\ J.\ C {\bf 58} (2008) 627--637
  [arXiv:0806.4615 [hep-th]].

\bibitem{Kupriyanov:2015dda}
V.~G.~Kupriyanov and D.~V.~Vassilevich,
``Nonassociative Weyl star-products,''
JHEP \textbf{09} (2015) 103
[arXiv:1506.02329 [hep-th]].

\bibitem{Kupriyanov:2015uxa}
V.~G.~Kupriyanov and P.~Vitale,
``Noncommutative $\mathbb{R}^d $ via closed star product,''
JHEP \textbf{08} (2015) 024
[arXiv:1502.06544 [hep-th]].

\bibitem{KKV}
V.~G.~Kupriyanov, M.~Kurkov and P.~Vitale,
``$\kappa$-Minkowski deformation of $U(1)$ gauge theory,''
JHEP \textbf{01} (2021) 102
[arXiv:2010.09863 [hep-th]].

\bibitem{Lada:1992wc}
T.~Lada and J.~Stasheff,
``Introduction to sh-Lie algebras for physicists,''
Int. J. Theor. Phys. \textbf{32} (1993) 1087--1104
[arXiv:hep-th/9209099].

\bibitem{Lyakhovich:2004xd}
S.~L.~Lyakhovich and A.~A.~Sharapov,
``BRST theory without Hamiltonian and Lagrangian,''
JHEP \textbf{03} (2005) 011
[arXiv:hep-th/0411247].

\bibitem{Lyakhovich} S.~L. Lyakhovich, M. Peddie and A.~A. Sharapov, ``Lifting a weak Poisson bracket to the algebra of forms,'' J. Geom. Phys. {\bf 116} (2017) 330--344 [arXiv:1511.05731 [math-ph]].

\bibitem{McCurdy:2009xz}
S.~McCurdy and B.~Zumino,
``Covariant star product for exterior differential forms on symplectic manifolds,''
AIP Conf. Proc. \textbf{1200} (2010) 204--214
[arXiv:0910.0459 [hep-th]].

\bibitem{Mylonas2012}
D.~Mylonas, P.~Schupp and R.~J.~Szabo,
``Membrane sigma-models and quantization of non-geometric flux backgrounds,''
JHEP \textbf{09} (2012) 012
[arXiv:1207.0926 [hep-th]].

\bibitem{MSS1}
D.~Mylonas, P.~Schupp and R.~J.~Szabo,
``Non-geometric fluxes, quasi-Hopf twist deformations and nonassociative quantum mechanics,''
J. Math. Phys. \textbf{55} (2014) 122301
[arXiv:1312.1621 [hep-th]].

\bibitem{Oh:2003ay}
Y.~G.~Oh and J.~S.~Park,
``Deformations of coisotropic submanifolds and strongly homotopy Lie algebroids,''
Invent. Math. \textbf{161} (2005) 287--360
[arXiv:math.SG/0305292].

\bibitem{Park} J.-S. Park, ``Topological open $p$-branes,'' in: {\sl Symplectic geometry and mirror symmetry}, eds.~K.~Fukaya, Y.~G.~Oh, K.~Ono and G.~Tian (World Scientific Publishing, River Edge, NJ, 2001) 311--384 [arXiv:hep-th/0012141].

\bibitem{Saemann:2012ex}
C.~S\"amann and R.~J.~Szabo,
``Groupoid quantization of loop spaces,''
Proc. Sci. {\bf 155} (2012) 046
[arXiv:1203.5921 [hep-th]].

\bibitem{Saemann:2012ab}
C.~S\"amann and R.~J.~Szabo,
``Groupoids, loop spaces and quantization of 2-plectic manifolds,''
Rev. Math. Phys. \textbf{25} (2013) 1330005
[arXiv:1211.0395 [hep-th]].

\bibitem{Sardan}
  G.~Sardanashvily,
  {\em Fiber Bundles, Jet Manifolds and Lagrangian Theory} (Lambert
  Academic Publishing, 2013) [arXiv:0908.1886 [math-ph]].
  
\bibitem{Schomerus:1999ug}
V.~Schomerus,
``D-branes and deformation quantization,''
JHEP \textbf{06} (1999) 030
[arXiv:hep-th/9903205].
  
\bibitem{Seiberg:1999vs}
N.~Seiberg and E.~Witten,
``String theory and noncommutative geometry,''
JHEP \textbf{09} (1999) 032
[arXiv:hep-th/9908142].
  
\bibitem{Severa2003}
P.~\v{S}evera,
``Quantization of Poisson families and of twisted Poisson structures,''
Lett. Math. Phys. {\bf 63} (2003) 105--113
[arXiv:math.QA/0205294].

\bibitem{Stasheff1963}
J. Stasheff, 
``Homotopy associativity of H-spaces I,II,"
Trans. Amer. Math. Soc. {\bf 108} (1963) 275--312.

\bibitem{Szabo:2006wx}
R.~J.~Szabo,
``Symmetry, gravity and noncommutativity,''
Class. Quant. Grav. \textbf{23} (2006) R199--R242
[arXiv:hep-th/0606233].

\bibitem{Szabo:2019gvu}
R.~J.~Szabo,
``Quantization of magnetic Poisson structures,''
Fortsch. Phys. \textbf{67} (2019) 1910022
[arXiv:1903.02845 [hep-th]].

\bibitem{Szabo:2019hhg}
R.~J.~Szabo,
``An introduction to nonassociative physics,''
Proc. Sci. \textbf{347} (2019) 100
[arXiv:1903.05673 [hep-th]].

\bibitem{Voronov2003}
Th. Voronov, 
``Higher derived brackets and homotopy algebras,"
J. Pure Appl. Alg. {\bf 202} (2005) 133--153
[arXiv:math.QA/0304038].

\bibitem{Weinstein-groupoid}
A. Weinstein, 
``Symplectic groupoids and Poisson manifolds,'' 
Bull. Amer. Math. Soc. {\bf 16} (1987) 101--104.

\bibitem{Weinstein-local}
A. Weinstein, 
``The local structure of Poisson manifolds,'' 
J. Diff. Geom. {\bf 18} (1983) 523--557.

\end{thebibliography}
\end{document}